\newtheorem{theorem}{Theorem}[section]
\newtheorem{lemma}[theorem]{Lemma}
\newtheorem{corollary}[theorem]{Corollary}
\theoremstyle{definition}
\newtheorem{definition}[theorem]{Definition}
\newtheorem{proposition}[theorem]{Proposition}
\newtheorem{example}[theorem]{Example}
\theoremstyle{remark}
\newtheorem{remark}[theorem]{Remark}
\numberwithin{equation}{section}
\newcommand{\HI}{\mathfrak{H}}
\newcommand{\C}{\mathbb{C}}
\newcommand{\B}{\mathcal{B}}
\newcommand{\oz}{\overline{z}}
\newcommand{\qu}{\mathbf{q}}
\newcommand{\quat}{\mathbb H}
\newcommand{\mc}{\mathcal}
\newcommand{\be}{\begin{equation}}
\newcommand{\en}{\end{equation}}
\newcommand{\D}{{\mc D}}
\newcommand{\N}{\mathbb N}
\newcommand{\bedefin}{\begin{defi}}
	\newcommand{\findefi}{\end{defi} \medskip}
\newcommand{\betheo}{\begin{theorem}$\!\!${\bf \,\,\,}}
	\newcommand{\entheo}{\end{theorem}}
\newcommand{\enth}{\end{theorem}}
\newcommand{\becor}{\begin{cor}$\!\!${\bf .}}
	\newcommand{\encor}{\end{cor}}
\newcommand{\belem}{\begin{lem}$\!\!${\bf .}}
	\newcommand{\enlem}{\end{lem}}
\newcommand{\bea}{\begin{eqnarray}}
\newcommand{\ena}{\end{eqnarray}}
\newcommand{\beano}{\begin{eqnarray*}}
	\newcommand{\enano}{\end{eqnarray*}}
\newcommand{\bee}{\begin{enumerate}}
	\newcommand{\ene}{\end{enumerate}}
\newcommand{\bei}{\begin{itemize}}
	\newcommand{\eni}{\end{itemize}}
\newcommand{\betab}{\begin{tabular}}
	\newcommand{\entab}{\end{tabular}}
\newcommand{\bd}{\begin{displaymath}}
\newcommand{\Iop}{{\mathbb{I}_{V_{\mathbb{H}}^{R}}}}
\newcommand{\bsigma}{\mbox{\boldmath $\sigma$}}
\newcommand{\bfrakq}{\mbox{\boldmath $\mathfrak q$}}
\newcommand{\bfrakQ}{\mbox{\boldmath $\mathfrak Q$}}
\newcommand{\bfrakx}{\mbox{\boldmath $\mathfrak x$}}
\newcommand{\bfraky}{\mbox{\boldmath $\mathfrak y$}}
\newcommand{\bfraka}{\mbox{\boldmath $\mathfrak a$}}
\newcommand{\bfrakb}{\mbox{\boldmath $\mathfrak b$}}
\newcommand{\bfrake}{\mbox{\boldmath $\mathfrak e$}}
\newcommand{\bfrakp}{\mbox{\boldmath $\mathfrak p$}}
\newcommand{\bk}{\mathbf k}
\newcommand{\bq}{\mathbf q}
\newcommand{\bi}{\mathbf i}
\newcommand{\bj}{\mathbf j}
\newcommand{\di}{\text{diag}}
\begin{document}
\title[Deficiency Indices]{ Deficiency Indices of Some Classes of Unbounded $\quat$-Operators}
\author{B. Muraleetharan$^\dagger$, K. Thirulogasanthar$^\ddagger$}
\address{$^{\dagger}$ Department of Mathematics and Statistics, University of Jaffna, Thirunelveli, Jaffna, Sri Lanka. }
\address{$^{\ddagger}$ Department of Computer Science and Software Engineering, Concordia University, 1455 de Maisonneuve Blvd. West, Montreal, Quebec, H3G 1M8, Canada.}
\email{bbmuraleetharan@jfn.ac.lk, santhar@gmail.com}
\subjclass{Primary 47B32, 47S10}
\date{\today}
\begin{abstract}
In this paper we define the deficiency indices of a closed symmetric right $\quat$-linear operator and formulate a general theory of deficiency indices in a right quaternionic Hilbert space. This study provides a necessary and sufficient condition in terms of deficiency indices and in terms of S-spectrum, parallel to their complex counterparts, for a symmetric right $\quat$-linear operators to be self-adjoint.
\end{abstract}
\keywords{Quaternions, Quaternionic Hilbert spaces, symmetric operator, Deficiency index, S-spectrum}
\thanks{One of the auothors, KT would like to thank The Fonds de recherche du Québec - Nature et technologies (FRQNT) for partial financial support}
\maketitle
\pagestyle{myheadings}
\section{Introduction}
Most of the operators that arise naturally in science are not bounded. They occur in numerous applications, remarkably in the theory of differential equations and in quantum mechanics.
Particularly closed operators and closable operators are most important classes of unbounded linear operators which are large enough to cover all interesting operators occurring in applications. However, the notion of being closed alone is too general. The concept of symmetric and self-adjoint operators play important role in many applications.
In particular, according to the Dirac-von Neumann formalism of complex quantum mechanics, the quantum mechanical observables such as position, momentum and spin are represented by self-adjoint unbounded operators on a complex Hilbert space \cite{Ali, Gaz,Schm}.  In analogy with complex quantum mechanics, states of quaternionic quantum mechanics are described by vectors of a separable quaternionic Hilbert space and observables in quaternionic quantum mechanics are represented by quaternion linear and self-adjoint operators \cite{Ad}.

The following question arises in several contexts: if an operator $A$ on a Hilbert space is symmetric, when does it have self-adjoint extensions? In the complex case, an answer is provided by the Cayley transform of a self-adjoint operator and the deficiency indices.
The theory of deficiency indices of closed symmetry operator in a complex Hilbert space is well-known and well-studied \cite{Schm}. Deficiency indices measure how far a symmetric operator is from being self-adjoint. Determining whether or not a symmetric operator is self-adjoint is important in physical applications because different self-adjoint extensions of the same operator yield different descriptions of the same system under consideration \cite{Schm}. The deficiency indices $(n_{+}(A),n_{-}(A))$ of a closed symmetric operator $A$ in a complex Hilbert space $\mathfrak{H}$ are defined by $$n_{\pm}(A)=\dim\ker(A^{\dagger}\mp i\mathbb{I}_{\mathfrak{H}});$$ where $A^{\dagger}$ is the adjoint of $A$ and $\mathbb{I}_{\mathfrak{H}}$ is the identity operator on $\mathfrak{H}$. Initially the distinction between symmetry and self-adjointness of an operator was poorly understood. However, von Neumann resolved this issue by introducing the deficiency spaces and deficiency indices. If the closure of A is self-adjoint, we say that A is essentially self-adjoint; and this corresponds to deficiency indices (0, 0). In fact, deficiency indices play a crucial role in the theory of self-adjoint extensions of symmetric operators in complex Hilbert spaces \cite{Schm}.

Birkhoff and von Neumann's theory of general quantum mechanics can be realised on Hilbert spaces over the fields $ \mathbb{R}$, the set of all real numbers, $\mathbb{C}$ , the set of all complex numbers, and $\quat$ , the set of all quaternions only \cite{Ad}. The fields $\mathbb{R}$ and $\mathbb{C}$ are associative and commutative and the theory of functional analysis is a well formed theory over real and complex Hilbert spaces. But the quaternions form a non-commutative associative algebra and this feature highly restricted mathematicians to work out a well-formed theory of functional analysis on quaternionic Hilbert spaces. Moreover, because of the non-commutativity of quaternions, quaternionic Hilbert spaces are formed by right or left multiplication of vectors by quaternionic scalars. In most cases, the two different conventions yield isomorphic versions of the theory \cite{Ad}.

To the best of our knowledge, a general theory of deficiency indices or self-adjoint extensions on quaternionic Hilbert spaces is not formulated yet. In this paper, we shall construct a general theory of deficiency indices for a closed symmetric right $\quat$-linear operator on a right quaternionic Hilbert space. We shall also characterize self-adjointness of an operator in terms of the so-called S-spectrum of the operator. In fact we shall provide necessary and sufficient condition, in terms of deficiency indices (Corollary \ref{se-ad-1} ) and in terms of the S-spectrum (Theorem \ref{S-spec}), for a closed symmetric $\quat$-linear operator to be self-adjoint.
\section{Mathematical preliminaries}
In order to make the paper self-contained, we recall  few facts about quaternions which may not be well-known. In particular, we revisit the $2\times 2$ complex matrix representations of quaternions and quaternionic Hilbert spaces. For details we refer the reader to \cite{Ad,ghimorper,Vis,Zhang}.
\subsection{Quaternions}
Let $\quat$ denote the field of all quaternions and $\quat^*$ the group (under quaternionic multiplication) of all invertible quaternions. A general quaternion can be written as
$$\bfrakq = q_0 + q_1 \bi + q_2 \bj + q_3 \bk, \qquad q_0 , q_1, q_2, q_3 \in \mathbb R, $$
where $\bi,\bj,\bk$ are the three quaternionic imaginary units, satisfying
$\bi^2 = \bj^2 = \bk^2 = -1$ and $\bi\bj = \bk = -\bj\bi,  \; \bj\bk = \bi = -\bk\bj,
\; \bk\bi = \bj = - \bi\bk$. The quaternionic conjugate of $\bfrakq$ is
$$ \overline{\bfrakq} = q_0 - \bi q_1 - \bj q_2 - \bk q_3 . $$
We shall use the
$2\times 2$ matrix representation of the quaternions \cite{Zhang}, in which
$$
\bi = \sqrt{-1}\sigma_1, \quad \bj = -\sqrt{-1}\sigma_2, \quad \bk  = \sqrt{-1}\sigma_3, $$
and the $\sigma$'s are the three Pauli matrices,
$$
\sigma_1 = \begin{pmatrix} 0 & 1\\ 1& 0 \end{pmatrix}, \quad
\sigma_2 = \begin{pmatrix} 0 & -i\\ i & 0\end{pmatrix}, \quad
\sigma_3 = \begin{pmatrix} 1 & 0\\ 0 & -1 \end{pmatrix}, $$
to which we add
$$\sigma_0 = \mathbb I_2 = \begin{pmatrix} 1 & 0 \\ 0 & 1\end{pmatrix}.$$
We shall also use the matrix valued vector $\bsigma = (\sigma_1, -\sigma_2 , \sigma_3)$. Thus,
in this representation,
$$\bfrakq = q_0\sigma_0 + i\bq\cdot \bsigma =
\begin{pmatrix} q_0 + iq_3 & -q_2 +iq_1 \\ q_2 + iq_1 & q_0 -iq_3\end{pmatrix}, \quad
\bq = (q_1, q_2, q_3 ).  $$
In this case, the quaternionic conjugate of $\bfrakq$ is given by $\bfrakq^\dag$.
Introducing  two complex variables, which we write as
$$ z_1 = q_0 + iq_3 , \qquad z_2 = q_2 + iq_1, $$
we may also write
\be
\bfrakq = \begin{pmatrix} z_1 & -\overline{z}_2\\ z_2 & \overline{z}_1 \end{pmatrix}.
\label{quat-rep-comp}
\en
From (\ref{quat-rep-comp}) we get
\be
\text{det}[\bfrakq] = \vert z_1\vert^2 + \vert z_2\vert^2 = q_0^2 + q_1^2 + q_2^2 + q_3^2
: = \vert \bfrakq \vert^2 ,
\label{quat-norm}
\en
$\vert \bfrakq \vert$ denoting the usual norm of the quaternion $\bfrakq$. Note also that
$$
\bfrakq^\dag \bfrakq =   \bfrakq \bfrakq^\dag = \vert \bfrakq \vert^2\; \mathbb I_2 . $$
If $\bfrakq$ is non null element, it has the inverse
$$
\bfrakq^{-1} =  \frac 1{\vert \bfrakq \vert^2 }
\begin{pmatrix} \overline{z}_1  &  \overline{z}_2 \\ -z_2 & z_1 \end{pmatrix} .$$
On the other hand, we define $\bfrakQ:\C^{2}\longrightarrow\mathbb{M}_{2}(\C)$ by
\be
\bfrakQ(z)=\bfrakQ(z_{1},z_{2})= \begin{pmatrix} z_1 & -\overline{z}_2\\ z_2 & \overline{z}_1 \end{pmatrix},
\label{comp-rep-quat}
\en
for all $z=(z_{1},z_{2})\in\C^{2}$. Then the set ran$(\bfrakQ)=\{\bfrakQ(z)~\mid~z\in\C^{2}\}$ forms the algebra $\quat$ of quaternions because of the decomposition $$\bfrakQ(z)=\text{Re}z_{1}+\text{Im}z_{2}\,\bi+\text{Re}z_{2}\,\bj+\text{Im}z_{1}\,\bk.$$ This formalism allows us to look at the elements of $\quat$ as matrices and to use some operations in $\mathbb{M}_{2}(\C)$ rather than in $\quat$ (we refer the reader to \cite{Vas} for more details).
\subsection{$\quat$- Hilbert spaces}
In this subsection we  define left and right quaternionic Hilbert spaces and the Hilbert space of square-integrable functions on quaternions. For details we refer the reader to \cite{Ad,ghimorper,Vis}.
\subsubsection{Right $\quat$- Hilbert Space}
Let $V_{\quat}^{R}$ be a vector space under right multiplication by quaternions.  For $\phi,\psi,\omega\in V_{\quat}^{R}$ and $\bfrakq\in \quat$, the inner product
$$\langle\cdot\mid\cdot\rangle:V_{\quat}^{R}\times V_{\quat}^{R}\longrightarrow \quat$$
satisfies the following properties
\begin{enumerate}
	\item[(i)]
	$\overline{\langle \phi\mid \psi\rangle}=\langle \psi\mid \phi\rangle$
	\item[(ii)]
	$\|\phi\|^{2}=\langle \phi\mid \phi\rangle>0$ unless $\phi=0$, a real norm
	\item[(iii)]
	$\langle \phi\mid \psi+\omega\rangle=\langle \phi\mid \psi\rangle+\langle \phi\mid \omega\rangle$
	\item[(iv)]
	$\langle \phi\mid \psi\bfrakq\rangle=\langle \phi\mid \psi\rangle\bfrakq$
	\item[(v)]
	$\langle \phi\bfrakq\mid \psi\rangle=\overline{\bfrakq}\langle \phi\mid \psi\rangle$
\end{enumerate}
where $\overline{\bfrakq}$ stands for the quaternionic conjugate. It is always assumed that the
space $V_{\quat}^{R}$ is complete under the norm given above and separable. Then,  together with $\langle\cdot\mid\cdot\rangle$ this defines a right quaternionic Hilbert space (quaternionic linear spaces are sometimes called quaternionic modules too). Quaternionic Hilbert spaces share many of the standard properties of complex Hilbert spaces.

The next two Propositions can be established following the proof of their complex counterparts, refer to the reader \cite{ghimorper}.
\begin{proposition}\label{P1}
Let $\mathcal{O}=\{\varphi_{k}\,\mid\,k\in N\}$
be an orthonormal subset of $V_{\quat}^{R}$, where $N$ is a countable index set. Then following conditions are pairwise equivalent:
\begin{itemize}
\item [(a)] $\overline{\text{rightspan}\, \mathcal{O}}=V_{\quat}^{R}$.
\item [(b)] For every $\phi,\psi\in V_{\quat}^{R}$, then the series $\sum_{k\in N}\langle\phi\mid\varphi_{k}\rangle\langle\varphi_{k}\mid\psi\rangle$ converges absolutely and it holds:
$$\langle\phi\mid\psi\rangle=\sum_{k\in N}\langle\phi\mid\varphi_{k}\rangle\langle\varphi_{k}\mid\psi\rangle.$$
\item [(c)] For every  $\phi\in V_{\quat}^{R}$, it holds:
$$\|\phi\|^{2}=\sum_{k\in N}\mid\langle\varphi_{k}\mid\phi\rangle\mid^{2}.$$
\item [(d)] $\mathcal{O}^{\bot}=\{0\}$.
\end{itemize}
\end{proposition}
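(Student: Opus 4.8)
The plan is to prove the cyclic chain $(a)\Rightarrow(b)\Rightarrow(c)\Rightarrow(d)\Rightarrow(a)$, imitating the classical argument while keeping careful track of the side on which quaternionic scalars sit. The preliminary step I would establish first is Bessel's inequality. For a finite subset $F\subseteq N$ and $\phi\in V_{\quat}^{R}$, writing $c_{k}=\langle\varphi_{k}\mid\phi\rangle$ and expanding
\[
0\leq\Bigl\|\phi-\sum_{k\in F}\varphi_{k}c_{k}\Bigr\|^{2}=\|\phi\|^{2}-\sum_{k\in F}\langle\phi\mid\varphi_{k}\rangle c_{k}=\|\phi\|^{2}-\sum_{k\in F}\abs{c_{k}}^{2},
\]
where (iii)--(v) and orthonormality are used to expand the inner product and (i) gives $\langle\phi\mid\varphi_{k}\rangle c_{k}=\overline{c_{k}}c_{k}=\abs{c_{k}}^{2}$. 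Note the Fourier coefficients must be placed to the \emph{right} of the $\varphi_{k}$ so that (iv) applies when scalars are pulled through. Taking the supremum over $F$ shows $(c_{k})_{k\in N}\in\ell^{2}$, and then the partial sums of $\widehat{\phi}:=\sum_{k\in N}\varphi_{k}c_{k}$ form a Cauchy net (again by the computation above, applied to differences), so $\widehat{\phi}$ converges in norm by completeness of $V_{\quat}^{R}$, independently of the enumeration.

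For $(a)\Rightarrow(b)$ I would first show $\widehat{\phi}=\phi$: the vector $\phi-\widehat{\phi}$ is orthogonal to every $\varphi_{k}$, hence to $\text{rightspan}\,\mathcal{O}$, hence by continuity of $\langle\cdot\mid\cdot\rangle$ and the density hypothesis (a) to all of $V_{\quat}^{R}$, so it is $0$ by (ii). Then, using right-linearity in the second slot, conjugate-linearity (v) in the first slot, and continuity of the inner product to interchange it with the convergent sums,
\[
\langle\phi\mid\psi\rangle=\Bigl\langle\,\sum_{k}\varphi_{k}c_{k}\ \Big|\ \sum_{j}\varphi_{j}\langle\varphi_{j}\mid\psi\rangle\,\Bigr\rangle=\sum_{k,j}\overline{c_{k}}\,\langle\varphi_{k}\mid\varphi_{j}\rangle\,\langle\varphi_{j}\mid\psi\rangle=\sum_{k}\langle\phi\mid\varphi_{k}\rangle\langle\varphi_{k}\mid\psi\rangle,
\]
the double sum collapsing by orthonormality and $\overline{c_{k}}=\langle\phi\mid\varphi_{k}\rangle$ by (i). Absolute convergence of the scalar series is immediate: $\sum_{k}\abs{\langle\phi\mid\varphi_{k}\rangle}\,\abs{\langle\varphi_{k}\mid\psi\rangle}\leq\bigl(\sum_{k}\abs{\langle\varphi_{k}\mid\phi\rangle}^{2}\bigr)^{1/2}\bigl(\sum_{k}\abs{\langle\varphi_{k}\mid\psi\rangle}^{2}\bigr)^{1/2}\leq\|\phi\|\,\|\psi\|$ by Cauchy--Schwarz for $\ell^{2}$ together with Bessel.

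The remaining implications are short. $(b)\Rightarrow(c)$ follows on setting $\psi=\phi$ and using $\langle\phi\mid\varphi_{k}\rangle\langle\varphi_{k}\mid\phi\rangle=\abs{\langle\varphi_{k}\mid\phi\rangle}^{2}$. For $(c)\Rightarrow(d)$: if $\phi\in\mathcal{O}^{\bot}$ then all $\langle\varphi_{k}\mid\phi\rangle=0$, so $\|\phi\|^{2}=0$ and $\phi=0$ by (ii). For $(d)\Rightarrow(a)$: put $M=\overline{\text{rightspan}\,\mathcal{O}}$, a closed right $\quat$-submodule; by the orthogonal decomposition theorem for right quaternionic Hilbert spaces (available from \cite{ghimorper}) one has $V_{\quat}^{R}=M\oplus M^{\bot}$, and $M^{\bot}\subseteq\mathcal{O}^{\bot}=\{0\}$, so $M=V_{\quat}^{R}$.

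The only genuine subtlety is the non-commutative bookkeeping: every Fourier coefficient must be attached on the correct side so that properties (iv) and (v) are invoked legitimately, and the two ``halves'' of the bilinear-looking inner product transform by (iv) and (v) respectively. The one external input needed is the quaternionic projection (orthogonal decomposition) theorem in the last step; all the analytic content --- Bessel's inequality, Cauchyness of partial sums, continuity of $\langle\cdot\mid\cdot\rangle$ --- is formally identical to the complex case.
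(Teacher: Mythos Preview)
Your proof is correct and follows exactly the approach the paper indicates: the paper does not give its own argument for this proposition but simply states that it ``can be established following the proof of their complex counterparts'' and refers to \cite{ghimorper}. Your cyclic chain $(a)\Rightarrow(b)\Rightarrow(c)\Rightarrow(d)\Rightarrow(a)$ via Bessel's inequality, with the right-multiplication bookkeeping you describe, is precisely that classical argument adapted to the right $\quat$-module setting.
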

\begin{proposition}\label{P2}
(Every quaternionic Hilbert space) $V_{\quat}^{R}$ admits (a subset) $\mathcal{O}$, called Hilbert basis of $V_{\quat}^{R}$ (as in the preceding proposition), and $\mathcal{O}$ satisfies equivalent conditions (a)-(d) stated in the preceding proposition. Two such sets have the same cardinality.

Furthermore, if $\mathcal{O}$ is a Hilbert basis of $V_{\quat}^{R}$, then every  $\phi\in V_{\quat}^{R}$ can be uniquely decomposed as follows:
$$\phi=\sum_{k\in N}\varphi_{k}\langle\varphi_{k}\mid\phi\rangle,$$
where the series $\sum_{k\in N}\varphi_k\langle\varphi_{k}\mid\phi\rangle$ converges absolutely in $V_{\quat}^{R}$.
\end{proposition}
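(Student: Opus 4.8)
The plan is to transcribe the classical proofs of the existence and essential uniqueness of a Hilbert basis, being careful to keep every quaternionic scalar on the right (as axioms (iv)--(v) demand) and to exploit that $\|\cdot\|$ takes values in $\R_{\geq 0}$, so that normalization stays inside the right module. Everything will be reduced to Proposition \ref{P1}, whose equivalence (d)$\Leftrightarrow$(a)--(c) does the real work.

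First I would establish existence by Zorn's lemma. Order the orthonormal subsets of $V_{\quat}^{R}$ by inclusion; the union of a chain of orthonormal sets is again orthonormal, so a maximal orthonormal set $\mathcal{O}$ exists. If some $0\neq\eta$ were orthogonal to all of $\mathcal{O}$, then $\eta\|\eta\|^{-1}$ would be a unit vector in $\mathcal{O}^{\bot}$ and $\mathcal{O}\cup\{\eta\|\eta\|^{-1}\}$ would contradict maximality; hence $\mathcal{O}^{\bot}=\{0\}$, i.e. $\mathcal{O}$ satisfies condition (d) of Proposition \ref{P1}, and therefore all of (a)--(d). Separability, together with the fact that distinct orthonormal vectors are at distance $\sqrt{2}$, forces $\mathcal{O}$ to be at most countable, so it can be indexed by a countable set $N$ exactly as in Proposition \ref{P1}.

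Next, the expansion. Fix a Hilbert basis $\mathcal{O}=\{\varphi_{k}\mid k\in N\}$ and $\phi\in V_{\quat}^{R}$. Condition (c) gives $\sum_{k\in N}|\langle\varphi_{k}\mid\phi\rangle|^{2}=\|\phi\|^{2}<\infty$. For finite $F\subseteq N$ put $s_{F}=\sum_{k\in F}\varphi_{k}\langle\varphi_{k}\mid\phi\rangle$; using (iv), (v) and orthonormality,
\[
\|s_{G}-s_{F}\|^{2}=\sum_{k\in G\setminus F}|\langle\varphi_{k}\mid\phi\rangle|^{2}\qquad(F\subseteq G),
\]
which tends to $0$ along the finite subsets of $N$. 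By completeness the net $(s_{F})$ converges, and the convergence is independent of any enumeration of $N$ (this unconditional convergence being the intended meaning of ``converges absolutely'', since $\sum_{k}|\langle\varphi_{k}\mid\phi\rangle|$ itself may diverge). Call the limit $\phi'$. Continuity of the inner product, a consequence of the quaternionic Cauchy--Schwarz inequality, gives $\langle\varphi_{j}\mid\phi-\phi'\rangle=0$ for every $j$, so $\phi-\phi'\in\mathcal{O}^{\bot}=\{0\}$ and $\phi=\sum_{k\in N}\varphi_{k}\langle\varphi_{k}\mid\phi\rangle$. Uniqueness of the coefficients follows by pairing a putative expansion $\phi=\sum_{k}\varphi_{k}c_{k}$ with $\varphi_{j}$ and invoking continuity again to get $c_{j}=\langle\varphi_{j}\mid\phi\rangle$.

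Finally, equicardinality of two Hilbert bases $\mathcal{O}_{1},\mathcal{O}_{2}$. For each $\varphi\in\mathcal{O}_{1}$, condition (c) applied to $\mathcal{O}_{2}$ shows the set $\{\psi\in\mathcal{O}_{2}:\langle\psi\mid\varphi\rangle\neq0\}$ is countable; and every $\psi\in\mathcal{O}_{2}$ meets some such set, for otherwise $\psi\in\mathcal{O}_{1}^{\bot}=\{0\}$. Covering $\mathcal{O}_{2}$ by these countable sets yields $|\mathcal{O}_{2}|\leq\aleph_{0}\,|\mathcal{O}_{1}|$, hence $|\mathcal{O}_{2}|\leq|\mathcal{O}_{1}|$ when $\mathcal{O}_{1}$ is infinite, and equality by symmetry; the finite-dimensional case is a routine dimension count over $\quat$. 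I do not expect a genuine obstacle here --- as the text itself remarks, these statements follow their complex counterparts --- the only points demanding attention being the consistent placement of scalars (especially in the computation of $\|s_{G}-s_{F}\|^{2}$), the legitimacy of the normalization step in Zorn's lemma, and the reading of ``absolutely convergent'' as unconditionally norm-convergent.
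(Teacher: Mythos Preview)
The paper does not actually prove this proposition: it is stated with the remark that it ``can be established following the proof of their complex counterparts'' and a reference to \cite{ghimorper}. Your argument is precisely that standard proof---Zorn's lemma for existence, Parseval plus completeness for the expansion, and the Bessel-based covering argument for equicardinality---carried out with the appropriate care about right scalar multiplication, so it is correct and in full agreement with what the paper intends.
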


It should be noted that once a Hilbert basis is fixed, every left (resp. right) quaternionic Hilbert space also becomes a right (resp. left) quaternionic Hilbert space \cite{ghimorper,Vis}.

The field of quaternions $\quat$ itself can be turned into a left quaternionic Hilbert space by defining the inner product $\langle \bfrakq \mid \bfrakq^\prime \rangle = \bfrakq \bfrakq^{\prime\dag} = \bfrakq\overline{\bfrakq^\prime}$ or into a right quaternionic Hilbert space with  $\langle \qu \mid \qu^\prime \rangle = \bfrakq^\dag \bfrakq^\prime = \overline{\bfrakq}\bfrakq^\prime$.
\subsubsection{$\quat$- Hilbert Spaces of Square-integrable Functions}
Let $(X, \mu)$ be a measure space and $\quat$  the field of quaternions, then
$$L^2_{\quat}(X,\mu)=\left\{f:X\rightarrow \quat\;\; \left| \;\; \int_X|f(x)|^2d\mu(x)<\infty \right.\right\}$$\label{L^2}
is a right quaternionic Hilbert space, with the (right) scalar product
\begin{equation}
\langle f \mid g\rangle =\int_X \overline{f(x)} g(x)\; d\mu(x),
\label{left-sc-prod}
\end{equation}
where $\overline{f(x)}$ is the quaternionic conjugate of $f(x)$, and (right)  scalar multiplication $f\bfraka , \; \bfraka\in \quat,$ with $(f \bfraka)(x) = f(x)\bfraka $ (see \cite{Vis} for details). Similarly, one could define a left quaternionic Hilbert space of square-integrable functions.
\section{Right $\quat$-Operators and some basic properties}
In this section we shall define right  $\quat$-linear operators and acquire some basis properties from \cite{AC} and \cite{ghimorper} as needed for this manuscript. We shall also prove certain results pertinent to the development of the manuscript. To the best of our knowledge the results we prove do not appear in the literature.
\begin{definition}\cite{ghimorper}
A mapping $A:\D(A)\longrightarrow V_{\quat}^R$, where $\D(A)\subseteq V_{\quat}^R$ stands for the domain of $A$, is said to be right $\quat$-linear operator, if
$$A(\phi\bfraka+\psi\bfrakb)=(A\phi)\bfraka+(A\psi)\bfrakb,~~\mbox{~if~}~~\phi,\,\psi\in \D(A)~~\mbox{~and~}~~\bfraka,\bfrakb\in\quat.$$
\end{definition}
The set of all right linear operators will be denoted by $\mathcal{L}(V_{\quat}^{R})$ and the identity linear operator on $V_{\quat}^{R}$ will be denoted by $\Iop$. For a given $A\in \mathcal{L}(V_{\quat}^{R})$, the range and the kernel will be
\begin{eqnarray*}
\text{ran}(A)&=&\{\psi \in V_{\quat}^{R}~|~A\phi =\psi \quad\text{for}~~\phi \in\D(A)\}\\
\ker(A)&=&\{\phi \in\D(A)~|~A\phi =0\}.
\end{eqnarray*}
We call an operator $A\in \mathcal{L}(V_{\quat}^{R})$ bounded if
\begin{equation*}
\|A\|=\sup\{\|A\phi \|~|~\phi \in V_{\quat}^{R} \rm{~with~} \|\phi \|=1\}<\infty.
\end{equation*}
or equivalently, there exist $K\geq 0$ such that $\|A\phi \|\leq K\|\phi \|$ for $\phi \in\D(A)$. The set of all bounded right linear operators will be denoted by $\B(V_{\quat}^{R})$.
If $\D(A)$ is dense in $V_{\quat}^{R}$, then the adjoint $A^{\dagger}$ of $A$ is defined as
\begin{equation}\label{Ad1}
\langle \psi \mid A\phi \rangle=\langle A^{\dagger} \psi \mid\phi \rangle;\quad\text{for all}~~~(\phi ,\psi) \in \D(A)\times\D(A^{\dagger});
\end{equation}
where $\D(A^{\dagger})=\{\phi\in V_{\quat}^{R}~|~ \exists\,\omega_{\phi}\in V_{\quat}^{R} \rm{~such ~that~} \langle\omega_{\phi}\mid\psi\rangle=\langle\phi\mid A\psi\rangle, \forall\,\psi\in\D(A)\}$.
\subsection{Symmetry and Self-adjointness}
Let $A:\D(A)\longrightarrow V_{\quat}^R$ and $B:\D(B)\longrightarrow V_{\quat}^R$ be $\quat$-linear operators. As usual, we write $A\subset B$ if $D(A)\subset\D(B)$ and $B\vert_{\D(A)}=A$. In this case, $B$ is said to be an extension of $A$.
\begin{definition}\cite{ghimorper}
A right $\quat$-linear operator $A:\D(A)\longrightarrow V_{\quat}^R$ is said to be
\begin{itemize}
\item[(a)] \textit{symmetric}, if $A\subset A^{\dagger}$.
\item[(b)] \textit{anti-symmetric}, if $A\subset -A^{\dagger}$.
\item[(c)] \textit{self-adjoint}, if $A= A^{\dagger}$.
\item[(d)] \textit{unitary}, if $\D(A)=V_{\quat}^{R}$ and $A\,A^{\dagger}=A^{\dagger}A=\Iop$.
\item[(e)] \textit{closed}, if the graph $\mathcal{G}:=\D(A)\times\text{ran}(A)$ of $A$ is closed in $V_{\quat}^R\times V_{\quat}^R$, equipped with the product topology.
\item[(f)] \textit{closable}, if it admits closed operator extensions. In this case, the \textit{closure} $\overline{A}$ of $A$ is the smallest closed extension.
\end{itemize}
\end{definition}
\begin{proposition}\label{copcr}
A right $\quat$-linear operator $A:\D(A)\longrightarrow V_{\quat}^R$ is closed if and only if for any sequence $\{\phi_{n}\}$ in $\D(A)$ such that $\phi_{n}\longrightarrow\phi$ with $A\phi_{n}=\psi_{n}\longrightarrow\psi$ in $V_{\quat}^{R}$, then $\psi=A\phi$.
\end{proposition}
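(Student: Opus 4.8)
The plan is to unwind the definition of closedness, namely that the graph $\mathcal{G}$ of $A$ (i.e. $\{(\phi,A\phi)\mid\phi\in\D(A)\}\subseteq V_{\quat}^R\times V_{\quat}^R$) is closed in the product topology, into its sequential form; this reduction rests only on the fact that the ambient space carries a genuine metric. First I would observe that although the scalars are quaternions, the norm on $V_{\quat}^R$ furnished by property (ii) of the inner product is real-valued and positive definite, so $d(\phi,\psi)=\|\phi-\psi\|$ is an honest metric; hence $V_{\quat}^R\times V_{\quat}^R$ with the product topology is metrizable, for instance by $\|(\phi,\psi)\|_\times^2=\|\phi\|^2+\|\psi\|^2$. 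In any metric space a subset is closed if and only if it is sequentially closed, so it suffices to reformulate sequential closedness of $\mathcal{G}$.

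Next I would spell out what convergence in $\mathcal{G}$ means. A sequence in $\mathcal{G}$ has the form $(\phi_n,A\phi_n)$ with $\phi_n\in\D(A)$, and $(\phi_n,A\phi_n)\to(\phi,\psi)$ in the product topology precisely when $\phi_n\to\phi$ and $A\phi_n\to\psi$ in $V_{\quat}^R$. The limit point $(\phi,\psi)$ lies in $\mathcal{G}$ precisely when $\phi\in\D(A)$ and $\psi=A\phi$. Thus ``$\mathcal{G}$ is sequentially closed'' is, by definition, the assertion that for every sequence $\{\phi_n\}\subseteq\D(A)$ with $\phi_n\to\phi$ and $A\phi_n\to\psi$ one has $\phi\in\D(A)$ and $\psi=A\phi$ — which is exactly the stated condition (the membership $\phi\in\D(A)$ being implicit in writing $A\phi$).

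The two implications then follow immediately. If $A$ is closed, $\mathcal{G}$ is closed, hence sequentially closed, so any sequence as in the hypothesis forces $\psi=A\phi$. Conversely, assuming the sequential condition, to prove $\mathcal{G}$ closed take an arbitrary $(\phi,\psi)$ in its closure; by metrizability choose a sequence $(\phi_n,A\phi_n)\in\mathcal{G}$ converging to $(\phi,\psi)$, apply the hypothesis to obtain $\phi\in\D(A)$ and $\psi=A\phi$, and conclude $(\phi,\psi)\in\mathcal{G}$.

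I do not anticipate a genuine obstacle: the only point deserving care is the remark that the quaternionic norm still induces a metric topology, which is what makes the general equivalence ``closed $\Leftrightarrow$ sequentially closed'' available; the rest is the same bookkeeping as in the complex case. Should one wish to avoid invoking metrizability, one could argue with nets instead, but sequences suffice here since a separable quaternionic Hilbert space is first countable.
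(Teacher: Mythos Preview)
Your argument is correct and is exactly the unpacking of the definition that the paper intends: the paper's own proof is the single line ``It is straight forward from the definition of closed operators,'' and what you wrote is precisely that straightforward verification, with the added (and appropriate) remark that the real-valued norm makes $V_{\quat}^R\times V_{\quat}^R$ metrizable so that closedness and sequential closedness coincide.
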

\begin{proof}
It is straight forward from the definition of closed operators.
\end{proof}
\begin{proposition}\cite{ghimorper}\label{cldop}
Let $A:\D(A)\longrightarrow V_{\quat}^R$ be densely defined right $\quat$-linear operator. Then
\begin{itemize}
\item[(a)] $A^{\dagger}$ is closed.
\item[(b)] $A$ is closable if and only if $\D(A^{\dagger})$ is dense in $V_{\quat}^R$, and $\overline{A}=A^{\dagger\dagger}$.
\item[(c)] $\text{ran}(A)^{\bot}=\ker(A^{\dagger})$ and $\ker(A)\subset\text{ran}(A^{\dagger})^{\bot}$.\\
Furthermore, if $\D(A^{\dagger})$ is dense in $V_{\quat}^R$ and $A$ is closed, then $\ker(A)=\text{ran}(A^{\dagger})^{\bot}$.
\end{itemize}
\end{proposition}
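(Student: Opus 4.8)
The plan is to run the standard graph-theoretic argument from the complex theory, adapted to the right $\quat$-linear setting. First I would turn $V_{\quat}^R\times V_{\quat}^R$ into a right quaternionic Hilbert space with componentwise right scalar multiplication and inner product $\langle(\phi_1,\psi_1)\mid(\phi_2,\psi_2)\rangle=\langle\phi_1\mid\phi_2\rangle+\langle\psi_1\mid\psi_2\rangle$, and introduce the flip $V(\phi,\psi)=(-\psi,\phi)$, which I expect to be right $\quat$-linear, unitary, and to satisfy $V^{2}=-\Iop$ on the product (so $V^{-1}=-V$). The key computation is the identity $\mathcal{G}(A^{\dagger})=\big(V\,\mathcal{G}(A)\big)^{\bot}=V\big(\mathcal{G}(A)^{\bot}\big)$, valid because $\D(A)$ is dense: $(\phi,\omega)\perp V(\eta,A\eta)=(-A\eta,\eta)$ for all $\eta\in\D(A)$ is precisely $\langle\omega\mid\eta\rangle=\langle\phi\mid A\eta\rangle$ for all such $\eta$, i.e. $\phi\in\D(A^{\dagger})$ with $\omega=A^{\dagger}\phi$.

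With that identity, part (a) is immediate: orthogonal complements in a quaternionic Hilbert space are closed (continuity of $\chi\mapsto\langle\chi\mid\psi\rangle$ from Cauchy--Schwarz), so $\mathcal{G}(A^{\dagger})$ is closed and $A^{\dagger}$ is closed by Proposition \ref{copcr}. For part (b) I would use that $A$ is closable iff $\overline{\mathcal{G}(A)}$ is single-valued (the graph of an operator), together with $\overline{\mathcal{G}(A)}=(\mathcal{G}(A)^{\bot})^{\bot}$ from the projection theorem on the product space. Unwinding $\mathcal{G}(A)^{\bot}=V^{-1}\mathcal{G}(A^{\dagger})$ shows that $(\xi,\zeta)\in\overline{\mathcal{G}(A)}$ iff $\langle\xi\mid A^{\dagger}\phi\rangle-\langle\zeta\mid\phi\rangle=0$ for all $\phi\in\D(A^{\dagger})$; taking $\xi=0$ gives $(0,\zeta)\in\overline{\mathcal{G}(A)}$ exactly when $\zeta\perp\D(A^{\dagger})$. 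Hence if $\D(A^{\dagger})$ is not dense, $\overline{\mathcal{G}(A)}$ contains a nonzero $(0,\zeta)$ and $A$ is not closable; and if $\D(A^{\dagger})$ is dense the same identity applied to $A^{\dagger}$ yields $\overline{\mathcal{G}(A)}=(\mathcal{G}(A)^{\bot})^{\bot}=V^{-2}\mathcal{G}(A^{\dagger\dagger})=\mathcal{G}(A^{\dagger\dagger})$ (using $V$ unitary, $V^{2}=-\Iop$, and $-\mathcal{G}(B)=\mathcal{G}(B)$), so $A$ is closable with $\overline{A}=A^{\dagger\dagger}$.

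Part (c) I would settle by direct manipulation of the adjoint relation: $\psi\perp\text{ran}(A)$ iff $\langle\psi\mid A\phi\rangle=0=\langle 0\mid\phi\rangle$ for all $\phi\in\D(A)$, i.e. iff $\psi\in\D(A^{\dagger})$ with $A^{\dagger}\psi=0$, giving $\text{ran}(A)^{\bot}=\ker(A^{\dagger})$; and if $\phi\in\ker(A)$ then $\langle A^{\dagger}\eta\mid\phi\rangle=\langle\eta\mid A\phi\rangle=0$ for every $\eta\in\D(A^{\dagger})$, so $\phi\in\text{ran}(A^{\dagger})^{\bot}$, whence $\ker(A)\subseteq\text{ran}(A^{\dagger})^{\bot}$. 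For the equality under the extra hypotheses, $A$ closed plus part (b) give $A=\overline{A}=A^{\dagger\dagger}$, so applying the already-proved formula $\text{ran}(B)^{\bot}=\ker(B^{\dagger})$ to $B=A^{\dagger}$ gives $\text{ran}(A^{\dagger})^{\bot}=\ker(A^{\dagger\dagger})=\ker(A)$.

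The hard part is not any individual step but checking that the complex-case machinery genuinely carries over to the non-commutative setting: that orthogonal complements of right submodules are again closed right submodules, that the orthogonal-projection / double-complement theorem $(\mathcal{M}^{\bot})^{\bot}=\overline{\mathcal{M}}$ holds in a right quaternionic Hilbert space, and that $V$ and $A^{\dagger}$ really are right $\quat$-linear --- each of these needs some care about the side on which scalars act and leans on properties (iv)--(v) of the inner product. Granting those structural facts (available in \cite{ghimorper,Vis}), everything above is formal.
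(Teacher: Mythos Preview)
Your argument is correct and is precisely the standard graph-theoretic proof one finds in the complex setting, transported to the right $\quat$-linear framework with the appropriate care about sidedness. Note, however, that the paper does not actually prove this proposition: it is stated with a citation to \cite{ghimorper} and no proof is given. So there is nothing in the paper to compare your argument against beyond the attribution; your write-up supplies exactly the kind of verification the authors defer to that reference, and the structural ingredients you flag (closedness of orthogonal complements, the projection/double-complement theorem, right $\quat$-linearity of $V$ and of $A^{\dagger}$) are indeed established in \cite{ghimorper}.
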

\begin{proposition}\label{prrpt}
Let $A:\D(A)\subseteq V_{\quat}^R\longrightarrow V_{\quat}^R$ be a right $\quat$-linear operator. If $A$ is closed and satisfies the condition that there exists $C>0$ such that
\be
\|A\phi\|\geq C\|\phi\|,
\label{rpt}
\en
for all $\phi\in D(A)$, then $\text{ran}(A)$ is closed.
\end{proposition}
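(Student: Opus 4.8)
The plan is to prove that $\text{ran}(A)$ contains all of its limit points by a direct sequential argument, using the coercivity estimate \eqref{rpt} to transfer a Cauchy condition from $\text{ran}(A)$ back to $\D(A)$, and then invoking the closedness of $A$ to land back inside $\text{ran}(A)$. So I would begin with an arbitrary sequence $\{\psi_n\}\subseteq\text{ran}(A)$ with $\psi_n\longrightarrow\psi$ in $V_{\quat}^R$, and for each $n$ choose $\phi_n\in\D(A)$ with $A\phi_n=\psi_n$; the goal is to exhibit $\phi\in\D(A)$ with $A\phi=\psi$.

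The key step is the following. Since $\{\psi_n\}$ converges it is Cauchy, and $\phi_n-\phi_m\in\D(A)$ (this is where I use that $\D(A)$ is a right $\quat$-linear subspace, which is immediate from the definition of a right $\quat$-linear operator), so \eqref{rpt} gives
$$\|\phi_n-\phi_m\|\leq \frac1C\,\|A\phi_n-A\phi_m\|=\frac1C\,\|\psi_n-\psi_m\|\longrightarrow 0.$$
Hence $\{\phi_n\}$ is Cauchy in $V_{\quat}^R$, and by completeness of $V_{\quat}^R$ it converges to some $\phi\in V_{\quat}^R$. Now $\phi_n\longrightarrow\phi$ while $A\phi_n=\psi_n\longrightarrow\psi$, so the characterization of closed operators in Proposition \ref{copcr} forces $\phi\in\D(A)$ and $A\phi=\psi$. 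Therefore $\psi=A\phi\in\text{ran}(A)$, and since $\{\psi_n\}$ was an arbitrary convergent sequence in $\text{ran}(A)$, we conclude that $\text{ran}(A)$ is closed.

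I do not expect a genuine obstacle here: the proof is the quaternionic transcription of the classical fact, and everything it needs — that $\D(A)$ is a subspace, that $V_{\quat}^R$ is complete, and the sequential form of closedness — is already available in the excerpt. The only thing to be careful about is to use \eqref{rpt} strictly as a one-sided (coercivity) bound and never to invoke any continuity or boundedness of $A$; the argument rests solely on \eqref{rpt} together with closedness.
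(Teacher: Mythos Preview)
Your proof is correct and follows essentially the same approach as the paper: both start from an arbitrary limit point of $\text{ran}(A)$, use the coercivity estimate \eqref{rpt} to show the preimage sequence is Cauchy, and then invoke closedness of $A$ via Proposition~\ref{copcr} to conclude. Your write-up simply spells out a few steps (that $\D(A)$ is a subspace, completeness of $V_{\quat}^R$) more explicitly than the paper does.
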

\begin{proof}
Let $\psi\in\overline{\text{ran}(A)}$, then there exists $\{\phi_{n}\}$  a sequence in $\D(A)$ such that $A\phi_{n}\longrightarrow \psi$. Then by (\ref{rpt}), we know
$\{\phi_{n}\}$ is Cauchy sequence in $V_{\quat}^{R}$ as $\{A\phi_{n}\}$ is Cauchy. Then $\phi_{n}\longrightarrow \phi$ for some $\phi\in V_{\quat}^{R}$. From the Proposition \ref{copcr}, we have $A\phi=\psi$. This completes the proof.
\end{proof}
\subsection{Left Scalar Multiplications on $V_{\quat}^{R}$.}
We shall extract the definition and some properties of left scalar multiples of vectors on $V_{\quat}^R$ from \cite{ghimorper} as needed for the development of the manuscript. The left scalar multiple of vectors on a right quaternionic Hilbert space is an extremely non-canonical operation associated with a choice of preferred Hilbert basis. From the Proposition \ref{P2}, $V_{\quat}^{R}$ has a Hilbert basis
\begin{equation}\label{b1}
\mathcal{O}=\{\varphi_{k}\,\mid\,k\in N\},
\end{equation}
where $N$ is a countable index set.
The left scalar multiplication on $V_{\quat}^{R}$ induced by $\mathcal{O}$ is defined as the map $\quat\times V_{\quat}^{R}\ni(\bfrakq,\phi)\longmapsto \bfrakq\phi\in V_{\quat}^{R}$ given by
\begin{equation}\label{LPro}
\bfrakq\phi:=\sum_{k\in N}\varphi_{k}\bfrakq\langle \varphi_{k}\mid \phi\rangle,
\end{equation}
for all $(\bfrakq,\phi)\in\quat\times V_{\quat}^{R}$. Since all left multiplications are made with respect to some basis, assume that the basis $\mathcal{O}$ given by (\ref{b1}) is fixed all over the paper.
\begin{proposition}\cite{ghimorper}\label{lft_mul}
The left product defined in (\ref{LPro}) satisfies the following properties. For every $\phi,\psi\in V_{\quat}^{R}$ and $\bfrakp,\bfrakq\in\quat$,
\begin{itemize}
\item[(a)] $\bfrakq(\phi+\psi)=\bfrakq\phi+\bfrakq\psi$ and $\bfrakq(\phi\bfrakp)=(\bfrakq\phi)\bfrakp$.
\item[(b)] $\|\bfrakq\phi\|=|\bfrakq|\|\phi\|$.
\item[(c)] $\bfrakq(\bfrakp\phi)=(\bfrakq\bfrakp\phi)$.
\item[(d)] $\langle\overline{\bfrakq}\phi\mid\psi\rangle
=\langle\phi\mid\bfrakq\psi\rangle$.
\item[(e)] $r\phi=\phi r$, for all $r\in \mathbb{R}$.
\item[(f)] $\bfrakq\varphi_{k}=\varphi_{k}\bfrakq$, for all $k\in N$.
\end{itemize}
\end{proposition}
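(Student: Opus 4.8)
The plan is to verify the six identities directly from the series definition $(\ref{LPro})$, using Propositions \ref{P1} and \ref{P2} together with the inner product axioms (i)--(v). The only genuinely delicate point is that the defining series converges and that one is allowed to interchange inner products and right scalar multiplications with the infinite sum; once that is settled, each of (a)--(f) reduces to a short algebraic computation.

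First I would confirm that $\bfrakq\phi$ is well defined. Putting $c_k=\bfrakq\langle\varphi_k\mid\phi\rangle$, orthonormality of $\mathcal{O}$ and the multiplicativity of $|\cdot|$ on $\quat$ give, for every finite $F\subseteq N$,
\be
\Big\|\sum_{k\in F}\varphi_k c_k\Big\|^2=\sum_{k\in F}|c_k|^2=|\bfrakq|^2\sum_{k\in F}|\langle\varphi_k\mid\phi\rangle|^2,
\en
and the right-hand side is the tail of the convergent series appearing in Proposition \ref{P1}(c). Hence the net of partial sums is Cauchy and converges in $V_{\quat}^{R}$ by completeness, so $(\ref{LPro})$ makes sense; letting $F$ exhaust $N$ in the same identity already proves (b). The same estimate shows that $\omega\mapsto\langle\chi\mid\omega\rangle$ and, for fixed $\bfrakp\in\quat$, $\omega\mapsto\omega\bfrakp$ are continuous, hence commute with the sum -- this is precisely what licenses the term-by-term manipulations below.

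The workhorse for the remaining parts is the identity $\langle\varphi_k\mid\bfrakq\phi\rangle=\bfrakq\langle\varphi_k\mid\phi\rangle$, obtained by expanding $\bfrakq\phi$ via $(\ref{LPro})$, pulling the inner product through the sum, and using $\langle\varphi_k\mid\varphi_j\rangle=\delta_{kj}$ with axioms (iii)--(iv). Granting it: part (a) follows from additivity of $\phi\mapsto\langle\varphi_k\mid\phi\rangle$ and from $\langle\varphi_k\mid\phi\bfrakp\rangle=\langle\varphi_k\mid\phi\rangle\bfrakp$, after factoring $\bfrakp$ out on the right of the series; part (c) follows by substituting $\langle\varphi_k\mid\bfrakp\phi\rangle=\bfrakp\langle\varphi_k\mid\phi\rangle$ into the series for $\bfrakq(\bfrakp\phi)$ and invoking associativity of quaternion multiplication; part (f) is immediate, since only the $j=k$ term of $\sum_j\varphi_j\bfrakq\langle\varphi_j\mid\varphi_k\rangle$ survives; part (e) is the special case $\bfrakq=r\in\R$ of (f) combined with right linearity, or directly from $\varphi_k r\langle\varphi_k\mid\phi\rangle=\varphi_k\langle\varphi_k\mid\phi\rangle r$ followed by the expansion of $\phi$ from Proposition \ref{P2}; and part (d) follows by writing
$$\langle\overline{\bfrakq}\phi\mid\psi\rangle=\sum_{k\in N}\langle\varphi_k\,(\overline{\bfrakq}\langle\varphi_k\mid\phi\rangle)\mid\psi\rangle,$$
applying axiom (v) to reach $\sum_{k\in N}\langle\phi\mid\varphi_k\rangle\bfrakq\langle\varphi_k\mid\psi\rangle$, and recognizing this as $\langle\phi\mid\bfrakq\psi\rangle$ after expanding $\bfrakq\psi$ and using axiom (iv). I expect the only real obstacle to be the bookkeeping of convergence and of these interchanges; the algebra is entirely routine once the key identity is on the table.
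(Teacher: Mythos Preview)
Your verification is correct. The paper does not supply its own proof of this proposition; it is quoted from \cite{ghimorper} and stated without argument, so there is nothing to compare against beyond noting that your route---establishing convergence via the Parseval identity of Proposition~\ref{P1}(c), isolating the coefficient relation $\langle\varphi_k\mid\bfrakq\phi\rangle=\bfrakq\langle\varphi_k\mid\phi\rangle$, and reducing each item to a short computation---is the standard one and is sound.
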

 \begin{remark}
 (1) The meaning of writing $\bfrakp\phi$ is $\bfrakp\cdot\phi$, because the notation from (\ref{LPro}) may be confusing, when $V_{\quat}^{R}=\quat$. However, regarding the field $\quat$ itself as a right $\quat$-Hilbert space, an orthonormal basis $\mathcal{O}$ should consist only of a singleton, say $\{\varphi_{0}\}$, with $\mid\varphi_{0}\mid=1$, because we clearly have $\theta=\varphi_{0}\langle\varphi_{0}\mid\theta\rangle$, for all $\theta\in\quat$. The equality from (f) of Proposition \ref{lft_mul} can be written as $\bfrakp\varphi_{0}=\varphi_{0}\bfrakp$, for all $\bfrakp\in\quat$. In fact, the left hand side may be confusing and it should be understood as $\bfrakp\cdot\varphi_{0}$, because the true equality $\bfrakp\varphi_{0}=\varphi_{0}\bfrakp$ would imply that $\varphi_{0}=\pm 1$. For simplicity in notation, we are writing $\bfrakp\phi$ instead of writing $\bfrakp\cdot\phi$.\\
 (2) Also one can trivially see that $(\bfrakp+\bfrakq)\phi=\bfrakp\phi+\bfrakq\phi$, for all $\bfrakp,\bfrakq\in\quat$ and $\phi\in V_{\quat}^{R}$.
 \end{remark}
Furthermore, the quaternionic scalar multiplication of $\quat$-linear operators is also defined in \cite{ghimorper}. For any fixed $\bfrakq\in\quat$ and a given right $\quat$-linear operator $A:\D(A)\longrightarrow V_{\quat}^R$, the left scalar multiplication of $A$ is defined as a map $\bfrakq A:\D(A)\longrightarrow V_{\quat}^R$ by the setting
\begin{equation}\label{lft_mul-op}
(\bfrakq A)\phi:=\bfrakq (A\phi)=\sum_{k\in N}\varphi_{k}\bfrakq\langle \varphi_{k}\mid A\phi\rangle,
\end{equation}
for all $\phi\in D(A)$. It is straightforward that $\bfrakq A$ is a right $\quat$-linear operator. If $\bfrakq\phi\in \D(A)$, for all $\phi\in \D(A)$, one can define right scalar multiplication of the right $\quat$-linear operator $A:\D(A)\longrightarrow V_{\quat}^R$ as a map $ A\bfrakq:\D(A)\longrightarrow V_{\quat}^R$ by the setting
\begin{equation}\label{rgt_mul-op}
(A\bfrakq )\phi:=A(\bfrakq \phi),
\end{equation}
for all $\phi\in D(A)$. It is also right $\quat$-linear operator. One can easily obtain that, if $\bfrakq\phi\in \D(A)$, for all $\phi\in \D(A)$ and $\D(A)$ is dense in $V_{\quat}^R$, then
\begin{equation}\label{sc_mul_aj-op}
(\bfrakq A)^{\dagger}=A^{\dagger}\overline{\bfrakq}~\mbox{~and~}~
(A\bfrakq)^{\dagger}=\overline{\bfrakq}A^{\dagger}.
\end{equation}
\begin{remark}
Using (\ref{lft_mul-op}) and (\ref{sc_mul_aj-op}), we have,
for any $\phi\in V_{\quat}^{R}$,
$$(\bi \Iop)^{\dagger}\phi=[(\Iop)^{\dagger}\overline{\bi}]\phi=(\Iop\overline{\bi})\phi=\Iop(\overline{\bi}\phi)=-\bi\phi=-\bi(\Iop\phi)=-(\bi \Iop)\phi.$$
That is, $(\bi \Iop)^{\dagger}=-(\bi \Iop)$. One can also see that $(\bj \Iop)^{\dagger}=-(\bj \Iop)$ and $(\bk \Iop)^{\dagger}=-(\bk \Iop)$ .
\end{remark}
\begin{lemma}\label{neq-i}
Let $A:\D(A)\subseteq V_{\quat}^R\longrightarrow V_{\quat}^R$ be a densely defined right $\quat$-linear operator with the property that $\bi\phi\in \D(A)$, for all $\phi\in \D(A)$. If the operator $\bi A$ is anti-symmetric, then
\be
\|(A\pm \bi\Iop)\phi\|^{2}=\|A\phi\|^{2}+\|\phi\|^{2},
\label{neq1}
\en
for all $\phi\in\D(A)$.
Moreover, if $A$ is symmetric and $\bi A$ is  anti-symmetric, then
\be
\bi A=A\bi.
\label{iA}
\en
\end{lemma}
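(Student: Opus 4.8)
The plan is to obtain the norm identity (\ref{neq1}) by a direct expansion of the inner product, and then to deduce $\bi A=A\bi$ by testing the operator $\bi A-A\bi$ against the dense domain $\D(A)$. The only tool needed beyond bilinearity of $\langle\cdot\mid\cdot\rangle$ is the calculus of left scalar multiplication from Proposition \ref{lft_mul}, especially the conjugation rule $\langle\overline{\bfrakq}\phi\mid\psi\rangle=\langle\phi\mid\bfrakq\psi\rangle$ used together with $\overline{\bi}=-\bi$.

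First I would write, for $\phi\in\D(A)$ and using $(\bi\Iop)\phi=\bi\phi$,
\[
\|(A\pm\bi\Iop)\phi\|^{2}=\|A\phi\|^{2}+\|\bi\phi\|^{2}\pm\big(\langle A\phi\mid\bi\phi\rangle+\langle\bi\phi\mid A\phi\rangle\big).
\]
By Proposition \ref{lft_mul}(b), $\|\bi\phi\|^{2}=|\bi|^{2}\|\phi\|^{2}=\|\phi\|^{2}$, so it remains to kill the cross term. Proposition \ref{lft_mul}(d) with $\bfrakq=\bi$ gives $\langle A\phi\mid\bi\phi\rangle=\langle\overline{\bi}(A\phi)\mid\phi\rangle=-\langle(\bi A)\phi\mid\phi\rangle$; since $\langle\bi\phi\mid A\phi\rangle$ is the quaternionic conjugate of $\langle A\phi\mid\bi\phi\rangle$, the cross term equals $-\big(\langle(\bi A)\phi\mid\phi\rangle+\langle\phi\mid(\bi A)\phi\rangle\big)$. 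Finally, anti-symmetry of $\bi A$ (i.e. $\bi A\subset-(\bi A)^{\dagger}$, with $\D(\bi A)=\D(A)$) applied to the single vector $\phi$ via (\ref{Ad1}) forces $\langle\phi\mid(\bi A)\phi\rangle=\langle(\bi A)^{\dagger}\phi\mid\phi\rangle=-\langle(\bi A)\phi\mid\phi\rangle$, so the cross term is $0$ and (\ref{neq1}) follows.

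For the ``moreover'' part I would set $C:=\bi A-A\bi$ on $\D(A)$ — legitimate since the hypothesis $\bi\phi\in\D(A)$ makes $\D(A\bi)=\D(\bi A)=\D(A)$ by (\ref{lft_mul-op})--(\ref{rgt_mul-op}) — and show $\langle C\phi\mid\psi\rangle=0$ for all $\phi,\psi\in\D(A)$. The conjugation rule gives $\langle\bi(A\phi)\mid\psi\rangle=-\langle A\phi\mid\bi\psi\rangle$, while symmetry of $A$ (with $\bi\phi,\psi\in\D(A)$) gives $\langle A(\bi\phi)\mid\psi\rangle=\langle\bi\phi\mid A\psi\rangle$; hence $\langle C\phi\mid\psi\rangle=-\big(\langle A\phi\mid\bi\psi\rangle+\langle\bi\phi\mid A\psi\rangle\big)$. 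On the other hand, the anti-symmetry of $\bi A$ in bilinear form, $\langle(\bi A)\phi\mid\psi\rangle+\langle\phi\mid(\bi A)\psi\rangle=0$, becomes — after transporting $\bi$ across each inner product by the conjugation rule — exactly $\langle A\phi\mid\bi\psi\rangle+\langle\bi\phi\mid A\psi\rangle=0$. Thus $\langle C\phi\mid\psi\rangle=0$ for every $\psi$ in the dense subspace $\D(A)$, so $C\phi=0$ for all $\phi\in\D(A)$, i.e. $\bi A=A\bi$.

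I do not anticipate a genuine obstacle: both halves reduce to one-line computations once the inner product is expanded. The points that demand care are the sign/placement bookkeeping when moving the imaginary unit $\bi$ through the quaternion-valued inner product (the rule $\langle\overline{\bfrakq}\phi\mid\psi\rangle=\langle\phi\mid\bfrakq\psi\rangle$ with $\overline{\bi}=-\bi$ must be applied consistently, and non-commutativity makes sign errors easy), and the elementary domain checks $\D(\bi A)=\D(A\bi)=\D(A)$ that allow the adjoint relation for $\bi A$ to be evaluated on vectors of $\D(A)$.
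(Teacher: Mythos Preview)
Your argument is correct. For the norm identity (\ref{neq1}) you follow essentially the same route as the paper: expand the inner product and show the cross term vanishes via Proposition~\ref{lft_mul}(d) and the anti-symmetry of $\bi A$; the only cosmetic difference is that the paper writes out the chain $\langle A\phi\mid\bi\phi\rangle=\langle(\bi A)^{\dagger}\phi\mid\phi\rangle=\langle A^{\dagger}(\overline{\bi}\phi)\mid\phi\rangle=-\langle\bi\phi\mid A\phi\rangle$ using (\ref{sc_mul_aj-op}), whereas you reach the same cancellation by conjugate symmetry of the inner product.

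For the ``moreover'' part your proof is valid but takes a different path. The paper dispatches $\bi A=A\bi$ in one line of operator algebra: on $\D(A)$, anti-symmetry gives $-\bi A=(\bi A)^{\dagger}$, the adjoint rule (\ref{sc_mul_aj-op}) gives $(\bi A)^{\dagger}=A^{\dagger}\overline{\bi}$, and symmetry of $A$ turns this into $A\overline{\bi}=-A\bi$; hence $\bi A=A\bi$. You instead test the difference $C=\bi A-A\bi$ weakly against $\D(A)$ and invoke density. Your route is slightly longer but has the minor advantage of never needing the identity $(\bfrakq A)^{\dagger}=A^{\dagger}\overline{\bfrakq}$ from (\ref{sc_mul_aj-op}); the paper's route is shorter and makes the mechanism (adjoints turn left multiplication into right multiplication) transparent.
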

\begin{proof}
Let $\phi\in\D(A)$, then
\begin{eqnarray*}
\|(A\pm \bi\Iop)\phi\|^{2}
&=&\langle(A\pm \bi\Iop)\phi\mid\langle(A\pm\bi\Iop)\phi\rangle\rangle\\
&=&\langle A\phi\mid A\phi \rangle\pm\langle A\phi\mid (\bi\Iop)\phi \rangle\pm\langle (\bi\Iop)\phi\mid A\phi \rangle+\langle (\bi\Iop)\phi\mid (\bi\Iop)\phi \rangle.
\end{eqnarray*}
For any $\phi\in\D(A)$, one can obtain that,
\begin{eqnarray*}
	\langle A\phi\mid (\bi\Iop)\phi \rangle
	&=&\langle A\phi\mid \bi\phi \rangle=\langle \overline{\bi}(A\phi)\mid \phi \rangle~\mbox{~by (d) in Proposition \ref{lft_mul}}\\
	&=&\langle -(\bi A)\phi\mid \phi \rangle=\langle (\bi A)^{\dagger}\phi\mid \phi \rangle~\mbox{~as $\bi A$ is anti-symmetric}\\
	&=&\langle (A^{\dagger}\overline{\bi})\phi\mid \phi \rangle=\langle A^{\dagger}(\overline{\bi}\phi)\mid \phi \rangle~\mbox{~by (\ref{sc_mul_aj-op})}\\
	&=&\langle \overline{\bi}\phi\mid A \phi \rangle=-\langle \bi\phi\mid A \phi \rangle=-\langle (\bi\Iop)\phi\mid A\phi \rangle.
\end{eqnarray*}
Therefore, $$\|(A\pm \bi\Iop)\phi\|^{2}=\langle A\phi\mid A\phi \rangle+\langle (\bi\Iop)\phi\mid (\bi\Iop)\phi \rangle.$$
This proves the equation (\ref{neq1}).
	Since $A$ is symmetric and $\bi A$ is anti-symmetric, we have, using (\ref{sc_mul_aj-op}), on $\D(A)$,
	$-\bi A=(\bi A)^{\dagger}=A^{\dagger}\overline{\bi}=A\overline{\bi}=-A\bi$, that is, $\bi A=A\bi.$
This completes the proof.
\end{proof}
The following proposition is a direct consequence of the above lemma.
\begin{proposition}\label{ken}
Let $A:\D(A)\subseteq V_{\quat}^R\longrightarrow V_{\quat}^R$ be a densely defined right $\quat$-linear operator with the property that $\bi\phi\in \D(A)$, for all $\phi\in \D(A)$. If the operator $\bi A$ is anti-symmetric, then  $$\ker{(A\pm \bi\Iop)}=\{0\}.$$
\end{proposition}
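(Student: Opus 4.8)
The plan is to read this off directly from the norm identity (\ref{neq1}) established in Lemma \ref{neq-i}. The hypotheses here are exactly the hypotheses of that lemma: $A$ is densely defined, $\bi\phi\in\D(A)$ whenever $\phi\in\D(A)$, and $\bi A$ is anti-symmetric. So (\ref{neq1}) is available, namely $\|(A\pm\bi\Iop)\phi\|^{2}=\|A\phi\|^{2}+\|\phi\|^{2}$ for every $\phi\in\D(A)$.

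First I would take an arbitrary $\phi\in\ker(A\pm\bi\Iop)$. By definition $\phi\in\D(A\pm\bi\Iop)=\D(A)$ and $(A\pm\bi\Iop)\phi=0$, so the left-hand side of (\ref{neq1}) vanishes. Then the right-hand side gives $\|A\phi\|^{2}+\|\phi\|^{2}=0$. Since both summands are nonnegative real numbers (property (ii) of the inner product), each must be zero; in particular $\|\phi\|^{2}=0$, hence $\phi=0$. As $\phi$ was arbitrary in the kernel, this yields $\ker(A\pm\bi\Iop)=\{0\}$, which is the claim. One should handle both signs simultaneously, since (\ref{neq1}) is stated with $\pm$ and the argument is identical in the two cases.

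There is essentially no obstacle: the proposition is a one-line corollary of Lemma \ref{neq-i}, and the only thing to be careful about is that $\ker(A\pm\bi\Iop)$ really does consist of vectors in $\D(A)$ (so that (\ref{neq1}) applies to them), which is immediate because $\D(A\pm\bi\Iop)=\D(A)$ by the definition of the operator $\bi\Iop$ and of sums of operators.
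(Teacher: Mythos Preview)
Your proposal is correct and follows exactly the same argument as the paper's own proof: take $\phi\in\ker(A\pm\bi\Iop)$, apply the identity (\ref{neq1}) from Lemma \ref{neq-i} to obtain $0=\|A\phi\|^{2}+\|\phi\|^{2}$, and conclude $\phi=0$. The paper presents this in two lines; your additional remarks about $\D(A\pm\bi\Iop)=\D(A)$ and the handling of both signs are sound but already implicit in the paper's terse version.
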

\begin{proof}
	Let $\phi\in\ker{(A\pm \bi\Iop)}$. Then by (\ref{neq1}), we have
	$$0=\|(A\pm \bi\Iop)\phi\|^{2}=\|A\phi\|^{2}+\|\phi\|^{2}.$$
	This implies that $\phi=0$.
\end{proof}

The following theorem provides a basic criterion for self-adjointness.
\begin{theorem}\label{csadj_i}
Let $A:\D(A)\subseteq V_{\quat}^R\longrightarrow V_{\quat}^R$ be a densely defined right $\quat$-linear operator with the property that $\bi\phi\in \D(A)$, for all $\phi\in \D(A)$. If the operator $A$ is symmetric and $\bi A$ is anti-symmetric, then the following statements are equivalent:
\begin{itemize}
\item[(a)] $A$ is self-adjoint.
\item[(b)] $A$ is closed and $\ker(A^{\dagger}\pm\bi\Iop)=\{0\}$.
\item[(c)] $\text{ran}(A\pm\bi\Iop)=V_{\quat}^{R}$.
\end{itemize}
\end{theorem}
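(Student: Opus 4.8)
The plan is to prove the cyclic chain of implications $(a)\Rightarrow(b)\Rightarrow(c)\Rightarrow(a)$, the quaternionic analogue of the classical basic criterion for self-adjointness, feeding everything through Lemma~\ref{neq-i}, Proposition~\ref{ken}, Proposition~\ref{prrpt} and Proposition~\ref{cldop}. For $(a)\Rightarrow(b)$: if $A=A^{\dagger}$ then $A$ is closed because every adjoint is closed (Proposition~\ref{cldop}(a)); moreover $A^{\dagger}=A$ still satisfies all the standing hypotheses of the theorem (densely defined, $\bi\phi\in\D(A^{\dagger})$ for all $\phi\in\D(A^{\dagger})$, and $\bi A^{\dagger}=\bi A$ anti-symmetric), so Proposition~\ref{ken} applied to $A^{\dagger}$ gives $\ker(A^{\dagger}\pm\bi\Iop)=\ker(A\pm\bi\Iop)=\{0\}$.

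For $(b)\Rightarrow(c)$ I would prove separately that $\text{ran}(A\pm\bi\Iop)$ is dense and that it is closed. For density, suppose $\langle\psi\mid(A+\bi\Iop)\phi\rangle=0$ for all $\phi\in\D(A)$; using part (d) of Proposition~\ref{lft_mul} and $\overline{\bi}=-\bi$ this rearranges to $\langle\psi\mid A\phi\rangle=\langle\bi\psi\mid\phi\rangle$ for all $\phi\in\D(A)$, which by the definition of the adjoint means $\psi\in\D(A^{\dagger})$ with $(A^{\dagger}-\bi\Iop)\psi=0$, hence $\psi=0$ by $(b)$; the operator $A-\bi\Iop$ is handled the same way, using $\ker(A^{\dagger}+\bi\Iop)=\{0\}$. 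For closedness, Lemma~\ref{neq-i} gives $\|(A\pm\bi\Iop)\phi\|^{2}=\|A\phi\|^{2}+\|\phi\|^{2}\geq\|\phi\|^{2}$, while $A\pm\bi\Iop$ is closed because $A$ is closed and $\bi\Iop$ is bounded (argue via the sequential characterisation of Proposition~\ref{copcr}); Proposition~\ref{prrpt} with $C=1$ then shows $\text{ran}(A\pm\bi\Iop)$ is closed. Since a dense closed subspace of $V_{\quat}^{R}$ equals $V_{\quat}^{R}$, we conclude $(c)$.

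For $(c)\Rightarrow(a)$: since $A$ is symmetric we have $A\subseteq A^{\dagger}$, so it suffices to show $\D(A^{\dagger})\subseteq\D(A)$. Given $\psi\in\D(A^{\dagger})$, use $\text{ran}(A+\bi\Iop)=V_{\quat}^{R}$ to pick $\phi\in\D(A)$ with $(A+\bi\Iop)\phi=(A^{\dagger}+\bi\Iop)\psi$; then, since $\phi\in\D(A)\subseteq\D(A^{\dagger})$ with $A^{\dagger}\phi=A\phi$, the vector $\chi:=\psi-\phi\in\D(A^{\dagger})$ satisfies $A^{\dagger}\chi=-\bi\chi$. For every $\eta\in\D(A)$, the adjoint identity and part (d) of Proposition~\ref{lft_mul} give $\langle\chi\mid(A-\bi\Iop)\eta\rangle=\langle A^{\dagger}\chi\mid\eta\rangle-\langle\overline{\bi}\chi\mid\eta\rangle=\langle-\bi\chi\mid\eta\rangle-\langle-\bi\chi\mid\eta\rangle=0$; since $\text{ran}(A-\bi\Iop)=V_{\quat}^{R}$, this forces $\chi=0$, i.e. $\psi=\phi\in\D(A)$, and hence $A=A^{\dagger}$.

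The step requiring the most care is the adjoint bookkeeping in $(b)\Rightarrow(c)$ and $(c)\Rightarrow(a)$: instead of invoking a formula such as $(A+B)^{\dagger}=A^{\dagger}+B^{\dagger}$ as a black box in the quaternionic setting, I would derive the identities relating $\ker(A^{\dagger}\mp\bi\Iop)$ to $\text{ran}(A\pm\bi\Iop)^{\bot}$ directly from the definition of $\D(A^{\dagger})$ and property (d) of Proposition~\ref{lft_mul}, keeping careful track of the order of scalar factors since $\bi$ does not commute with a general quaternion. Once that is done, the theorem is essentially an assembly of the preparatory results, with the decisive analytic input being the Pythagorean-type identity $\|(A\pm\bi\Iop)\phi\|^{2}=\|A\phi\|^{2}+\|\phi\|^{2}$ from Lemma~\ref{neq-i}.
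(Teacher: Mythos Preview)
Your proof is correct and follows essentially the same cyclic chain $(a)\Rightarrow(b)\Rightarrow(c)\Rightarrow(a)$ as the paper, using the same ingredients (Lemma~\ref{neq-i}, Proposition~\ref{ken}, Proposition~\ref{prrpt}, Proposition~\ref{cldop}). The only minor difference is that where the paper invokes Proposition~\ref{cldop}(c) directly to identify $\text{ran}(A\pm\bi\Iop)^{\bot}$ with $\ker(A^{\dagger}\mp\bi\Iop)$, you derive this orthogonality relation by hand from the definition of the adjoint and property~(d) of Proposition~\ref{lft_mul}; this is slightly more explicit but amounts to the same argument.
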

\begin{proof}
(a) $\Rightarrow$ (b): One can easily see that $A^{\dagger}=A$ is closed from Proposition \ref{cldop}. By the Proposition \ref{ken}, we have $\ker{(A^{\dagger}\pm \bi\Iop)}=\ker{(A\pm \bi\Iop)}=\{0\}$.\\
(b) $\Rightarrow$ (c): Now $\ker{(A^{\dagger}\pm \bi\Iop)}=\{0\}$ implies with the Proposition \ref{cldop} that $\text{ran}(A\mp \bi\Iop)^{\bot}=\{0\}$. But from the equality (\ref{neq1}), we get
$$\|(A\pm \bi\Iop)\phi\|=\sqrt{\|A\phi\|^{2}+\|\phi\|^{2}}\geq\|\phi\|$$
and this together with Proposition \ref{prrpt} gives that $\text{ran}(A\pm\bi\Iop)$ is closed. Thereby $\text{ran}(A\pm\bi\Iop)=V_{\quat}^{R}$.\\
(c) $\Rightarrow$ (a): Using (c) with the Proposition \ref{cldop}, it is easy to see that $\ker(A^{\dagger}\pm\bi\Iop)=\{0\}$. Let $\phi\in\D(A^{\dagger})$, then $$(A^{\dagger}\pm\bi\Iop)(\phi)=(A\pm\bi\Iop)(\psi),$$ for some $\psi\in\D(A)$ as $\text{ran}(A\pm\bi\Iop)=V_{\quat}^{R}$. Since $A$ is symmetric, that is, $\D(A)\subseteq\D(A^{\dagger})$, we have
$$(A^{\dagger}\pm\bi\Iop)(\phi-\psi)=0.$$
This show that $\phi=\psi\in\D(A)$. Hence $\D(A)=\D(A^{\dagger})$ and therefore $A$ is self-adjoint.
\end{proof}
\subsection{Spectrum of Unbounded $\quat$-linear Operators} For a given right $\quat$-linear operator $A:\D(A)\subseteq V_{\quat}^R\longrightarrow V_{\quat}^R$ and $\bfrakq\in\quat$, we define the operator $R_{\bfrakq}(A):\D(A^{2})\longrightarrow\quat$ by  $$R_{\bfrakq}(A)=A^{2}-2\text{Re}(\bfrakq)A+|\bfrakq|^{2}\Iop,$$
where $\bfrakq=q_{0}+\bi q_1 + \bj q_2 + \bk q_3$ is a quaternion, $\text{Re}(\bfrakq)=q_{0}$  and $|\bfrakq|^{2}=q_{0}^{2}+q_{1}^{2}+q_{2}^{2}+q_{3}^{2}.$
\begin{definition}\cite{Fab, Fab1,ghimorper}
Let $A:\D(A)\subseteq V_{\quat}^R\longrightarrow V_{\quat}^R$ be a right $\quat$-linear operator. The \textit{spherical resolvent} set of $A$ is the set $\rho_{S}(A)\,(\subset\quat)$ such that the three following conditions hold true:
\begin{itemize}
\item[(a)] $\ker(R_{\bfrakq}(A))=\{0\}$.
\item[(b)] $\text{ran}(R_{\bfrakq}(A))$ is dense in $V_{\quat}^{R}$.
\item[(c)] $R_{\bfrakq}(A)^{-1}:\text{ran}(R_{\bfrakq}(A))\longrightarrow\D(A^{2})$ is bounded.
\end{itemize}
The \textit{spherical spectrum} $\sigma_{S}(A)$ of $A$ is defined by setting $\sigma_{S}(A):=\quat\smallsetminus\rho_{S}(A)$. It decomposes into three disjoint subsets as follows:
\begin{itemize}
\item[(i)] the \textit{spherical point spectrum} of $A$: $$\sigma_{pS}(A):=\{\bfrakq\in\quat~\mid~\ker(R_{\bfrakq}(A))\ne\{0\}\}.$$
\item[(ii)] the \textit{spherical residual spectrum} of $A$: $$\sigma_{rS}(A):=\{\bfrakq\in\quat~\mid~\ker(R_{\bfrakq}(A))=\{0\},\overline{\text{ran}(R_{\bfrakq}(A))}\ne V_{\quat}^{R}~\}.$$
\item[(ii)] the \textit{spherical residual spectrum} of $A$: $$\sigma_{cS}(A):=\{\bfrakq\in\quat~\mid~\ker(R_{\bfrakq}(A))=\{0\},\overline{\text{ran}(R_{\bfrakq}(A))}= V_{\quat}^{R}, R_{\bfrakq}(A)^{-1}\notin\B(V_{\quat}^{R}) ~\}.$$
\end{itemize}
If $A\phi=\phi\bfrakq$ for some $\bfrakq\in\quat$ and $\phi\in V_{\quat}^{R}\smallsetminus\{0\}$, then $\phi$ is called an \textit{eigenvector of $A$ with eigenvalue} $\bfrakq$.
\end{definition}
\begin{proposition}
\label{Pr2}\cite{AC, ghimorper}Let $A\in\mathcal{L}(V_H^R)$ and $A$ be self-adjoint, then $\sigma_S(A)\subset\mathbb{R}$.
\end{proposition}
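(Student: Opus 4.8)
The plan is to show directly that every $\bfrakq\in\quat\smallsetminus\mathbb{R}$ lies in the spherical resolvent set $\rho_{S}(A)$, which immediately gives $\sigma_{S}(A)=\quat\smallsetminus\rho_{S}(A)\subseteq\mathbb{R}$. Fix such a $\bfrakq=q_{0}+\bi q_{1}+\bj q_{2}+\bk q_{3}$ and set $c:=|\bfrakq|^{2}-\text{Re}(\bfrakq)^{2}=q_{1}^{2}+q_{2}^{2}+q_{3}^{2}$, a \emph{strictly positive real number} precisely because $\bfrakq\notin\mathbb{R}$. The starting point is the algebraic identity
$$R_{\bfrakq}(A)=A^{2}-2q_{0}A+(q_{0}^{2}+c)\,\Iop=(A-q_{0}\Iop)^{2}+c\,\Iop,$$
valid on $\D(A^{2})$ (on all of $V_{\quat}^{R}$ when $A$ is bounded), which uses only that the real scalar $q_{0}$ commutes with $A$. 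Since $A=A^{\dagger}$ and $q_{0}\in\mathbb{R}$, a short computation with (\ref{sc_mul_aj-op}) gives $(A-q_{0}\Iop)^{\dagger}=A-q_{0}\Iop$, so for every $\phi\in\D(A^{2})$, moving real scalars across the inner product via Proposition \ref{lft_mul}(e),
$$\langle\phi\mid R_{\bfrakq}(A)\phi\rangle=\langle(A-q_{0}\Iop)\phi\mid(A-q_{0}\Iop)\phi\rangle+c\,\langle\phi\mid\phi\rangle=\|(A-q_{0}\Iop)\phi\|^{2}+c\|\phi\|^{2}.$$
In particular $\langle\phi\mid R_{\bfrakq}(A)\phi\rangle$ is real and $\langle\phi\mid R_{\bfrakq}(A)\phi\rangle\geq c\|\phi\|^{2}$.

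Next I would convert this into an a priori lower bound. By the Cauchy--Schwarz inequality on $V_{\quat}^{R}$,
$$c\|\phi\|^{2}\leq\langle\phi\mid R_{\bfrakq}(A)\phi\rangle=\bigl|\langle\phi\mid R_{\bfrakq}(A)\phi\rangle\bigr|\leq\|\phi\|\,\|R_{\bfrakq}(A)\phi\|,$$
whence $\|R_{\bfrakq}(A)\phi\|\geq c\|\phi\|$ for all $\phi\in\D(A^{2})$. This single estimate does most of the work: it forces $\ker(R_{\bfrakq}(A))=\{0\}$, which is condition (a); and it shows $R_{\bfrakq}(A)^{-1}$ is bounded, with norm at most $1/c$, on $\text{ran}(R_{\bfrakq}(A))$, which is condition (c). When $A$ is bounded, $R_{\bfrakq}(A)$ is closed, so Proposition \ref{prrpt} even gives that $\text{ran}(R_{\bfrakq}(A))$ is closed.

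For condition (b), the density of $\text{ran}(R_{\bfrakq}(A))$, I would use self-adjointness of $R_{\bfrakq}(A)$ itself: since $A$ is self-adjoint so is $A^{2}$, hence $R_{\bfrakq}(A)=A^{2}-2q_{0}A+(q_{0}^{2}+c)\Iop$ is self-adjoint; then Proposition \ref{cldop}(c) yields $\text{ran}(R_{\bfrakq}(A))^{\bot}=\ker(R_{\bfrakq}(A)^{\dagger})=\ker(R_{\bfrakq}(A))=\{0\}$, so $\overline{\text{ran}(R_{\bfrakq}(A))}=V_{\quat}^{R}$. With (a), (b), (c) all verified, $\bfrakq\in\rho_{S}(A)$; as $\bfrakq$ was an arbitrary non-real quaternion, $\sigma_{S}(A)\subseteq\mathbb{R}$.

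The step I expect to need the most care is the quaternionic bookkeeping in the central computation: checking that $(A-q_{0}\Iop)^{\dagger}=A-q_{0}\Iop$ and $c\,\langle\phi\mid\phi\rangle=c\|\phi\|^{2}$ are legitimate because $c$ and $q_{0}$ are \emph{real} scalars — this is exactly where (\ref{sc_mul_aj-op}) and Proposition \ref{lft_mul}(e) are needed — together with the self-adjointness of $A^{2}$, which is routine for bounded $A$ but in the unbounded case requires the quaternionic spectral theorem. Everything else is formally identical to the classical proof that a self-adjoint operator has real spectrum.
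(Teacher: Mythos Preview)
The paper does not supply its own proof of this proposition; it is quoted from the cited references \cite{AC, ghimorper}. Your argument is correct and is the standard one: complete the square over the reals, $R_{\bfrakq}(A)=(A-q_{0}\Iop)^{2}+c\,\Iop$ with $c=q_{1}^{2}+q_{2}^{2}+q_{3}^{2}>0$, derive the coercivity estimate $\|R_{\bfrakq}(A)\phi\|\geq c\|\phi\|$, and use self-adjointness of $R_{\bfrakq}(A)$ to get density of the range via Proposition \ref{cldop}(c).

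It is worth noting how this compares with the closely related computation the paper \emph{does} carry out later, inside the proof of Theorem \ref{S-spec}. There the estimate $\|R_{\bfrakq}(A)\phi\|\geq c\|\phi\|$ is obtained via the quaternionic factorization $R_{\bfrakq}(A)=(A-\bfrakq\Iop)(A-\overline{\bfrakq}\Iop)$ of Proposition \ref{reso_R} together with the norm identity (\ref{GenVon-neq}), both of which depend on the left-multiplication structure and the anti-symmetry hypotheses on $\bi A,\bj A,\bk A$. Your route uses only \emph{real} scalars and the symmetry of $A$, so it bypasses left multiplication entirely; this is more elementary and applies without the extra hypotheses. Your closing caveat is apt: in the unbounded case the step ``$A$ self-adjoint $\Rightarrow$ $A^{2}$ (hence $R_{\bfrakq}(A)$) self-adjoint'' is not automatic from the algebra of adjoints and does require either the quaternionic spectral theorem or an equivalent argument, which is precisely what the cited references provide.
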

\section{Deficiency Indices}
\subsection{Von Neumann Theorem and Some Preliminary}
The following theorem is crucial to the development of the manuscript and it involves left multiple of vectors and operators. In order to enhance clarity we give a detailed proof.
\begin{theorem}\label{Von}(Von Neumann). Let $A:\D(A)\subseteq V_{\quat}^R\longrightarrow V_{\quat}^R$ be a densely defined closed right $\quat$-linear operator  with the property that $\bi\phi\in \D(A)$, for all $\phi\in \D(A)$. If the operator $A$ is symmetric and $\bi A$ is anti-symmetric, then
\be
\D(A^{\dagger})=\D(A)\oplus\ker(A^{\dagger}-\bi\Iop)\oplus\ker(A^{\dagger}+\bi\Iop).
\label{Von_eq}
\en
\end{theorem}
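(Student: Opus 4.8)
The plan is to follow the classical proof of von Neumann's decomposition, with the complex imaginary unit replaced by \emph{left} multiplication by $\bi$ and with $\D(A^{\dagger})$, equipped with the graph inner product, playing the role of the ambient Hilbert space. The one genuinely new ingredient is a commutation statement, which I would establish first: $\bi\psi\in\D(A^{\dagger})$ and $A^{\dagger}(\bi\psi)=\bi(A^{\dagger}\psi)$ for every $\psi\in\D(A^{\dagger})$. For $\psi\in\D(A^{\dagger})$ and $\phi\in\D(A)$ one slides $\bi$ across the inner product using (d) of Proposition \ref{lft_mul}, invokes $\bi A=A\bi$ together with $\bi\phi\in\D(A)$ (Lemma \ref{neq-i} and the hypothesis), and uses the definition of the adjoint, to obtain
$$\langle\bi\psi\mid A\phi\rangle=\langle\psi\mid-\bi(A\phi)\rangle=\langle\psi\mid-A(\bi\phi)\rangle=-\langle A^{\dagger}\psi\mid\bi\phi\rangle=\langle\bi(A^{\dagger}\psi)\mid\phi\rangle .$$
Since this holds for all $\phi\in\D(A)$, the defining property of $A^{\dagger}$ yields $\bi\psi\in\D(A^{\dagger})$ with $A^{\dagger}(\bi\psi)=\bi(A^{\dagger}\psi)$; consequently $A^{\dagger}-\bi\Iop$ and $A^{\dagger}+\bi\Iop$ commute and their product, in either order, equals $(A^{\dagger})^{2}+\Iop=R_{\bi}(A^{\dagger})$ on $\D((A^{\dagger})^{2})$.

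Next I would set up the graph inner product $\langle\alpha\mid\beta\rangle_{A^{\dagger}}:=\langle\alpha\mid\beta\rangle+\langle A^{\dagger}\alpha\mid A^{\dagger}\beta\rangle$ on $\D(A^{\dagger})$. Right $\quat$-linearity of $A^{\dagger}$ makes this a right $\quat$-inner product, and since $A^{\dagger}$ is closed (Proposition \ref{cldop}(a)) the pair $(\D(A^{\dagger}),\langle\cdot\mid\cdot\rangle_{A^{\dagger}})$ is a right quaternionic Hilbert space. Because $A\subset A^{\dagger}$, the norm $\|\cdot\|_{A^{\dagger}}$ restricts on $\D(A)$ to the graph norm of $A$, and $A$ being closed makes $\D(A)$ complete, hence a closed subspace of $\D(A^{\dagger})$. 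The projection theorem for quaternionic Hilbert spaces then gives $\D(A^{\dagger})=\D(A)\oplus\D(A)^{\perp}$, with the complement taken for $\langle\cdot\mid\cdot\rangle_{A^{\dagger}}$.

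I would then identify the complement. A vector $\psi\in\D(A^{\dagger})$ is orthogonal to $\D(A)$ iff $\langle A\phi\mid A^{\dagger}\psi\rangle=\langle\phi\mid-\psi\rangle$ for all $\phi\in\D(A)$, which by the definition of $A^{\dagger}$ says precisely that $A^{\dagger}\psi\in\D(A^{\dagger})$ with $A^{\dagger}(A^{\dagger}\psi)=-\psi$; thus $\D(A)^{\perp}=\ker\bigl((A^{\dagger})^{2}+\Iop\bigr)=\ker R_{\bi}(A^{\dagger})$. Finally I split this kernel: given $\psi$ with $(A^{\dagger})^{2}\psi=-\psi$, put $\phi_{\pm}:=\tfrac12\bigl(\psi\mp\bi(A^{\dagger}\psi)\bigr)$, which lie in $\D(A^{\dagger})$ by the first step and satisfy $\psi=\phi_{+}+\phi_{-}$; a short computation using $A^{\dagger}(\bi(A^{\dagger}\psi))=\bi((A^{\dagger})^{2}\psi)=-\bi\psi$ and $\bi(\bi v)=-v$ gives $A^{\dagger}\phi_{+}=\bi\phi_{+}$ and $A^{\dagger}\phi_{-}=-\bi\phi_{-}$, so $\phi_{\pm}\in\ker(A^{\dagger}\mp\bi\Iop)$. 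The sum is direct: if $\chi\in\ker(A^{\dagger}-\bi\Iop)\cap\ker(A^{\dagger}+\bi\Iop)$ then $\bi\chi=-\bi\chi$, whence $\chi=0$ by (b) of Proposition \ref{lft_mul}; and one more use of (d) of Proposition \ref{lft_mul} shows $\langle\phi_{+}\mid\phi_{-}\rangle_{A^{\dagger}}=0$. Assembling the last three steps yields (\ref{Von_eq}), the sum being orthogonal for the graph inner product.

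The main obstacle is the first step. In a right quaternionic Hilbert space the left-multiplication operator $\bi\Iop$ is not right $\quat$-linear and has no a priori reason to preserve $\D(A^{\dagger})$, so the whole argument hinges on the inner-product bookkeeping that pushes $\bi$ through the adjoint — concretely, the interplay of property (d) of Proposition \ref{lft_mul} with the anti-symmetry of $\bi A$ (equivalently $\bi A=A\bi$). Everything after that — the graph inner product, the projection theorem, and the algebraic splitting of the kernel — parallels the complex case almost verbatim.
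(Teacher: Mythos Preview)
Your argument is correct and genuinely different from the paper's. The paper never passes to the graph inner product on $\D(A^{\dagger})$; it works entirely in $V_{\quat}^{R}$, using the closed-range estimate $\|(A\pm\bi\Iop)\phi\|\ge\|\phi\|$ and the orthogonal decomposition $V_{\quat}^{R}=\text{ran}(A-\bi\Iop)\oplus\ker(A^{\dagger}+\bi\Iop)$ to write each $\phi\in\D(A^{\dagger})$ as $\phi_{0}+\phi_{+}+\phi_{-}$. For the directness of the sum the paper runs a rather lengthy sequence argument: from $\phi_{0}+\phi_{+}+\phi_{-}=0$ it obtains $(A-\bi\Iop)\phi_{0}=2\bi\phi_{+}$, then approximates $\phi_{+}$ by $\phi_{n}\in\D(A)$, checks closedness of $A^{\dagger}+\bi\Iop$, and tracks $(A^{\dagger}+\bi\Iop)(2\bi\phi_{n})$ through a chain of identities to force $2\bi\phi_{+}\in\text{ran}(A-\bi\Iop)\cap\text{ran}(A-\bi\Iop)^{\bot}$. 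Your route via $\D(A)^{\perp}=\ker\bigl((A^{\dagger})^{2}+\Iop\bigr)$ and the algebraic splitting $\phi_{\pm}=\tfrac12(\psi\mp\bi A^{\dagger}\psi)$ is considerably shorter and yields orthogonality of the three summands for the graph inner product, a stronger statement than the paper's algebraic direct sum. The price you pay is the preliminary commutation lemma $A^{\dagger}(\bi\psi)=\bi(A^{\dagger}\psi)$ on all of $\D(A^{\dagger})$, which the paper never states (it only uses $\bi A=A\bi$ on $\D(A)$ and handles the adjoint by approximation).

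One small correction to your closing remarks: by (a) of Proposition~\ref{lft_mul}, left multiplication satisfies $\bi(\phi\bfrakp)=(\bi\phi)\bfrakp$, so $\bi\Iop$ \emph{is} right $\quat$-linear. The genuine issue is not linearity but whether $\bi\Iop$ preserves $\D(A^{\dagger})$, and your first-step computation settles exactly that.
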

\begin{proof}
Let $\D_{\pm}:=\ker(A^{\dagger}\mp\bi\Iop)$. Now from the equality (\ref{neq1}), we get
$$\|(A\pm \bi\Iop)\phi\|=\sqrt{\|A\phi\|^{2}+\|\phi\|^{2}}\geq\|\phi\|$$
using this together with Proposition \ref{prrpt} we get $\text{ran}(A\pm\bi\Iop)$ is closed. From (c) in the Proposition \ref{cldop}, we have  $\text{ran}(A\pm\bi\Iop)^{\bot}=\ker(A^{\dagger}\mp\bi\Iop)$. To prove that $\D(A)$ and $\D_{\pm}$ are linearly independent subspaces, take $(\phi_{0},\phi_{\pm})\in\D(A)\times\D_{\mp}$ such that $\phi_{0}+\phi_{+}+\phi_{-}=0$. Then, as $A$ is symmetric, we have
\begin{eqnarray*}
(A-\bi\Iop)\phi_{0}&=&-[(A^{\dagger}-\bi\Iop)\phi_{-}+(A^{\dagger}-\bi\Iop)\phi_{+}]\\
&=&-[(A^{\dagger}-\bi\Iop)\phi_{-}+(A^{\dagger}+\bi\Iop)\phi_{+}-2(\bi\Iop)\phi_{+}]\\
&=&2\bi\phi_{+}\mbox{~~as~~}\phi_{\pm}\in\ker(A^{\dagger}\pm\bi\Iop).
\end{eqnarray*}
Now $\phi_{+}\in\ker(A^{\dagger}+\bi\Iop)\subseteq\D(A^{\dagger})\subseteq V_{\quat}^R=\overline{\D(A)}$. That is, there exists a sequence $\{\phi_{n}\}$ in $\D(A)$ such that $\phi_{n}\longrightarrow\phi_+$ as $n\longrightarrow\infty$. From this, for each $k\in N$, $\varphi_{k}\bi\langle \varphi_{k}\mid \phi_{n}\rangle\longrightarrow\varphi_{k}\bi\langle \varphi_{k}\mid \phi_{+}\rangle$ as $n\longrightarrow\infty$ and so
$$2\bi\phi_n =2\sum_{k\in N}\varphi_{k}\bi\langle \varphi_{k}\mid \phi_{n}\rangle\longrightarrow2\sum_{k\in N}\varphi_{k}\bi\langle \varphi_{k}\mid \phi_{+}\rangle=2\bi\phi_+ $$
as $n\longrightarrow\infty$. Now for each $n\in\mathbb{N}$, we have
\begin{eqnarray*}
(A^{\dagger}+\bi\Iop)(2\bi\phi_n)&=&A^{\dagger}(2\bi\phi_n)+\bi\Iop(2\bi\phi_n)\\
&=& 2[A^{\dagger}(\bi\phi_n)+\bi(\bi\phi_n)]\\
&=& 2[A^{\dagger}(\bi\phi_n)-\phi_n]\mbox{~~by (c) in Proposition \ref{lft_mul}}\\
&=& 2[(A^{\dagger}\bi)\phi_n-\phi_n]\mbox{~~by \ref{rgt_mul-op}}\\
&=& 2[(\overline{\bi} A)^{\dagger}\phi_n-\phi_n]\mbox{~~by \ref{sc_mul_aj-op}}\\
&=& 2[-(\bi A)^{\dagger}\phi_n-\phi_n]\\
&=& 2[(\bi A)\phi_n-\phi_n] \mbox{~~as $\bi A$ is anti-symmetric}\\
&=& 2[\bi (A\phi_n)+\bi((\bi\Iop)\phi_n)] \mbox{~~~by \ref{lft_mul-op}}\\
&=& 2\bi[(A\phi_n)+((\bi\Iop)\phi_n)]\mbox{~~by (a) in Proposition \ref{lft_mul}}\\
&=& 2\bi[(A+\bi\Iop)\phi_n].
\end{eqnarray*}
That is, since  $\{\phi_{n}\}\subseteq\D(A)$ and $A$ symmetric, we have for each $n\in\mathbb{N}$, $$(A^{\dagger}+\bi\Iop)(2\bi\phi_n)=2\bi[(A^{\dagger}+\bi\Iop)\phi_n].$$ Now let us show that $(A^{\dagger}+\bi\Iop)$ is a closed operator. For, let $\{\xi_{n}\}$ be a sequence in $\D(A^{\dagger})$ such that $\xi_{n}\longrightarrow\xi$ as $n\longrightarrow\infty$ with $(A^{\dagger}+\bi\Iop)\xi_{n}\longrightarrow\eta$ as $n\longrightarrow\infty$. Then  $(\bi\Iop)\xi_{n}=\bi\xi_{n}\longrightarrow\bi\xi=(\bi\Iop)\xi$ as $n\longrightarrow\infty$ and so
$$\|A^{\dagger}\xi_{n}-(\eta-\bi\xi)\|\leq\|(A^{\dagger}+\bi\Iop)\xi_{n}-\eta\|
+\|(\bi\Iop)\xi_{n}-(\bi\Iop)\xi\|\longrightarrow 0$$
 as $n\longrightarrow\infty$. That is,$A^{\dagger}\xi_{n}\longrightarrow(\eta-\bi\xi)$  as $n\longrightarrow\infty$. But we know $A^{\dagger}$ is closed. Thus $A^{\dagger}\xi=\eta-\bi\xi$ or equivalently $(A^{\dagger}+\bi\Iop)\xi=\eta$. Therefore, by Proposition \ref{copcr}, $(A^{\dagger}+\bi\Iop)$ is closed.

Assume that $(A^{\dagger}+\bi\Iop)\phi_n\longrightarrow\psi$ as $n\longrightarrow\infty$, for some $\psi\in V_{\quat}^R$. But $\phi_{n}\longrightarrow\phi_+$ as $n\longrightarrow\infty$. Thus $ \psi=(A^{\dagger}+\bi\Iop)\phi_+=0$ as $(A^{\dagger}+\bi\Iop)$ is a closed operator. That is, $(A^{\dagger}+\bi\Iop)\phi_n\longrightarrow 0$ as $n\longrightarrow\infty$. From this one can easily see that $2\bi[(A^{\dagger}+\bi\Iop)\phi_n]\longrightarrow 0$ as $n\longrightarrow\infty$. Therefore we have $(A^{\dagger}+\bi\Iop)(2\bi\phi_n)\longrightarrow 0$ as $n\longrightarrow\infty$ and $2\bi\phi_n \longrightarrow 2\bi\phi_+ $ as $n\longrightarrow\infty$. Thus $(A^{\dagger}+\bi\Iop)(2\bi\phi_+)=0$ as $(A^{\dagger}+\bi\Iop)$ is closed. That is,
$$2\bi\phi_+\in\ker(A^{\dagger}+\bi\Iop)=\text{ran}(A-\bi\Iop)^{\bot}.$$
Then $2\bi\phi_{+}\in\text{ran}(A-\bi\Iop)\cap\text{ran}(A-\bi\Iop)^{\bot}=\{0\}.$ That is, $$2\bi\phi_{+}=2\sum_{k\in N}\varphi_{k}\bi\langle \varphi_{k}\mid \phi_{+}\rangle=0.$$ Using the orthonormality of Hilbert basis, we can obtain that, for any $j\in N$,  $\langle \varphi_{j}\mid \phi_{+}\rangle=0$ and it suffices to say that $\phi_{+}=0$.
In a similar fashion we can prove $\phi_{-}=0$, and so $\phi_{0}=0$. One can trivially see that $$\D(A)\oplus\ker(A^{\dagger}-\bi\Iop)\oplus\ker(A^{\dagger}+\bi\Iop)\subseteq\D(A^{\dagger}).$$
On the other hand, take $\phi\in\D(A^{\dagger})$. Since $(A^{\dagger}-\bi\Iop)\phi\in V_{\quat}^{R}$ and $$V_{\quat}^{R}=\text{ran}(A-\bi\Iop)\oplus\text{ran}(A-\bi\Iop)^{\bot}=\text{ran}(A-\bi\Iop)\oplus\ker(A^{\dagger}+\bi\Iop)=\text{ran}(A-\bi\Iop)\oplus\D_{-},$$
we have there exists $(\theta,\omega)\in\text{ran}(A-\bi\Iop)\times\ker(A^{\dagger}+\bi\Iop)$ such that $(A^{\dagger}-\bi\Iop)\phi=\theta+\omega$. Now there exists $\phi_{0}\in\D(A)$ such that $\theta=(A-\bi\Iop)\phi_{0}$ and choose $\displaystyle\phi_+=\frac{1}{2}\bi\,\omega$. Then $\phi_{+}\in\D_- =\ker(A^{\dagger}+\bi\Iop)$ and $\displaystyle\bi\phi_+ =\bi\left(\frac{1}{2}\bi\,\omega\right)=-\frac{1}{2}\omega$. That is, $(\phi_{0},-2\bi\phi_{+})\in\D(A)\times\D_{-}$ such that
\begin{equation}\label{eqq1}
(A^{\dagger}-\bi\Iop)\phi=(A-\bi\Iop)\phi_{0}+(-2\bi\phi_{+}).
\end{equation}
Thus
\begin{eqnarray*}(A^{\dagger}-\bi\Iop)(\phi-\phi_{0}-\phi_{+})
&=& (A^{\dagger}-\bi\Iop)\phi-(A^{\dagger}-\bi\Iop)\phi_{0}-(A^{\dagger}-
\bi\Iop)\phi_+\\
&=& (A^{\dagger}-\bi\Iop)\phi-(A^{\dagger}-\bi\Iop)\phi_{0}-(A^{\dagger}+
\bi\Iop)\phi_+ +2\bi\Iop\phi_+\\
&=&(A^{\dagger}-\bi\Iop)\phi-(A-\bi\Iop)\phi_{0}+2\bi\phi_{+}\mbox{~~as~~}(A^{\dagger}+\bi\Iop)\phi_{+}=0\\
&=&0\mbox{~~by (\ref{eqq1})}.
\end{eqnarray*}
If we take $\phi_{-}=\phi-\phi_{0}-\phi_{+}\in\D_{+}$
 then the result follows.
\end{proof}
\begin{proposition}
	\label{neq1-i}
	Let $A:\D(A)\subseteq V_{\quat}^R\longrightarrow V_{\quat}^R$ be a densely defined right $\quat$-linear operator with the property that $\bi\phi\in \D(A)$, for all $\phi\in \D(A)$. If the operator $\bi A$ is anti-symmetric, then
	\be
	\|(A\pm \bi \lambda\Iop)\phi\|^{2}=\|A\phi\|^{2}+\mid \lambda\mid^{2}\|\phi\|^{2},
	\label{neq1-1}
	\en
	for all $\phi\in\D(A)$ and $\lambda\in\mathbb{R}$.
\end{proposition}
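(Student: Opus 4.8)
The plan is to imitate, verbatim, the computation carried out in the proof of Lemma \ref{neq-i}, now carrying the real parameter $\lambda$ along. The only new ingredient is that $\lambda\in\mathbb{R}$ is central in $\quat$ and, by part (e) of Proposition \ref{lft_mul}, commutes with vectors, so $(\bi\lambda\Iop)\phi=(\lambda\bi)\phi=\lambda(\bi\phi)=(\bi\phi)\lambda$ and the scalar may be slid freely across the inner product (with $\overline{\lambda}=\lambda$). First I would expand, using property (ii) of the inner product,
\[
\|(A\pm\bi\lambda\Iop)\phi\|^{2}=\langle A\phi\mid A\phi\rangle\pm\langle A\phi\mid(\bi\lambda\Iop)\phi\rangle\pm\langle(\bi\lambda\Iop)\phi\mid A\phi\rangle+\langle(\bi\lambda\Iop)\phi\mid(\bi\lambda\Iop)\phi\rangle.
\]

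For the two cross terms, properties (iv) and (v) give $\langle A\phi\mid(\bi\phi)\lambda\rangle=\lambda\langle A\phi\mid\bi\phi\rangle$ and $\langle(\bi\phi)\lambda\mid A\phi\rangle=\overline{\lambda}\langle\bi\phi\mid A\phi\rangle=\lambda\langle\bi\phi\mid A\phi\rangle$. The chain of equalities established in the proof of Lemma \ref{neq-i} — which uses only part (d) of Proposition \ref{lft_mul}, the identity (\ref{sc_mul_aj-op}), and the anti-symmetry of $\bi A$, none of which involves $\lambda$ — shows $\langle A\phi\mid\bi\phi\rangle=-\langle\bi\phi\mid A\phi\rangle$. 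Hence the sum of the two cross terms is $\pm\lambda\big(\langle A\phi\mid\bi\phi\rangle+\langle\bi\phi\mid A\phi\rangle\big)=0$. For the last term, $\langle(\bi\lambda\Iop)\phi\mid(\bi\lambda\Iop)\phi\rangle=\|(\bi\phi)\lambda\|^{2}=|\lambda|^{2}\|\bi\phi\|^{2}=|\lambda|^{2}\|\phi\|^{2}$ by properties (iv),(v) together with part (b) of Proposition \ref{lft_mul} and $|\bi|=1$. Assembling the three pieces yields (\ref{neq1-1}).

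There is no genuine obstacle here: this is just a parametrised version of Lemma \ref{neq-i}, and the one point requiring care is bookkeeping the position of the real scalar $\lambda$ relative to the right-linear structure and the conjugation convention in the inner product, which is entirely handled by Proposition \ref{lft_mul}(e) and properties (iv)--(v). Alternatively, one may avoid recomputing anything: for $\lambda\neq 0$, apply Lemma \ref{neq-i} to the operator $|\lambda|^{-1}A$ (whose left product with $\bi$ is again anti-symmetric, since scaling by a nonzero real preserves anti-symmetry by (\ref{sc_mul_aj-op})), obtaining $\||\lambda|^{-1}A\phi\pm\bi\phi\|^{2}=|\lambda|^{-2}\|A\phi\|^{2}+\|\phi\|^{2}$, then multiply through by $|\lambda|^{2}$ and note $|\lambda|\big(|\lambda|^{-1}A\pm\bi\Iop\big)\phi=(A\pm\bi|\lambda|\Iop)\phi$; the sign of $\lambda$ is absorbed into the $\pm$, and $\lambda=0$ is trivial.
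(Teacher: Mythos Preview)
Your proposal is correct. The alternative you sketch at the end --- rescaling $A$ by a nonzero real and invoking Lemma \ref{neq-i} directly --- is exactly the paper's proof (the paper rescales by $\lambda^{-1}$ rather than $|\lambda|^{-1}$, but for real $\lambda$ the sign is harmless and disappears under $|\cdot|^{2}$). Your primary route, the direct expansion carrying $\lambda$ through the computation of Lemma \ref{neq-i}, is a slightly longer but equally valid alternative; it has the minor advantage of not needing to check separately that $\bi(\lambda^{-1}A)$ inherits anti-symmetry, since the cross-term cancellation $\langle A\phi\mid\bi\phi\rangle=-\langle\bi\phi\mid A\phi\rangle$ is reused verbatim and $\lambda$ only enters as a real factor outside it.
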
\label{Gen-neq}
\begin{proof}
	If $\lambda\ne0$ and we replace the operator $A$ in (\ref{neq1}) by $\displaystyle\frac{1}{\lambda}\,A$, then
	$$\left\|\left(\left[ \frac{1}{\lambda}\,A\right] \pm \bi\Iop\right)\phi\right\|^{2}=\left\|\left[ \frac{1}{\lambda}\,A\right] \phi\right\|^{2}+\left\|\phi\right\|^{2}.$$
By (e) in Proposition \ref{lft_mul}, we have $\displaystyle\left[ \frac{1}{\lambda}\,A\right]\phi=\frac{1}{\lambda}\left[A\phi\right]=\left[A\phi\right]\frac{1}{\lambda}$. Thus it is easy to see that
$$\frac{1}{\mid\lambda\mid^{2}}\left\|\left(A \pm \bi\lambda\Iop\right)\phi\right\|^{2}=\frac{1}{\mid\lambda\mid^{2}}\left\|A \phi\right\|^{2}+\left\|\phi\right\|^{2}.$$ This concludes the results.
\end{proof}
\begin{theorem}\label{G-Von}
	Let $A:\D(A)\subseteq V_{\quat}^R\longrightarrow V_{\quat}^R$ be a densely defined closed right $\quat$-linear operator  with the property that $\bi\phi\in \D(A)$, for all $\phi\in \D(A)$. If the operator $A$ is symmetric and $\bi A$ is anti-symmetric, then
	\be
	\D(A^{\dagger})=\D(A)\oplus\ker(A^{\dagger}-\bi\lambda\Iop)
	\oplus\ker(A^{\dagger}+\bi\lambda\Iop),
	\label{GVon}
	\en where $\lambda\in\mathbb{R}\smallsetminus\{0\}$.
\end{theorem}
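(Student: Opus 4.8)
The plan is to deduce Theorem~\ref{G-Von} from Theorem~\ref{Von} by the rescaling argument already used for Proposition~\ref{neq1-i}. Fix $\lambda\in\mathbb{R}\smallsetminus\{0\}$ and set $B:=\frac{1}{\lambda}A$, viewed as a right $\quat$-linear operator with $\D(B)=\D(A)$; since $\frac1\lambda$ is real, left and right multiplication by it coincide by (e) of Proposition~\ref{lft_mul}, so $B$ is an honest scalar multiple of $A$. The first task is to check that $B$ inherits every hypothesis of Theorem~\ref{Von}. It is densely defined, having the same domain as $A$. It is closed: multiplication by the nonzero real $\frac1\lambda$ is a homeomorphism of $V_{\quat}^R$ (by (b) of Proposition~\ref{lft_mul}), so for a sequence $\{\phi_n\}\subseteq\D(A)$ the conditions $\phi_n\to\phi$, $B\phi_n\to\psi$ are equivalent to $\phi_n\to\phi$, $A\phi_n\to\lambda\psi$, whence $A\phi=\lambda\psi$ and $B\phi=\psi$ by Proposition~\ref{copcr}. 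Finally $\bi\phi\in\D(B)=\D(A)$ for all $\phi\in\D(B)$ by assumption on $A$.

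Next I would record the adjoint identities for $B$. Applying (\ref{sc_mul_aj-op}) with $\bfrakq=\frac1\lambda$, so that $\overline{\bfrakq}=\frac1\lambda$, gives $B^{\dagger}=\left(\frac1\lambda A\right)^{\dagger}=A^{\dagger}\frac1\lambda=\frac1\lambda A^{\dagger}$, and in particular $\D(B^{\dagger})=\D(A^{\dagger})$. Since $A$ is symmetric, $A\subset A^{\dagger}$ yields $B=\frac1\lambda A\subset\frac1\lambda A^{\dagger}=B^{\dagger}$, so $B$ is symmetric. Likewise $\bi B=\frac1\lambda(\bi A)$, and by (\ref{sc_mul_aj-op}) again $(\bi B)^{\dagger}=(\bi A)^{\dagger}\frac1\lambda=-\frac1\lambda(\bi A)=-(\bi B)$, so $\bi B$ is anti-symmetric. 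Hence Theorem~\ref{Von} applies to $B$ and gives
$$\D(B^{\dagger})=\D(B)\oplus\ker(B^{\dagger}-\bi\Iop)\oplus\ker(B^{\dagger}+\bi\Iop).$$

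It remains to translate this back to $A$. We already have $\D(B^{\dagger})=\D(A^{\dagger})$ and $\D(B)=\D(A)$. For the deficiency spaces, $\phi\in\ker(B^{\dagger}\mp\bi\Iop)$ means $\frac1\lambda A^{\dagger}\phi=\pm\bi\phi$, equivalently $A^{\dagger}\phi=\pm\lambda\bi\phi=\pm\bi\lambda\phi$ since the real $\lambda$ commutes with $\bi$, i.e.\ $\phi\in\ker(A^{\dagger}\mp\bi\lambda\Iop)$; thus $\ker(B^{\dagger}\mp\bi\Iop)=\ker(A^{\dagger}\mp\bi\lambda\Iop)$. Substituting these three identities into the displayed decomposition of $\D(B^{\dagger})$ produces exactly (\ref{GVon}). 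I do not expect a genuine obstacle here; the only point requiring care is the bookkeeping with left scalar multiplication — ensuring that $B$ really is the scalar multiple $\frac1\lambda A$ (where (e) of Proposition~\ref{lft_mul} is essential) and invoking (\ref{sc_mul_aj-op}) with the correct conjugate, which is immediate because $\frac1\lambda$ is real.
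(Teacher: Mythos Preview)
Your proof is correct and follows exactly the paper's approach: apply Theorem~\ref{Von} to the rescaled operator $\frac{1}{\lambda}A$ and translate the resulting decomposition back via the identities $\D\bigl(\frac{1}{\lambda}A\bigr)=\D(A)$, $\D\bigl(\frac{1}{\lambda}A^{\dagger}\bigr)=\D(A^{\dagger})$, and $\ker\bigl(\frac{1}{\lambda}A^{\dagger}\pm\bi\Iop\bigr)=\ker(A^{\dagger}\pm\bi\lambda\Iop)$. Your write-up is simply more explicit than the paper's about verifying the hypotheses (closedness, symmetry, anti-symmetry of $\bi B$) for the rescaled operator.
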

\begin{proof}
 Because $\lambda\in\mathbb{R}\smallsetminus\{0\}$ and the fact that $\D\left(\displaystyle\frac{1}{\lambda}\,A^{\dagger}\right)=\D(A^{\dagger})$, $\D\left(\displaystyle\frac{1}{\lambda}\,A\right)=\D(A)$ and $\ker\left(\displaystyle\frac{1}{\lambda}\,A^{\dagger}\pm\bi\Iop\right)=\ker(A^{\dagger}\pm\bi\lambda\Iop)$, (\ref{GVon}) follows by replacing  $A$ by $\left[ \displaystyle\frac{1}{\lambda}\,A\right] $ in Theorem \ref{Von}.
\end{proof}
\begin{remark}
In the statements of Lemma \ref{neq-i}, Propositions \ref{ken} and \ref{neq1-i}, and Theorems \ref{csadj_i}, \ref{Von} and \ref{G-Von}, if we replace $\bi$ by $\bj$ or $\bk$ the validation of these statements will not be affected.
\end{remark}
\begin{remark}
	Using (\ref{lft_mul-op}) and (\ref{sc_mul_aj-op}), for any $\bfrakq\in\quat$ we can obtain that $(\bfrakq \Iop)^{\dagger}=(\overline{\bfrakq} \Iop)$.
	For, let $\phi\in V_{\quat}^{R}$, then
\begin{eqnarray*}
(\bfrakq\Iop)^{\dagger}\phi&=&[(\Iop)^{\dagger}\overline{\bfrakq}]\phi
=\Iop(\overline{\bfrakq}\phi)=\overline{\bfrakq}\phi=\sum_{k\in N}\varphi_{k}\overline{\bfrakq}\langle \varphi_{k}\mid \phi\rangle\\
&=&\sum_{k\in N}\varphi_k\overline{\bfrakq}\langle\varphi_k\mid\Iop\phi\rangle=\overline{\bfrakq}(\Iop\phi) =(\overline{\bfrakq} \Iop)\phi.
\end{eqnarray*}
	That is, $(\bfrakq \Iop)^{\dagger}=(\overline{\bfrakq} \Iop)$. Similarly we can obtain that $(\Iop\bfrakq )^{\dagger}=(\Iop\overline{\bfrakq} )$.
\end{remark}
\begin{proposition}\label{Gen-Von-neq}
	Let $A:\D(A)\subseteq V_{\quat}^R\longrightarrow V_{\quat}^R$ be a densely defined right $\quat$-linear operator with the property that $\bi\phi,\bj\phi,\bk\phi\in \D(A)$, for all $\phi\in \D(A)$. If the operators $\bi A$, $\bj A$ and $\bk A$ are anti-symmetric, then
	\be
	\|(A-\bfrakq\Iop)\phi\|^{2}=\|(A-q_0\Iop)\phi\|^{2}+
	(q_1^{2} +  q_2^{2} + q_3^{2})\|\phi\|^{2},
	\label{GenVon-neq}
	\en for all $\phi\in\D(A)$, where  $\bfrakq = q_0 + \bi q_1 + \bj q_2 + \bk q_3\in\quat$.
\end{proposition}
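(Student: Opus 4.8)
The plan is to expand $\|(A - \bfrakq\Iop)\phi\|^{2}$ by bi-additivity of the quaternionic inner product and to show that, after splitting $\bfrakq = q_{0} + \bfrakp$ with $\bfrakp = \bi q_{1} + \bj q_{2} + \bk q_{3}$, the mixed term between $A$ and the purely imaginary part $\bfrakp$ vanishes; this is the exact analogue of the computation in Lemma \ref{neq-i}, now carried out with $\bfrakp$ in place of $\bi$. Note that $\overline{\bfrakp} = -\bfrakp$, that $|\bfrakp|^{2} = q_{1}^{2} + q_{2}^{2} + q_{3}^{2}$, and that $(\bfrakq\Iop)\phi = q_{0}\phi + \bfrakp\phi$ by (\ref{lft_mul-op}).

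The first step is to record that $\bfrakp A = q_{1}(\bi A) + q_{2}(\bj A) + q_{3}(\bk A)$ as right $\quat$-linear operators on $\D(A)$, which follows from additivity of the left scalar action (item (2) of the Remark following Proposition \ref{lft_mul}) and centrality of real scalars (part (e) of Proposition \ref{lft_mul}). Since $\bi A$, $\bj A$, $\bk A$ are anti-symmetric and $q_{1}, q_{2}, q_{3}$ are real, this gives $\langle (\bfrakp A)\phi \mid \psi\rangle = -\langle \phi \mid (\bfrakp A)\psi\rangle$ for all $\phi, \psi \in \D(A)$; specialising to $\psi = \phi$ and invoking property (i) of the inner product, $\langle (\bfrakp A)\phi \mid \phi\rangle$ equals minus its own quaternionic conjugate, hence has vanishing real part.

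Next I would expand
\[
\|(A - \bfrakq\Iop)\phi\|^{2} = \|A\phi\|^{2} - \langle A\phi \mid \bfrakq\phi\rangle - \langle \bfrakq\phi \mid A\phi\rangle + |\bfrakq|^{2}\|\phi\|^{2},
\]
the last term coming from part (b) of Proposition \ref{lft_mul}. Writing $\bfrakq\phi = q_{0}\phi + \bfrakp\phi$ and using the inner-product axioms together with part (e) of Proposition \ref{lft_mul} (so that the real scalar $q_{0}$ may be pulled out), the two cross terms become $q_{0}\langle A\phi \mid \phi\rangle + \langle A\phi \mid \bfrakp\phi\rangle$ and $q_{0}\langle \phi \mid A\phi\rangle + \langle \bfrakp\phi \mid A\phi\rangle$. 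By property (d) of Proposition \ref{lft_mul} with $\overline{\bfrakp} = -\bfrakp$ one has $\langle A\phi \mid \bfrakp\phi\rangle = \langle \overline{\bfrakp}(A\phi) \mid \phi\rangle = -\langle (\bfrakp A)\phi \mid \phi\rangle$, which by the previous step has zero real part; as $\langle \bfrakp\phi \mid A\phi\rangle$ is its conjugate, the two $\bfrakp$-cross terms cancel. Finally, substituting $|\bfrakq|^{2} = q_{0}^{2} + q_{1}^{2} + q_{2}^{2} + q_{3}^{2}$, the surviving $q_{0}$-terms reassemble into $\|(A - q_{0}\Iop)\phi\|^{2} = \|A\phi\|^{2} - q_{0}\langle A\phi \mid \phi\rangle - q_{0}\langle \phi \mid A\phi\rangle + q_{0}^{2}\|\phi\|^{2}$, and one is left with exactly $(q_{1}^{2} + q_{2}^{2} + q_{3}^{2})\|\phi\|^{2}$, as asserted.

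I expect the only delicate point to be the non-commutative bookkeeping: one must check that the real coefficients $q_{1}, q_{2}, q_{3}$ may legitimately be moved through the $\quat$-valued inner product (this is precisely what properties (iv), (v) and part (e) of Proposition \ref{lft_mul} guarantee) and that the decomposition $\bfrakp A = q_{1}(\bi A) + q_{2}(\bj A) + q_{3}(\bk A)$ transfers anti-symmetry term by term. Apart from that, the computation is the one-imaginary-unit argument of Lemma \ref{neq-i} run simultaneously over $\bi, \bj, \bk$, and, as in the remark preceding this proposition, it is unaffected by permuting the three imaginary units.
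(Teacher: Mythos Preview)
Your argument is correct and follows essentially the same approach as the paper's proof: both expand the squared norm and use the anti-symmetry of $\bi A$, $\bj A$, $\bk A$ (via property (d) of Proposition \ref{lft_mul}) to kill the cross terms involving the purely imaginary part $\bfrakp = \bi q_{1}+\bj q_{2}+\bk q_{3}$. The only cosmetic difference is the grouping: the paper first passes to $A-q_{0}\Iop$, checks that $\bi(A-q_{0}\Iop)$, $\bj(A-q_{0}\Iop)$, $\bk(A-q_{0}\Iop)$ remain anti-symmetric, and then expands $\|(A-q_{0}\Iop)\phi-\bfrakp\phi\|^{2}$, whereas you keep $A$ and reassemble the $q_{0}$-terms at the end; the computations are equivalent.
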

\begin{proof}
	Let $\phi\in\D(A)$ and $\bfrakq = q_0 + \bi q_1 + \bj q_2 + \bk q_3\in\quat$.
	Since $\bi A$, $\bj A$ and $\bk A$ are anti-symmetric,  $\bi (A-q_0\Iop)$, $\bj (A-q_0\Iop)$ and $\bk (A-q_0\Iop)$ are anti-symmetric operators. For, using $(\bi A)^{\dagger}=-(\bi A)$ and $(\bi \Iop)^{\dagger}=-(\bi \Iop)$, we can easily see that $(\bi q_{0}\Iop)^{\dagger}=-\bi q_{0}\Iop$ as $q_{0}\in\mathbb{R}$, and
	$$[\bi (A-q_0\Iop)]^{\dagger}=[\bi A-\bi q_{0}\Iop]^{\dagger}=(\bi A)^{\dagger}-(\bi q_{0}\Iop)^{\dagger}=-(\bi A)+\bi q_{0}\Iop=-\bi (A-q_0\Iop).$$
	Similarly $[\bj (A-q_0\Iop)]^{\dagger}=-\bj (A-q_0\Iop)$ and $[\bk (A-q_0\Iop)]^{\dagger}=-\bk (A-q_0\Iop)$ on $\D(A)$.
	 Then for any $\phi\in\D(A)$, we have, using (\ref{sc_mul_aj-op}) and (d) in Proposition \ref{lft_mul},
\begin{eqnarray*}
\langle (A-q_0\Iop)\phi\mid (\bi\Iop)\phi \rangle
&=&\langle (A-q_0\Iop)\phi\mid \bi\phi \rangle=\langle \overline{\bi}((A-q_0\Iop)\phi)\mid \phi \rangle\\
&=&\langle -(\bi (A-q_0\Iop))\phi\mid \phi \rangle=\langle (\bi (A-q_0\Iop))^{\dagger}\phi\mid \phi \rangle\\
&=&\langle ((A-q_0\Iop)^{\dagger}\overline{\bi})\phi\mid \phi \rangle=\langle (A-q_0\Iop)^{\dagger}(\overline{\bi}\phi)\mid \phi \rangle\\
&=&\langle \overline{\bi}\phi\mid (A-q_0\Iop) \phi \rangle=-\langle \bi\phi\mid (A-q_0\Iop) \phi \rangle\\
&=&-\langle (\bi\Iop)\phi\mid (A-q_0\Iop)\phi \rangle.
\end{eqnarray*}
	Similarly, $\langle (A-q_0\Iop)\phi\mid \bj q_{1}\phi \rangle=-\langle \bj q_{1}\phi\mid (A-q_0\Iop) \phi \rangle$ and $\langle (A-q_0\Iop)\phi\mid \bk q_{1}\phi \rangle=-\langle \bk q_{1}\phi\mid (A-q_0\Iop) \phi \rangle$.
	Keeping these in mind, from a direct calculation, we can obtain
	\begin{eqnarray*}
		\|(A-\bfrakq\Iop)\phi\|^{2}
		&=&\langle(A-\bfrakq\Iop)\phi\mid(A-\bfrakq\Iop)\phi\rangle\\
		&=&\langle(A-q_0\Iop)\phi-(\bi q_1 + \bj q_2 + \bk q_3)\phi\mid(A-q_0\Iop)\phi-(\bi q_1 + \bj q_2 + \bk q_3)\phi\rangle\\
		&=&\|(A-q_0\Iop)\phi\|^{2}+
		(q_1^{2} +  q_2^{2} + q_3^{2})\|\phi\|^{2}.
	\end{eqnarray*}
	Hence the result follows.
\end{proof}
We can have the following Proposition as a direct consequence of the above Proposition.
\begin{proposition}\label{ken_Gen}
	Let $A:\D(A)\subseteq V_{\quat}^R\longrightarrow V_{\quat}^R$ be a densely defined right $\quat$-linear operator with the property that $\bi\phi,\bj\phi,\bk\phi\in \D(A)$, for all $\phi\in \D(A)$. If the operators $\bi A$, $\bj A$ and $\bk A$ are anti-symmetric and $\bfrakq = q_0 + \bi q_1 + \bj q_2 + \bk q_3 \in\quat$ with $q_1^{2} +  q_2^{2} + q_3^{2}\neq 0$, then  $$\ker{(A- \bfrakq\Iop)}=\{0\}~~\mbox{~and~}~\ker{(A- \overline{\bfrakq}\Iop)}=\{0\}.$$
\end{proposition}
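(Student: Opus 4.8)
The plan is to obtain this statement as an immediate corollary of the norm identity in Proposition \ref{Gen-Von-neq}. The hypotheses assumed here are exactly those required there (namely $\bi\phi,\bj\phi,\bk\phi\in\D(A)$ for all $\phi\in\D(A)$, and $\bi A$, $\bj A$, $\bk A$ anti-symmetric), so for every $\phi\in\D(A)$ we may write
\[
\|(A-\bfrakq\Iop)\phi\|^{2}=\|(A-q_0\Iop)\phi\|^{2}+(q_1^{2}+q_2^{2}+q_3^{2})\|\phi\|^{2}.
\]
First I would handle $\ker(A-\bfrakq\Iop)$: if $\phi$ lies in it, the left-hand side above vanishes; since both terms on the right are nonnegative and the coefficient $q_1^{2}+q_2^{2}+q_3^{2}$ is strictly positive by the standing assumption $q_1^{2}+q_2^{2}+q_3^{2}\neq 0$, we are forced to conclude $\|\phi\|=0$, i.e. $\phi=0$.

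For $\ker(A-\overline{\bfrakq}\Iop)$ I would simply observe that $\overline{\bfrakq}=q_0+\bi(-q_1)+\bj(-q_2)+\bk(-q_3)$ has the same real part $q_0$ and the same sum of squares of imaginary components, namely $(-q_1)^{2}+(-q_2)^{2}+(-q_3)^{2}=q_1^{2}+q_2^{2}+q_3^{2}$. Hence Proposition \ref{Gen-Von-neq} applies verbatim with $\overline{\bfrakq}$ in place of $\bfrakq$, giving the identical identity with $A-\bfrakq\Iop$ replaced by $A-\overline{\bfrakq}\Iop$, and the argument of the previous paragraph then yields $\ker(A-\overline{\bfrakq}\Iop)=\{0\}$.

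I do not anticipate any genuine obstacle: this is a one-line deduction from the preceding proposition, of the same flavor as the passage from Lemma \ref{neq-i} to Proposition \ref{ken}. The only point worth a (trivial) verification is that replacing $\bfrakq$ by $\overline{\bfrakq}$ keeps us within the scope of Proposition \ref{Gen-Von-neq}, which it does since only the squares of the imaginary parts enter its conclusion.
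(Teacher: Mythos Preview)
Your proposal is correct and follows essentially the same route as the paper's own proof: both apply the norm identity of Proposition~\ref{Gen-Von-neq} (equation~(\ref{GenVon-neq})) to force $\|\phi\|=0$ from $(A-\bfrakq\Iop)\phi=0$ via the strict positivity of $q_1^{2}+q_2^{2}+q_3^{2}$, and then handle $\overline{\bfrakq}$ by the obvious substitution. There is nothing to add.
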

\begin{proof}
Let $\phi\in\ker{(A- \bfrakq\Iop)}$. Then by (\ref{GenVon-neq}), we have
$$0=\|(A- \bfrakq\Iop)\phi\|^{2}=\|(A-q_0\Iop)\phi\|^{2}+(q_1^{2} +  q_2^{2} + q_3^{2})\|\phi\|^{2}.$$
This implies that $\phi=0$ as $q_1^{2} +  q_2^{2} + q_3^{2}\neq 0$. That is, $\ker{(A- \bfrakq\Iop)}=\{0\}$. Likewise, we can easily obtain that $\ker{(A- \overline{\bfrakq}\Iop)}=\{0\}$ by replacing $\bfrakq$ by $\overline{\bfrakq}$ in (\ref{GenVon-neq}).
\end{proof}
\begin{proposition}\label{qA}
Let $A:\D(A)\subseteq V_{\quat}^R\longrightarrow V_{\quat}^R$ be a densely defined right $\quat$-linear symmetric operator with the property that  $\bfrakq\phi\in \D(A)$, for all $\phi\in \D(A)$ and $\bfrakq \in\quat$.
\begin{itemize}
\item [(a)] If $\bi A$, $\bj A$ and $\bk A$ are anti-symmetric, then $(\bfrakq A)^{\dagger}=\overline{\bfrakq}A$ and $\bfrakq A=A\bfrakq$.
\item [(b)] If $\bfrakq A$ is anti-symmetric, then
$$\|(A- \overline{\bfrakq}\Iop)\phi\|^{2}=\|A\phi\|^{2}+|\bfrakq|^{2}\|\phi\|^{2}$$
for all $\phi\in\D(A)$.
\item [(c)] If $\overline{\bfrakq} A$ is anti-symmetric, then
$$\|(A-\bfrakq\Iop)\phi\|^{2}=\|A\phi\|^{2}+|\bfrakq|^{2}\|\phi\|^{2}$$
for all $\phi\in\D(A)$.
\end{itemize}
\end{proposition}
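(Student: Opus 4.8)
The plan is to obtain all three parts from results already in hand, with (a) carrying the technical weight. For (a), first note that the standing hypotheses on $A$ (densely defined, symmetric, $\bi\phi,\bj\phi,\bk\phi\in\D(A)$ whenever $\phi\in\D(A)$) together with the assumption that $\bi A,\bj A,\bk A$ are anti-symmetric are exactly the hypotheses of Lemma~\ref{neq-i} for $\bi$ and, via the Remark following Theorem~\ref{G-Von}, for $\bj$ and $\bk$; hence $\bi A=A\bi$, $\bj A=A\bj$, $\bk A=A\bk$ on $\D(A)$. Writing $\bfrakq=q_{0}+\bi q_{1}+\bj q_{2}+\bk q_{3}$ with $q_{j}\in\mathbb{R}$, I would then expand $(\bfrakq A)\phi=\bfrakq(A\phi)$ termwise via Remark~(2) after Proposition~\ref{lft_mul}, push each real scalar $q_{j}$ across $A$ using Proposition~\ref{lft_mul}(e) and right $\quat$-linearity, and turn each remaining summand $(\bi q_{1})(A\phi)$, etc.\ into $A((\bi q_{1})\phi)$ by combining $\bi A=A\bi,\dots$ with Proposition~\ref{lft_mul}(a)--(c) and the definitions (\ref{lft_mul-op}), (\ref{rgt_mul-op}). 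Re-collecting via right $\quat$-linearity and Remark~(2) gives $(\bfrakq A)\phi=A(\bfrakq\phi)=(A\bfrakq)\phi$, and the identical computation run for $\overline{\bfrakq}$ gives $\overline{\bfrakq}A=A\overline{\bfrakq}$ on $\D(A)$. Finally, since $\D(A)$ is dense and $\bfrakq\phi\in\D(A)$ for all $\phi\in\D(A)$, (\ref{sc_mul_aj-op}) yields $(\bfrakq A)^{\dagger}=A^{\dagger}\overline{\bfrakq}$; for $\phi\in\D(A)$ we have $\overline{\bfrakq}\phi\in\D(A)$, so $A^{\dagger}(\overline{\bfrakq}\phi)=A(\overline{\bfrakq}\phi)$ by symmetry of $A$, and this equals $(\overline{\bfrakq}A)\phi$ by the previous step, giving $(\bfrakq A)^{\dagger}=\overline{\bfrakq}A$ on $\D(A)$ (the identity is to be read on $\D(A)$, as elsewhere in the paper: the full domains agree only when $A$ is self-adjoint).

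For (b), fix $\phi\in\D(A)$, observe $(A-\overline{\bfrakq}\Iop)\phi=A\phi-\overline{\bfrakq}\phi$, and expand
\[
\|(A-\overline{\bfrakq}\Iop)\phi\|^{2}=\|A\phi\|^{2}-\langle A\phi\mid\overline{\bfrakq}\phi\rangle-\langle\overline{\bfrakq}\phi\mid A\phi\rangle+\|\overline{\bfrakq}\phi\|^{2}.
\]
Proposition~\ref{lft_mul}(b) turns the last term into $|\bfrakq|^{2}\|\phi\|^{2}$, and Proposition~\ref{lft_mul}(d) (with conjugation where needed) rewrites the two cross terms as $\langle(\bfrakq A)\phi\mid\phi\rangle$ and $\langle\phi\mid(\bfrakq A)\phi\rangle$. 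Since $\bfrakq A$ is anti-symmetric, $\phi\in\D((\bfrakq A)^{\dagger})$ with $(\bfrakq A)^{\dagger}\phi=-(\bfrakq A)\phi$, so $\langle\phi\mid(\bfrakq A)\phi\rangle=\langle(\bfrakq A)^{\dagger}\phi\mid\phi\rangle=-\langle(\bfrakq A)\phi\mid\phi\rangle$; the two cross terms cancel, leaving the asserted identity. Part (c) is then immediate: apply (b) with $\bfrakq$ replaced by $\overline{\bfrakq}$ (legitimate since $\overline{\bfrakq}A$ is assumed anti-symmetric), which reads $\|(A-\bfrakq\Iop)\phi\|^{2}=\|A\phi\|^{2}+|\bfrakq|^{2}\|\phi\|^{2}$.

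The one point demanding care is the bookkeeping in (a): left scalar multiplication on $V_{\quat}^{R}$ is basis-dependent and violates several naive identities, so each interchange of $\bfrakq$ or a real $q_{j}$ with $A$ must be justified by the exact clause of Proposition~\ref{lft_mul} and the definitions (\ref{lft_mul-op})--(\ref{rgt_mul-op}); in particular one must keep straight that ``commuting $A$ past $\bfrakq$'' here means the operator identity $\bfrakq A=A\bfrakq$ of maps with common domain $\D(A)$, not an identity for the action of quaternions on individual vectors. Parts (b) and (c) are routine expansions once the cancellation produced by anti-symmetry is isolated, exactly in the spirit of the proof of Lemma~\ref{neq-i}.
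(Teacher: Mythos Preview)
Your proposal is correct and, for parts (b) and (c), essentially identical to the paper's argument: expand the norm, convert the cross terms via Proposition~\ref{lft_mul}(d) into $\langle(\bfrakq A)\phi\mid\phi\rangle$ and its conjugate, and cancel them using anti-symmetry of $\bfrakq A$; part (c) then follows by the substitution $\bfrakq\mapsto\overline{\bfrakq}$, exactly as the paper does.

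For part (a) your route differs slightly in organization. The paper computes $(\bfrakq A)^{\dagger}\phi=(A^{\dagger}\overline{\bfrakq})\phi=A^{\dagger}(\overline{\bfrakq}\phi)$ directly, expands $\overline{\bfrakq}\phi$ in the Hilbert basis, splits off the real coefficients $q_{j}$, applies anti-symmetry of $\bi A,\bj A,\bk A$ and symmetry of $A$ term by term, and re-collects to obtain $(\overline{\bfrakq}A)\phi$; only afterwards does it deduce $\bfrakq A=(\overline{\bfrakq}A)^{\dagger}=A^{\dagger}\bfrakq=A\bfrakq$. You instead invoke the already-proved identity (\ref{iA}) from Lemma~\ref{neq-i} (and its $\bj,\bk$ analogues) to get $\bi A=A\bi$, $\bj A=A\bj$, $\bk A=A\bk$ first, assemble $\bfrakq A=A\bfrakq$ by linearity, and then read off $(\bfrakq A)^{\dagger}=A^{\dagger}\overline{\bfrakq}=A\overline{\bfrakq}=\overline{\bfrakq}A$ on $\D(A)$. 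The ingredients are the same and the logic is sound; your version is a bit tidier in that it reuses Lemma~\ref{neq-i} rather than reproducing the basis computation, while the paper's direct expansion makes the dependence on the chosen Hilbert basis and Proposition~\ref{lft_mul} more visible.
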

\begin{proof}
Let $\phi\in\D(A)$ and $\bfrakq=q_0 + \bi q_1 + \bj q_2 + \bk q_3 \in\quat$. First note that, since $\bi\phi,\bj\phi,\bk\phi\in \D(A)$, for all $\phi\in \D(A)$ and $\D(A)$ is a linear subspace of $V_{\quat}^R$, we have  $\bfrakq\phi,\overline{\bfrakq}\phi\in \D(A)$, for all $\phi\in \D(A)$.
Then using (\ref{sc_mul_aj-op}), we may have
\begin{eqnarray*}
(\bfrakq A)^{\dagger}\phi&=&
(A^{\dagger}\overline{\bfrakq})\phi=A^{\dagger}(\overline{\bfrakq}\phi)\\
&=&A^{\dagger}\left(\sum_{k\in N}\varphi_{k}\overline{\bfrakq}\langle \varphi_{k}\mid \phi\rangle\right)\\
&=&A^{\dagger}\left(\sum_{k\in N}\varphi_{k}(q_0 - \bi q_1 - \bj q_2 - \bk q_3)\langle \varphi_{k}\mid \phi\rangle\right)\\
&=&A^{\dagger}\left(\sum_{k\in N}\varphi_{k}\langle \varphi_{k}\mid \phi\rangle\right)q_0-A^{\dagger}\left(\sum_{k\in N}\varphi_{k}\bi \langle \varphi_{k}\mid \phi\rangle\right)q_1\\ &~&-A^{\dagger}\left(\sum_{k\in N}\varphi_{k}\bj \langle \varphi_{k}\mid \phi\rangle\right)q_2-A^{\dagger}\left(\sum_{k\in N}\varphi_{k}\bk \langle \varphi_{k}\mid \phi\rangle\right)q_3\\
&=&A^{\dagger}(\phi)q_0-A^{\dagger}(\bi\phi)q_1-A^{\dagger}(\bj\phi)q_2-A^{\dagger}(\bk\phi)q_3\\
&=&A^{\dagger}(\phi)q_0+[(A^{\dagger}\overline{\bi})\phi]q_1+[(A^{\dagger}\overline{\bj})\phi]q_2+[(A^{\dagger}\overline{\bk})\phi]q_3\\
&=&A^{\dagger}(\phi)q_0+[(\bi A)^{\dagger}\phi]q_1+[(\bj A)^{\dagger}\phi]q_2+[(\bk A)^{\dagger}\phi]q_3\\
&=&A(\phi)q_0-[(\bi A)\phi]q_1-[(\bj A)\phi]q_2-[(\bk A)\phi]q_3\\
&=&\left(\sum_{k\in N}\varphi_{k}\langle \varphi_{k}\mid A\phi\rangle\right)q_0-\left(\sum_{k\in N}\varphi_{k}\bi \langle \varphi_{k}\mid A\phi\rangle\right)q_1\\ &~&-\left(\sum_{k\in N}\varphi_{k}\bj \langle \varphi_{k}\mid A\phi\rangle\right)q_2-\left(\sum_{k\in N}\varphi_{k}\bk \langle \varphi_{k}\mid A\phi\rangle\right)q_3\\
&=&\left(\sum_{k\in N}\varphi_{k}(q_0 - \bi q_1 - \bj q_2 - \bk q_3)\langle \varphi_{k}\mid A\phi\rangle\right)\\
&=&\left(\sum_{k\in N}\varphi_{k}\overline{\bfrakq}\langle \varphi_{k}\mid A\phi\rangle\right)=(\overline{\bfrakq}A)\phi\\
\end{eqnarray*}
That is, $(\bfrakq A)^{\dagger}=\overline{\bfrakq}A$ on $\D(A)$. Thus, using (\ref{sc_mul_aj-op}), we can obtain that, on $\D(A)$, $\bfrakq A=(\overline{\bfrakq} A)^{\dagger}=A^{\dagger}\bfrakq=A\bfrakq$ as $A$ is symmetric. Hence the result (a) follows.
\begin{eqnarray*}
\|(A-\overline{\bfrakq}\Iop)\phi\|^{2}
&=&\langle(A-\overline{\bfrakq}\Iop)\phi\mid\langle(A-\overline{\bfrakq}\Iop)\phi\rangle\rangle\\
&=&\langle A\phi\mid A\phi \rangle-\langle A\phi\mid (\overline{\bfrakq}\Iop)\phi \rangle-\langle (\overline{\bfrakq}\Iop)\phi\mid A\phi \rangle+\langle (\overline{\bfrakq}\Iop)\phi\mid (\overline{\bfrakq}\Iop)\phi \rangle.
\end{eqnarray*}
For any $\phi\in\D(A)$, one can obtain that,
\begin{eqnarray*}
\langle A\phi\mid (\overline{\bfrakq}\Iop)\phi \rangle
&=&\langle A\phi\mid \overline{\bfrakq}\phi \rangle=\langle \bfrakq(A\phi)\mid \phi \rangle~\mbox{~by (d) in Proposition \ref{lft_mul}}\\
&=&\langle (\bfrakq A)\phi\mid \phi \rangle=\langle -(\bfrakq A)^{\dagger}\phi\mid \phi \rangle~\mbox{~as $\bfrakq A$ is anti-symmetric}\\
&=&-\langle (A^{\dagger}\overline{\bfrakq})\phi\mid \phi \rangle=-\langle A^{\dagger}(\overline{\bfrakq}\phi)\mid \phi \rangle~\mbox{~by \ref{sc_mul_aj-op}}\\
&=&-\langle \overline{\bfrakq}\phi\mid A \phi \rangle=-\langle (\overline{\bfrakq}\Iop)\phi\mid A\phi \rangle.
\end{eqnarray*}
Therefore, $$\|(A-\overline{\bfrakq}\Iop)\phi\|^{2}=\langle A\phi\mid A\phi \rangle+\langle (\overline{\bfrakq}\Iop)\phi\mid (\overline{\bfrakq}\Iop)\phi \rangle.$$
Thus result (b) follows. If we replace $\bfrakq$ by $\overline{\bfrakq}$ in the statement of (b), then we have (c). This completes the proof.
\end{proof}
The following Proposition generalizes the Theorem \ref{csadj_i}.
\begin{proposition}\label{csadj-gen}
Let $A:\D(A)\subseteq V_{\quat}^R\longrightarrow V_{\quat}^R$ be a densely defined symmetric right $\quat$-linear operator  with the property that $\bi\phi,\bj\phi,\bk\phi\in \D(A)$, for all $\phi\in \D(A)$. If the operators $\bi A$, $\bj A$ and $\bk A$ are anti-symmetric, then for any $\bfrakq = q_0 + \bi q_1 + \bj q_2 + \bk q_3\in\quat$ with $q_1^{2} +  q_2^{2} + q_3^{2}\neq 0$, the following statements are equivalent:
\begin{itemize}
\item[(a)] $A$ is self-adjoint.
\item[(b)] $A$ is closed and $\ker(A^{\dagger}-\bfrakq\Iop)=\{0\}$ and $\ker(A^{\dagger}-\overline{\bfrakq}\Iop)=\{0\}$.
\item[(c)] $\text{ran}(A-\bfrakq\Iop)=\text{ran}(A-\overline{\bfrakq}\Iop)=V_{\quat}^{R}$.
\end{itemize}
\end{proposition}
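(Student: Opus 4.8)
The plan is to mimic, essentially verbatim, the three-step cycle used to prove Theorem \ref{csadj_i}, with $\bi\Iop$ replaced by $\bfrakq\Iop$ and $\overline{\bfrakq}\Iop$ and with the norm identity $(\ref{neq1})$ replaced by the decomposition identity $(\ref{GenVon-neq})$ of Proposition \ref{Gen-Von-neq}. Two preliminary remarks set the stage. First, $\D(A)$ is a linear subspace containing $\bi\phi,\bj\phi,\bk\phi$ for each $\phi\in\D(A)$, hence it is invariant under left multiplication by any quaternion; so $A-\bfrakq\Iop$ and $A-\overline{\bfrakq}\Iop$ have the dense domain $\D(A)$, are closed whenever $A$ is (being bounded perturbations of $A$), and have adjoints $(A-\bfrakq\Iop)^{\dagger}=A^{\dagger}-\overline{\bfrakq}\Iop$ and $(A-\overline{\bfrakq}\Iop)^{\dagger}=A^{\dagger}-\bfrakq\Iop$, using the identity $(\bfrakq\Iop)^{\dagger}=\overline{\bfrakq}\Iop$ recorded in the Remark above. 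Second, the coercivity estimate must come from $(\ref{GenVon-neq})$: for all $\phi\in\D(A)$,
\be
\|(A-\bfrakq\Iop)\phi\|^{2}=\|(A-q_0\Iop)\phi\|^{2}+(q_1^{2}+q_2^{2}+q_3^{2})\|\phi\|^{2}\geq(q_1^{2}+q_2^{2}+q_3^{2})\|\phi\|^{2},
\en
and the same bound holds with $\overline{\bfrakq}$ in place of $\bfrakq$ since the imaginary part of $\overline{\bfrakq}$ has the same square $q_1^{2}+q_2^{2}+q_3^{2}$; here the hypothesis $q_1^{2}+q_2^{2}+q_3^{2}\neq0$ is precisely what makes the constant strictly positive.

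For the implication (a) $\Rightarrow$ (b): if $A=A^{\dagger}$ then $A$ is closed by Proposition \ref{cldop}(a), and Proposition \ref{ken_Gen} gives $\ker(A^{\dagger}-\bfrakq\Iop)=\ker(A-\bfrakq\Iop)=\{0\}$ and likewise for $\overline{\bfrakq}$. For (c) $\Rightarrow$ (a): the surjectivity of $A-\bfrakq\Iop$ and $A-\overline{\bfrakq}\Iop$ forces, via Proposition \ref{cldop}(c) and the adjoint identities above, $\ker(A^{\dagger}-\bfrakq\Iop)=\ker(A^{\dagger}-\overline{\bfrakq}\Iop)=\{0\}$; then, given $\phi\in\D(A^{\dagger})$, one picks $\psi\in\D(A)$ with $(A-\bfrakq\Iop)\psi=(A^{\dagger}-\bfrakq\Iop)\phi$, and since $A\subseteq A^{\dagger}$ this yields $(A^{\dagger}-\bfrakq\Iop)(\phi-\psi)=0$, hence $\phi=\psi\in\D(A)$; thus $\D(A^{\dagger})=\D(A)$ and $A=A^{\dagger}$.

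The substantive step is (b) $\Rightarrow$ (c). From $\ker(A^{\dagger}-\overline{\bfrakq}\Iop)=\{0\}$ together with $(A-\bfrakq\Iop)^{\dagger}=A^{\dagger}-\overline{\bfrakq}\Iop$, Proposition \ref{cldop}(c) gives $\text{ran}(A-\bfrakq\Iop)^{\bot}=\{0\}$, so $\text{ran}(A-\bfrakq\Iop)$ is dense; the coercivity estimate above together with the closedness of $A-\bfrakq\Iop$ and Proposition \ref{prrpt} shows $\text{ran}(A-\bfrakq\Iop)$ is closed; a dense closed subspace equals all of $V_{\quat}^{R}$. Symmetrically, $\ker(A^{\dagger}-\bfrakq\Iop)=\{0\}$ yields $\text{ran}(A-\overline{\bfrakq}\Iop)=V_{\quat}^{R}$.

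The point at which the generalization is more than cosmetic — and the step I expect to need the most care — is obtaining the coercivity bound used in (b) $\Rightarrow$ (c). One cannot invoke parts (b) and (c) of Proposition \ref{qA} for it, because $\bfrakq A$ fails to be anti-symmetric as soon as $q_0\neq0$: by Proposition \ref{qA}(a) one has $(\bfrakq A)^{\dagger}=\overline{\bfrakq}A$, which equals $-\bfrakq A$ only when $q_0=0$. The estimate must instead be read off from the $q_0$-shifted identity $(\ref{GenVon-neq})$, and it is exactly the assumption $q_1^{2}+q_2^{2}+q_3^{2}\neq0$ that keeps that estimate non-degenerate and the whole equivalence cycle running.
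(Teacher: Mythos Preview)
Your proof is correct and follows essentially the same three-step cycle as the paper: (a)$\Rightarrow$(b) via Proposition~\ref{ken_Gen}, (b)$\Rightarrow$(c) via the coercivity estimate~(\ref{GenVon-neq}) combined with Proposition~\ref{prrpt}, and (c)$\Rightarrow$(a) by the standard $\phi-\psi$ argument using injectivity of $A^{\dagger}-\bfrakq\Iop$. Your preliminary remarks --- that $\D(A)$ is invariant under left multiplication by any quaternion, that $A-\bfrakq\Iop$ is closed whenever $A$ is, and the explicit adjoint identity $(A-\bfrakq\Iop)^{\dagger}=A^{\dagger}-\overline{\bfrakq}\Iop$ --- make explicit points the paper leaves implicit, and your closing observation about why Proposition~\ref{qA}(b),(c) cannot be used directly when $q_0\neq0$ is a nice clarification, but none of this changes the route.
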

\begin{proof}
(a) $\Rightarrow$ (b): Since $A^{\dagger}=A$ is closed and Proposition \ref{ken_Gen}, (b) follows.\\
(b) $\Rightarrow$ (c): Now $\ker{(A^{\dagger}- \bfrakq\Iop)}=\{0\}$ implies with the Proposition \ref{cldop} that $\text{ran}(A- \overline{\bfrakq}\Iop)^{\bot}=\{0\}$. But from the equality (\ref{GenVon-neq})	$$\|(A-\overline{\bfrakq}\Iop)\phi\|=\sqrt{\|(A-q_0\Iop)\phi\|^{2}+
(q_1^{2} +  q_2^{2} + q_3^{2})\|\phi\|^{2}}\geq\sqrt{(q_1^{2} +  q_2^{2} + q_3^{2})}\|\phi\|,$$ this together with Proposition \ref{prrpt} implies that $\text{ran}(A-\overline{\bfrakq}\Iop)$ is closed. Thereby $\text{ran}(A-\overline{\bfrakq}\Iop)=V_{\quat}^{R}$. Likewise $\text{ran}(A-\bfrakq\Iop)=V_{\quat}^{R}$. \\
 (c) $\Rightarrow$ (a): Let $\xi\in\ker(A^{\dagger}-\bfrakq\Iop)$, then by (c) in Proposition \ref{cldop} we have $\xi\in\text{ran}(A-\overline{\bfrakq}\Iop)^{\bot}$. Using (c), one can directly say that $\xi=0$ and so $\ker(A^{\dagger}-\bfrakq\Iop)=\{0\}$. Similarly we can get that $\ker(A^{\dagger}-\overline{\bfrakq}\Iop)=\{0\}$. But, for any $\phi\in\D(A^{\dagger})$, there exists $\psi\in\D(A)$, such that $(A^{\dagger}-\bfrakq\Iop)\phi=(A-\bfrakq\Iop)\psi$ as $\text{ran}(A-\bfrakq\Iop)=V_{\quat}^{R}$. Thus we have $(A^{\dagger}-\bfrakq\Iop)(\phi-\psi)=0$, and so $\phi=\psi\in\D(A)$.This completes the proof.
\end{proof}
The following theorem provides a generalization to the decomposition (\ref{Von_eq}) .
\begin{theorem}\label{Gen-Von}
Let $A:\D(A)\subseteq V_{\quat}^R\longrightarrow V_{\quat}^R$ be a densely defined closed right $\quat$-linear operator with the property that $\bi\phi,\bj\phi,\bk\phi\in \D(A)$, for all $\phi\in \D(A)$. If $A$ is symmetric and $\bi A$, $\bj A$ and $\bk A$ are anti-symmetric, then
\be
\D(A^{\dagger})=\D(A)\oplus\ker(A^{\dagger}-\bfrakq\Iop)
\oplus\ker(A^{\dagger}-\overline{\bfrakq}\Iop),
\label{GenVon}
\en where $\bfrakq = q_0 + \bi q_1 + \bj q_2 + \bk q_3\in\quat$ with $q_1^{2} +  q_2^{2} + q_3^{2}\neq 0$.
\end{theorem}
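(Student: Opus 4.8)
The plan is to reduce the statement to Theorem~\ref{Von} by an affine change of the operator together with a change of imaginary unit. Write $\bfrakq = q_0 + \bfrakq'$ with $\bfrakq' = \bi q_1 + \bj q_2 + \bk q_3$, set $\mu := \sqrt{q_1^2+q_2^2+q_3^2}>0$ and $\bu := \mu^{-1}\bfrakq'$, so that $\bu$ is a unit pure quaternion: $\bu^2=-1$, $\overline{\bu}=-\bu$, $|\bu|=1$. Let $B := \mu^{-1}(A - q_0\Iop)$ with $\D(B):=\D(A)$. Using (\ref{sc_mul_aj-op}) and the fact that $q_0\Iop$ is bounded and everywhere defined, one checks that $B$ is densely defined, closed (because $A$ is) and right $\quat$-linear, with $B^{\dagger}=\mu^{-1}(A^{\dagger}-q_0\Iop)$, hence $\D(B^{\dagger})=\D(A^{\dagger})$; that $B$ is symmetric (since $A$ is and $(q_0\Iop)^{\dagger}=q_0\Iop$); and that $\bi B,\bj B,\bk B$ are anti-symmetric, because $\bi(A-q_0\Iop),\bj(A-q_0\Iop),\bk(A-q_0\Iop)$ are (as shown inside the proof of Proposition~\ref{Gen-Von-neq}) and scaling by the positive real $\mu^{-1}$ preserves anti-symmetry. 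Moreover $\bi\phi,\bj\phi,\bk\phi\in\D(B)$ for all $\phi\in\D(B)$, and consequently $\bfrakp\phi\in\D(B)$ for every $\bfrakp\in\quat$ and $\phi\in\D(B)$, since $\bfrakp\phi$ is then a real-linear combination of $\phi,\bi\phi,\bj\phi,\bk\phi$.

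Next I note that (\ref{GenVon}) is equivalent to the decomposition
$$\D(B^{\dagger})=\D(B)\oplus\ker(B^{\dagger}-\bu\Iop)\oplus\ker(B^{\dagger}+\bu\Iop).$$
Indeed $\D(B)=\D(A)$, $\D(B^{\dagger})=\D(A^{\dagger})$, and since $B^{\dagger}\mp\bu\Iop=\mu^{-1}\bigl(A^{\dagger}-(q_0\pm\mu\bu)\Iop\bigr)$ with $q_0+\mu\bu=\bfrakq$ and $q_0-\mu\bu=\overline{\bfrakq}$, multiplication by the nonzero real $\mu^{-1}$ gives $\ker(B^{\dagger}-\bu\Iop)=\ker(A^{\dagger}-\bfrakq\Iop)$ and $\ker(B^{\dagger}+\bu\Iop)=\ker(A^{\dagger}-\overline{\bfrakq}\Iop)$. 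So it suffices to prove the displayed decomposition for the pair $(B,\bu)$.

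To this end I claim $(B,\bu)$ satisfies every property of the pair $(A,\bi)$ that the proof of Theorem~\ref{Von} actually uses: $\bu^2=-1$ and $\overline{\bu}=-\bu$; $\bu\varphi_k=\varphi_k\bu$ for all $k\in N$ by Proposition~\ref{lft_mul}(f); $\bu\phi\in\D(B)$ for $\phi\in\D(B)$ and $\bu\phi\in\D(B^{\dagger})$ for $\phi\in\D(B^{\dagger})$, in both cases because $\bu\phi$ is a real-linear combination of $\bi\phi,\bj\phi,\bk\phi$; the operator $\bu B$ is anti-symmetric with $\bu B = B\bu$, by Proposition~\ref{qA}(a) applied to $B$; and the Pythagorean identity $\|(B\pm\bu\Iop)\phi\|^{2}=\|B\phi\|^{2}+\|\phi\|^{2}$ on $\D(B)$, by Proposition~\ref{Gen-Von-neq} with $\bfrakq=\pm\bu$ (or Proposition~\ref{qA}(b)--(c)). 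Granting the claim, the argument of Theorem~\ref{Von} applies line by line with $\bi$ replaced throughout by $\bu$: the Pythagorean identity and Proposition~\ref{prrpt} give $\text{ran}(B\pm\bu\Iop)$ closed; Proposition~\ref{cldop}(c), together with $(\bu\Iop)^{\dagger}=\overline{\bu}\Iop$, identifies $\text{ran}(B\pm\bu\Iop)^{\bot}=\ker(B^{\dagger}\mp\bu\Iop)$; the operator $B^{\dagger}+\bu\Iop$ is closed by the same computation (left multiplication by $\bu$ being bounded); and at the final step the vectors $\varphi_k\bu\langle\varphi_k\mid\phi\rangle$, $k\in N$, are pairwise orthogonal with $\|\varphi_k\bu\langle\varphi_k\mid\phi\rangle\|=|\langle\varphi_k\mid\phi\rangle|$ (because $|\bu|=1$), so $\bu\phi_{\pm}=0$ forces $\phi_{\pm}=0$ exactly as before. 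This establishes the decomposition for $(B,\bu)$, and translating back yields (\ref{GenVon}).

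The step I expect to be the main obstacle is precisely the verification of this last claim, namely confirming that the proof of Theorem~\ref{Von} exploits no feature of $\bi$ beyond the ones listed above; the two genuinely delicate points are the closedness of $B^{\dagger}+\bu\Iop$ and the concluding extraction of $\phi_{\pm}=0$ from $\bu\phi_{\pm}=0$ through the fixed Hilbert basis, both of which go through exactly because $\bu$ is a \emph{unit} pure quaternion. If one prefers to avoid arguing ``verbatim'', those few lines can simply be rewritten with $\bu$ in place of $\bi$, invoking Propositions~\ref{lft_mul}, \ref{qA}, \ref{Gen-Von-neq}, \ref{prrpt} and \ref{cldop} as indicated.
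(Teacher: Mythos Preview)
Your proposal is correct but takes a genuinely different route from the paper. The paper does not reduce to Theorem~\ref{Von}; instead it reproves that theorem from scratch with $\bfrakq$ and $\overline{\bfrakq}$ in place of $\pm\bi$: it uses Proposition~\ref{Gen-Von-neq} to show $\text{ran}(A-\bfrakq\Iop)$ and $\text{ran}(A-\overline{\bfrakq}\Iop)$ are closed, computes $(A-\bfrakq\Iop)\phi_0 = 2\,\text{Im}(\bfrakq)\,\phi_{\overline{\bfrakq}}$, invokes Proposition~\ref{qA}(a) to obtain $(\overline{\bfrakq}A)^{\dagger}=\bfrakq A$ and hence $(A^{\dagger}-\overline{\bfrakq}\Iop)(2\bfrakq\phi_n)=2\bfrakq(A-\overline{\bfrakq}\Iop)\phi_n$, verifies closedness of $A^{\dagger}-\overline{\bfrakq}\Iop$, and for the spanning step writes $\phi_{\overline{\bfrakq}}=-\dfrac{\text{Im}(\bfrakq)}{2|\text{Im}(\bfrakq)|^2}\,\omega$. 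Your approach instead performs the affine normalization $B=\mu^{-1}(A-q_0\Iop)$ with $\bu=\mu^{-1}\text{Im}(\bfrakq)$, so that the problem becomes the special case $q_0=0$, $|\bfrakq|=1$, and then observes that the proof of Theorem~\ref{Von} uses only the algebraic facts $\bu^2=-1$, $\overline{\bu}=-\bu$, $|\bu|=1$, together with the anti-symmetry of $\bu B$ supplied by Proposition~\ref{qA}(a). Your route is conceptually cleaner---the normalization isolates precisely which features of $\bi$ matter---but it does not truly save work, since you must still rerun Von's argument with $\bu$ in place of $\bi$; the paper's direct rewrite avoids the change of variables at the cost of carrying the constants $q_0$ and $\text{Im}(\bfrakq)$ throughout. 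Both arguments lean on the same two delicate points you flagged (closedness of the shifted adjoint and the extraction $\bu\phi=0\Rightarrow\phi=0$ via the Hilbert basis), and both dispatch them the same way.
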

\begin{proof}
Let $\D_{\bfrakq}=\ker(A^{\dagger}-\bfrakq\Iop)$ and $\D_{\overline{\bfrakq}}=\ker(A^{\dagger}-\overline{\bfrakq}\Iop)$ where $\bfrakq = q_0 + \bi q_1 + \bj q_2 + \bk q_3\in\quat$. Now the equality (\ref{GenVon-neq}) implies together with Proposition \ref{prrpt} that $\text{ran}(A-\bfrakq\Iop)$ and $\text{ran}(A-\overline{\bfrakq}\Iop)$ are closed and from (c) in the Proposition \ref{cldop}, we have  $\text{ran}(A-\overline{\bfrakq}\Iop)^{\bot}=\ker(A^{\dagger}-\bfrakq\Iop)$ and $\text{ran}(A-\bfrakq\Iop)^{\bot}=\ker(A^{\dagger}-\overline{\bfrakq}\Iop)$. To prove that $\D(A), \D_{\bfrakq}$ and $\D_{\overline{\bfrakq}}$ are linearly independent subspaces, take $(\phi_{0},\phi_{\bfrakq},\phi_{\overline{\bfrakq}})\in\D(A)\times\D_{\bfrakq}\times\D_{\overline{\bfrakq}}$ such that $\phi_{0}+\phi_{\bfrakq}+\phi_{\overline{\bfrakq}}=0$. Then, as $A$ is symmetric, we have
	\begin{eqnarray*}
		(A-\bfrakq\Iop)\phi_{0}&=&-[(A^{\dagger}-\bfrakq\Iop)\phi_{\bfrakq}
+(A^{\dagger}-{\bfrakq}\Iop)\phi_{\overline{\bfrakq}}]\\
		&=&-[(A^{\dagger}-\bfrakq\Iop)\phi_{\bfrakq}+(A^{\dagger}-\overline{\bfrakq}\Iop)\phi_{\overline{\bfrakq}}-2(\bi q_1 + \bj q_2 + \bk q_3)\phi_{\overline{\bfrakq}}]\\
		&=& 2(\bi q_1 + \bj q_2 + \bk q_3)\phi_{\overline{\bfrakq}}\mbox{~~as~~}
			(\phi_{\bfrakq},\phi_{\overline{\bfrakq}})\in\ker(A^{\dagger}
-\bfrakq\Iop)\times\ker(A^{\dagger}-\overline{\bfrakq}\Iop).
	\end{eqnarray*}
	Now $\phi_{\overline{\bfrakq}}\in\ker(A^{\dagger}-\overline{\bfrakq}\Iop)\subseteq\D(A^{\dagger})\subseteq V_{\quat}^R=\overline{\D(A)}$. That is, there exists a sequence $\{\phi_{n}\}$ in $\D(A)$ such that $\phi_{n}\longrightarrow\phi_{\overline{\bfrakq}}$ as $n\longrightarrow\infty$. From this, for each $k\in N$, $\varphi_{k}(\bi q_1 + \bj q_2 + \bk q_3)\langle \varphi_{k}\mid \phi_{n}\rangle\longrightarrow\varphi_{k}(\bi q_1 + \bj q_2 + \bk q_3)\langle \varphi_{k}\mid \phi_{\overline{\bfrakq}}\rangle$ as $n\longrightarrow\infty$ and so
	$$2(\bi q_1 + \bj q_2 + \bk q_3)\phi_n\longrightarrow2(\bi q_1 + \bj q_2 + \bk q_3)\phi_{\overline{\bfrakq}}$$
	as $n\longrightarrow\infty$. Now for each $n\in\mathbb{N}$, we have
	\begin{eqnarray*}
		(A^{\dagger}-\overline{\bfrakq}\Iop)(2\bfrakq\phi_n)
		&=& A^{\dagger}(2\bfrakq\phi_n)-\overline{\bfrakq}\Iop(2\bfrakq\phi_n)\\
		&=& 2[A^{\dagger}(\bfrakq\phi_n)-\overline{\bfrakq}(\bfrakq\phi_n)]\\
		&=& 2[A^{\dagger}(\bfrakq\phi_n)-\mid \bfrakq\mid^{2}\phi_n]\mbox{~~by (c) in Proposition \ref{lft_mul}}\\
		&=& 2[(A^{\dagger}\bfrakq)\phi_n-\mid \bfrakq\mid^{2}\phi_n]\mbox{~~by \ref{rgt_mul-op}}\\
		&=& 2[(\overline{\bfrakq} A)^{\dagger}\phi_n-\mid \bfrakq\mid^{2}\phi_n]\mbox{~~by \ref{sc_mul_aj-op}}\\
		&=& 2[(\bfrakq A)\phi_n-\mid \bfrakq\mid^{2}\phi_n] \mbox{~~by  (a) in Proposition \ref{qA}}\\
		&=& 2[\bfrakq (A\phi_n)-\bfrakq((\overline{\bfrakq}\Iop)\phi_n)]\\
		&=& 2\bfrakq[(A\phi_n)-(\overline{\bfrakq}\Iop)\phi_n]\mbox{~~by (a) in Proposition \ref{lft_mul}}\\
		&=& 2\bfrakq[(A-\overline{\bfrakq}\Iop)\phi_n].
	\end{eqnarray*}
	But, using the fact that $q_{0}\in\mathbb{R}$ and (e) in the Proposition \ref{lft_mul}, we have
	$(A^{\dagger}-\overline{\bfrakq}\Iop)(2q_0\phi_n)=2q_0[(A^{\dagger}-\overline{\bfrakq}\Iop)\phi_n]$ as  $(A^{\dagger}-\overline{\bfrakq}\Iop)$ is linear. Therefore
	$$(A^{\dagger}-\overline{\bfrakq}\Iop)(2(\bi q_1 + \bj q_2 + \bk q_3)\phi_n)=2(\bi q_1 + \bj q_2 + \bk q_3)[(A^{\dagger}-\overline{\bfrakq}\Iop)\phi_n].$$
	Now we need to show that $(A^{\dagger}-\overline{\bfrakq}\Iop)$ is a closed operator. For, let $\{\xi_{n}\}$ be a sequence such that $\xi_{n}\longrightarrow\xi$ as $n\longrightarrow\infty$ with $(A^{\dagger}-\overline{\bfrakq}\Iop)\xi_{n}\longrightarrow\eta$ as $n\longrightarrow\infty$. Then  $(\overline{\bfrakq}\Iop)\xi_{n}=\overline{\bfrakq}\xi_{n}\longrightarrow\overline{\bfrakq}\xi=(\overline{\bfrakq}\Iop)\xi$ as $n\longrightarrow\infty$ and so
	$$\|A^{\dagger}\xi_{n}-(\eta+\overline{\bfrakq}\xi)\|\leq\|(A^{\dagger}-\overline{\bfrakq}\Iop)\xi_{n}-\eta\|+\|(\overline{\bfrakq}\Iop)\xi_{n}-(\overline{\bfrakq}\Iop)\xi\|\longrightarrow 0$$
	as $n\longrightarrow\infty$. That is,$A^{\dagger}\xi_{n}\longrightarrow(\eta-\overline{\bfrakq}\xi)$  as $n\longrightarrow\infty$. But we know $A^{\dagger}$ is closed. Thus $A^{\dagger}\xi=\eta-\overline{\bfrakq}\xi$ or equivalently $(A^{\dagger}-\overline{\bfrakq}\Iop)\xi=\eta$. Therefore, by Proposition \ref{copcr}, $(A^{\dagger}-\overline{\bfrakq}\Iop)$ is closed.
	
	Assume that $(A^{\dagger}-\overline{\bfrakq}\Iop)\phi_n\longrightarrow\psi$ as $n\longrightarrow\infty$, for some $\psi\in V_{\quat}^R$. But $\phi_{n}\longrightarrow\phi_{\overline{\bfrakq}}$ as $n\longrightarrow\infty$. Thus $ \psi=(A^{\dagger}-\overline{\bfrakq}\Iop)\phi_{\overline{\bfrakq}}=0$ as $(A^{\dagger}-\overline{\bfrakq}\Iop)$ is a closed operator. That is, $(A^{\dagger}-\overline{\bfrakq}\Iop)\phi_n\longrightarrow 0$ as $n\longrightarrow\infty$. From this one can easily see that $2(\bi q_1 + \bj q_2 + \bk q_3)[(A^{\dagger}-\overline{\bfrakq}\Iop)\phi_n]\longrightarrow 0$ as $n\longrightarrow\infty$. Therefore we have $(A^{\dagger}-\overline{\bfrakq}\Iop)(2(\bi q_1 + \bj q_2 + \bk q_3)\phi_n)\longrightarrow 0$ as $n\longrightarrow\infty$ and $2(\bi q_1 + \bj q_2 + \bk q_3)\phi_n \longrightarrow 2(\bi q_1 + \bj q_2 + \bk q_3)\phi_{\overline{\bfrakq}} $ as $n\longrightarrow\infty$. Thus $(A^{\dagger}-\overline{\bfrakq}\Iop)(2(\bi q_1 + \bj q_2 + \bk q_3)\phi_{\overline{\bfrakq}})=0$ as $(A^{\dagger}-\overline{\bfrakq}\Iop)$ is closed. That is,
	$$2(\bi q_1 + \bj q_2 + \bk q_3)\phi_{\overline{\bfrakq}}\in\ker(A^{\dagger}-\overline{\bfrakq}\Iop)=\text{ran}(A-\bfrakq\Iop)^{\bot}.$$
Thereby $2(\bi q_1 + \bj q_2 + \bk q_3)\phi_{\overline{\bfrakq}}=0$ as  $\text{ran}(A-\bfrakq\Iop)\cap\text{ran}(A-\bfrakq\Iop)^{\bot}=\{0\}$. Since $q_1^{2} +  q_2^{2} + q_3^{2}\neq 0$, we have $\langle\varphi_{j}\mid\phi_{\overline{\bfrakq}}\rangle=0$, for all $j\in N$. Thus $\phi_{\overline{\bfrakq}}=0$. In a similar fashion we can prove $\phi_{\bfrakq}=0$, and so $\phi_{0}=0$. It is  trivial that $$\D(A)\oplus\ker(A^{\dagger}-\bfrakq\Iop)\oplus\ker(A^{\dagger}-\overline{\bfrakq}\Iop)\subseteq\D(A^{\dagger})$$
as $A$ is symmetric. Conversely, let $\phi\in\D(A^{\dagger})$, then since $$V_{\quat}^{R}=\text{ran}(A-\bfrakq\Iop)\oplus\text{ran}(A-\bfrakq\Iop)^{\bot}=\text{ran}(A-\bfrakq\Iop)\oplus\D_{\overline{\bfrakq}},$$
there exists $(\theta,\omega)\in\text{ran}(A-\bfrakq\Iop)\times\ker(A^{\dagger}-\overline{\bfrakq}\Iop)$ such that $(A^{\dagger}-\bfrakq\Iop)\phi=\theta+\omega$. Now there exists $\phi_{0}\in\D(A)$ such that $\theta=(A-\bfrakq\Iop)\phi_{0}$ and choose $\displaystyle\phi_{\overline{\bfrakq}}=\frac{-\text{Im}(\bfrakq)}{2\mid\text{Im}(\bfrakq)\mid^{2}}\,\omega$. Then $\phi_{\overline{\bfrakq}}\in\D_{\overline{\bfrakq}} =\ker(A^{\dagger}-\overline{\bfrakq}\Iop)$ and $$\displaystyle(\bi q_1 + \bj q_2 + \bk q_3)\phi_{\overline{\bfrakq}}=\text{Im}(\bfrakq)\phi_{\overline{\bfrakq}} =\text{Im}(\bfrakq)\left(\frac{-\text{Im}(\bfrakq)}{2\mid\text{Im}(\bfrakq)\mid^{2}}\,\omega\right)=-\frac{1}{2}\omega.$$
That is, we have there exists $(\phi_{0},-2(\bi q_1 + \bj q_2 + \bk q_3)\phi_{\overline{\bfrakq}})\in\D(A)\times\D_{\overline{\bfrakq}}$ such that
$$(A^{\dagger}-\bfrakq\Iop)\phi=(A-\bfrakq\Iop)\phi_{0}-2(\bi q_1 + \bj q_2 + \bk q_3)\phi_{\overline{\bfrakq}}.$$
That is, $$(A^{\dagger}-\bfrakq\Iop)(\phi-\phi_{0}-\phi_{\overline{\bfrakq}})=0.$$
Thus if we take $\phi_{\bfrakq}=\phi-\phi_{0}-\phi_{\overline{\bfrakq}}\in\D_{\bfrakq}$, then we have completed the proof.
\end{proof}
Let us present an example that satisfies the assumptions of the above theorem.
\begin{example}
Consider the $L^2$-spaces: the complex Hilbert space $\HI_\C=L^2_\C(\C, e^{-|z|^2}\frac{dz\wedge d\oz}{2\pi i})$ and the right quaternionic Hilbert space
 $\HI_\quat=L^2_\quat(\quat, e^{-|z|^2}\frac{dz\wedge d\oz}{2\pi i}d\omega)$, where $d\omega$ is a normalized Harr measure on $SU(2)$. For $z\in\C$, let
$$\phi_n(z)=\frac{z^n}{\sqrt{n!}}\quad\text{and}\quad \phi_n(\oz)=\frac{\oz^n}{\sqrt{n!}}.$$ Consider the operators
$$N=z\frac{\partial}{\partial z},\quad \overline{N}=\oz\frac{\partial}{\partial \oz}$$ defined respectively on
$$\D(N)=\text{complex-span}\{\phi_n(z)|n\in\N\}\quad\text{and}\quad \D(\overline{N})=\text{complex-span}\{\phi_n(\oz)|n\in\N\}.$$
$\D(N)$ and $\D(\overline{N})$ are subspaces of $\HI_\C$ and the operators act on the vectors as
$$N\phi_n(z)=n\phi_n(z)\quad \text{and}\quad \overline{N}\phi_n(\oz)=n\phi_n(\oz).$$
Let
$$\HI_{hol}=\overline{\text{complex-span}}~\D(N)\quad \text{and}\quad \HI_{a-hol}=\overline{\text{complex-span}}~\D(\overline N).$$
It is now clear that
$$N:\D(N)\subseteq\HI_{hol}\longrightarrow\HI_{hol}\quad\text{and}\quad
\overline{N}:\D(\overline{N})\subseteq\HI_{a-hol}\longrightarrow\HI_{a-hol}$$
are densely defined linear symmetric operators on the well-known Bargmann (analytic and anti-analytic) complex spaces $\HI_{hol}$ and $\HI_{a-hol}$ respectively and also these are subspaces of $\HI_\C$. Let
\begin{eqnarray*}
P&=&\di(N,\overline{N})\\
\Phi_n(z,\oz)&=&\di(\phi_n(z),\phi_n(\oz))\in\quat\\
\D(P)&=&\quat\text{-right-linear-span}\{\Phi_n(z,\oz)|n\in\N\}\subset\HI_\quat\\
\HI_{reg}&=&\overline{\quat\text{-right-linear-span}}~\D(P).
\end{eqnarray*}
The set $\D(P)$ is a subspace of $\HI_{\quat}$ and $\HI_{reg}$ is a right quaternionic subspace of $\HI_{\quat}$ consists of right regular functions. The operator $P:\D(P)\subseteq\HI_{reg}\longrightarrow\HI_{reg}$ is a densely defined right quaternionic symmetric operator. Clearly the following hold true:
\begin{enumerate}
\item[(a)] $i\phi, j\phi, k\phi\in \D(P)$ for all $\phi\in \D(P)$ in view of considering $i, j, k$ as complex imaginary units (or identifying $j, k$ with $i$).
\item[(b)] $iP, jP, kP$ are anti-symmetric operators.
\end{enumerate}
\end{example}
\subsection{Deficiency Indices}
\begin{definition}\label{def_ind_i}
Let $\bfrake\in\{\bi,\bj,\bk\}$ and $A:\D(A)\longrightarrow V_{\quat}^R$ be a densely defined closed symmetric right $\quat$-linear operator in a Hilbert space $V_{\quat}^R$  with the property that $\bfrake\phi\in\D(A)$, for all $\phi\in \D(A)$ and $\bfrake A$ is anti-symmetric. Define $$n_{\pm}^{\bfrake}(A)=\dim\ker(A^{\dagger}\mp\bfrake\Iop).$$
The pairs $(n_{+}^{\bfrake}(A),n_{-}^{\bfrake}(A))$ is called  $\bfrake$-deficiency indices of $A$. Here it is admissible for the indices to take the value $\infty$.
\end{definition}
\begin{corollary}\label{se-ad-1}
Let a right $\quat$-linear operator $A:\D(A)\longrightarrow V_{\quat}^R$ be a densely defined closed symmetric operator with the property that $\bi\phi\in\D(A)$, for all $\phi\in \D(A)$ and $\bi A$ is anti-symmetric. $A$ is self-adjoint if and only if the deficiency indices $(n_{+}^{\bi}(A),n_{-}^{\bi}(A))=(0,0)$.
\end{corollary}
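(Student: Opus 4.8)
The plan is to deduce this directly from the self-adjointness criterion of Theorem \ref{csadj_i}, whose hypotheses (densely defined, closed, symmetric, $\bi\phi\in\D(A)$ for all $\phi\in\D(A)$, and $\bi A$ anti-symmetric) coincide exactly with those imposed in the corollary. The first step is purely a matter of translation: by Definition \ref{def_ind_i}, the assertion $(n_{+}^{\bi}(A),n_{-}^{\bi}(A))=(0,0)$ is simply the statement that $\ker(A^{\dagger}-\bi\Iop)=\{0\}$ and $\ker(A^{\dagger}+\bi\Iop)=\{0\}$, i.e. $\ker(A^{\dagger}\mp\bi\Iop)=\{0\}$.

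For the forward implication, assume $A$ is self-adjoint. Then condition (a) of Theorem \ref{csadj_i} holds, hence so does (b); in particular $\ker(A^{\dagger}\mp\bi\Iop)=\{0\}$, which by Definition \ref{def_ind_i} says precisely $(n_{+}^{\bi}(A),n_{-}^{\bi}(A))=(0,0)$. For the converse, assume $(n_{+}^{\bi}(A),n_{-}^{\bi}(A))=(0,0)$, so that $\ker(A^{\dagger}\mp\bi\Iop)=\{0\}$; since $A$ is assumed closed, condition (b) of Theorem \ref{csadj_i} is satisfied in full, and the equivalence (b) $\Rightarrow$ (a) yields that $A$ is self-adjoint.

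An alternative, perhaps more transparent, route is to invoke the Von Neumann decomposition of Theorem \ref{Von}, namely $\D(A^{\dagger})=\D(A)\oplus\ker(A^{\dagger}-\bi\Iop)\oplus\ker(A^{\dagger}+\bi\Iop)$. If the two deficiency spaces are trivial then $\D(A^{\dagger})=\D(A)$, and combining this with $A\subset A^{\dagger}$ (the definition of symmetry) upgrades to $A=A^{\dagger}$; conversely, if $A=A^{\dagger}$ then $\D(A^{\dagger})=\D(A)$, and directness of the sum forces both kernels to be $\{0\}$. Either way the argument is short, since everything needed has already been proved; the only point deserving a moment's care is the bookkeeping that identifies vanishing of the deficiency indices with triviality of the two kernels, together with the standard observation that an inclusion of operators plus equality of domains promotes to equality of operators. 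I therefore do not anticipate any genuine obstacle here — this is a corollary in the literal sense.
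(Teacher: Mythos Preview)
Your proof is correct. Your primary route appeals to Theorem~\ref{csadj_i} (the equivalence (a)\,$\Leftrightarrow$\,(b)), while the paper instead invokes the Von Neumann decomposition of Theorem~\ref{Von} and reads off $\D(A^{\dagger})=\D(A)\iff$ both kernels vanish --- precisely the ``alternative, perhaps more transparent'' route you sketch in your final paragraph. The two arguments are essentially interchangeable: Theorem~\ref{csadj_i} packages the kernel conditions together with closedness into a self-adjointness criterion, whereas the Von Neumann decomposition makes the equality of domains visible directly; either way the corollary is immediate once the definition of $n_{\pm}^{\bi}(A)$ is unpacked.
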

\begin{proof}
$A$ is self-adjoint if and only if $D(A)=\D(A^{\dagger})$. By (\ref{Von}) we have $\ker(A^{\dagger}\pm\bi\Iop)=\{0\}$ if and only if $D(A)=\D(A^{\dagger})$. But  $\ker(A^{\dagger}\pm\bi\Iop)=\{0\}$ if and only if $(n_{+}^{\bi}(A),n_{-}^{\bi}(A))=(0,0)$. Therefore $A$ is self-adjoint if and only if $(n_{+}^{\bi}(A),n_{-}^{\bi}(A))=(0,0)$.
\end{proof}
\begin{lemma}\label{dim}
If $E$ and $F$ are closed linear subspace of $V_{\quat}^R$ such that $\dim E<\dim F$, then there exists $\psi\in F\cap E^{\bot}$ with $\psi\neq0$.
\end{lemma}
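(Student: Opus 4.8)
The plan is to argue by contradiction: assuming $F\cap E^{\bot}=\{0\}$, I would show that $\dim F\le\dim E$, which contradicts the hypothesis $\dim E<\dim F$, and hence $F\cap E^{\bot}$ must contain a nonzero vector $\psi$. First I would bring in the orthogonal projection $P\in\B(V_{\quat}^{R})$ onto the closed subspace $E$; this exists, is right $\quat$-linear, satisfies $P^{\dagger}=P$, $P\phi=\phi$ for $\phi\in E$, $\ker P=E^{\bot}$ and $\|P\|\le1$, exactly as in the complex case (the decomposition $V_{\quat}^{R}=E\oplus E^{\bot}$ and its analogues are already used freely in this paper). Set $T:=P\vert_{F}:F\longrightarrow E$. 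Then $T$ is a bounded right $\quat$-linear operator with $\ker T=F\cap E^{\bot}=\{0\}$, so $T$ is injective; hence it suffices to show that the existence of an injective bounded right $\quat$-linear map $T:F\to E$ forces $\dim F\le\dim E$.

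I would then split into the two cases $\dim E$ finite and $\dim E$ infinite. Suppose first $\dim E=m<\infty$; then $E$, being a finite-dimensional right $\quat$-Hilbert space, is isomorphic as a right $\quat$-module to $\quat^{m}$. If $\dim F>m$, choose orthonormal vectors $f_{1},\dots,f_{m+1}\in F$; their images $Tf_{1},\dots,Tf_{m+1}$ are $m+1$ vectors in the $m$-dimensional right $\quat$-module $E$, hence right-linearly dependent over $\quat$, so $\sum_{i=1}^{m+1}(Tf_{i})c_{i}=0$ for some $c_{1},\dots,c_{m+1}\in\quat$ not all zero. By right linearity $T\!\left(\sum_{i}f_{i}c_{i}\right)=0$, while $\big\|\sum_{i}f_{i}c_{i}\big\|^{2}=\sum_{i}|c_{i}|^{2}>0$ by orthonormality and properties (iv),(v) of the inner product, contradicting injectivity of $T$. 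Thus $\dim F\le m=\dim E$ in this case.

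Now suppose $\dim E$ is infinite. Fix a Hilbert basis $\{f_{\alpha}\}_{\alpha\in A}$ of $F$ and a Hilbert basis $\{g_{\beta}\}_{\beta\in B}$ of $E$, so $|A|=\dim F$ and $|B|=\dim E$. For each fixed $\beta\in B$, Bessel's inequality for the orthonormal system $\{f_{\alpha}\}$ gives $\sum_{\alpha\in A}|\langle f_{\alpha}\mid g_{\beta}\rangle|^{2}\le\|g_{\beta}\|^{2}=1$ (concretely, apply Proposition \ref{P1}(c) inside $F$ to the vector $P_{F}g_{\beta}$, where $P_{F}$ is the orthogonal projection onto $F$, using $\langle f_{\alpha}\mid g_{\beta}\rangle=\langle f_{\alpha}\mid P_{F}g_{\beta}\rangle$). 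Hence $A_{\beta}:=\{\alpha\in A:\langle f_{\alpha}\mid g_{\beta}\rangle\ne0\}$ is countable for every $\beta$. Moreover $A=\bigcup_{\beta\in B}A_{\beta}$: if $\alpha\notin\bigcup_{\beta}A_{\beta}$ then $\langle f_{\alpha}\mid g_{\beta}\rangle=0$ for all $\beta$, so $f_{\alpha}\in E^{\bot}$, whence $f_{\alpha}\in F\cap E^{\bot}=\{0\}$, contradicting $\|f_{\alpha}\|=1$. Therefore $\dim F=|A|=\big|\bigcup_{\beta\in B}A_{\beta}\big|\le\aleph_{0}\cdot|B|=\max(\aleph_{0},\dim E)=\dim E$, again contradicting $\dim E<\dim F$.

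The existence and elementary properties of the orthogonal projections in $V_{\quat}^{R}$ and the finite-dimensional linear-algebra step are routine; the step I expect to demand the most care is the infinite-dimensional cardinal estimate $\dim F\le\dim E$, in particular keeping track of the quaternionic conjugations in the inner products so that the Bessel bound genuinely shows each $A_{\beta}$ is countable, and handling the possibility that $\dim E$ and $\dim F$ are infinite cardinals rather than natural numbers.
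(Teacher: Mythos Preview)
Your proposal is correct and is precisely the standard argument the paper is invoking: the paper gives no proof of its own, merely stating that ``the proof follows as a counterpart to the complex case'' with a pointer to Schm\"udgen, and your contradiction argument via the orthogonal projection $P\vert_{F}:F\to E$, split into the finite-dimensional linear-dependence step and the infinite-dimensional Bessel/cardinality count, is exactly that complex proof transported to $V_{\quat}^{R}$. In other words, you have written out in full what the paper leaves to the reference.
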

\begin{proof}
The proof follows as a counterpart to the complex case (for the proof of complex case, see \cite{Schm}, pg 27).
\end{proof}
The following theorem characterizes the stability of the deficiency indices.
\begin{theorem}\label{def_con}
Let $A:\D(A)\subseteq V_{\quat}^R\longrightarrow V_{\quat}^R$ be a densely defined right $\quat$-linear operator with the property that $\bi\phi,\bj\phi,\bk\phi\in \D(A)$, for all $\phi\in \D(A)$. If the operators $\bi A$, $\bj A$ and $\bk A$ are anti-symmetric, then for any $\bfrakq= q_0 + \bi q_1 + \bj q_2 + \bk q_3\in\quat$ with $q_1^{2} +  q_2^{2} + q_3^{2}\neq 0$, we have
\be
\dim\ker(A^{\dagger}-\bfrakq\Iop)=n_{+}^{\bi}(A)~~\mbox{~~and~~}~~\dim\ker(A^{\dagger}-\overline{\bfrakq}\Iop)=n_{-}^{\bi}(A).
\en
\end{theorem}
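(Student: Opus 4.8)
The plan is to prove the single statement that the quaternionic dimension of $\ker(A^{\dagger}-\bfrakq\Iop)$ is independent of $\bfrakq$ as $\bfrakq$ ranges over $U:=\{\bfrakq\in\quat:q_1^2+q_2^2+q_3^2\neq0\}=\{\bfrakq\in\quat:\text{Im}(\bfrakq)\neq0\}$. Both equalities in the theorem then follow at once: $\bfrakq$, $\overline{\bfrakq}$, $\bi$ and $-\bi$ all lie in $U$, and by Definition \ref{def_ind_i} one has $n_{+}^{\bi}(A)=\dim\ker(A^{\dagger}-\bi\Iop)$ and $n_{-}^{\bi}(A)=\dim\ker(A^{\dagger}+\bi\Iop)$. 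Note that, in contrast with $\C\setminus\R$, the set $U$ is \emph{connected}: it is the complement of the real axis $\{\bfrakq\in\quat:q_1=q_2=q_3=0\}$ in $\quat\cong\R^4$, a line of codimension $3\ge2$. So this approach will also show, as a byproduct, that $n_{+}^{\bi}(A)=n_{-}^{\bi}(A)$ always holds in this setting.

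First I would record the structural facts. Set $M_{\bfrakq}:=\ker(A^{\dagger}-\bfrakq\Iop)$. For $\bfrakq\in U$ the operator $A-\overline{\bfrakq}\Iop$ is closed (as $A$ is closed and $\overline{\bfrakq}\Iop$ is bounded and everywhere defined), and Proposition \ref{Gen-Von-neq}, applied to $\overline{\bfrakq}$ whose vector part has modulus $|\text{Im}(\bfrakq)|:=\sqrt{q_1^2+q_2^2+q_3^2}>0$, gives $\|(A-\overline{\bfrakq}\Iop)\phi\|\ge|\text{Im}(\bfrakq)|\,\|\phi\|$ for all $\phi\in\D(A)$. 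Hence $\text{ran}(A-\overline{\bfrakq}\Iop)$ is closed by Proposition \ref{prrpt}. Since $(A-\overline{\bfrakq}\Iop)^{\dagger}=A^{\dagger}-\bfrakq\Iop$ (using the Remark that $(\bfrakp\Iop)^{\dagger}=\overline{\bfrakp}\,\Iop$ for every $\bfrakp\in\quat$ and additivity of the adjoint against the bounded term), Proposition \ref{cldop}(c) yields the orthogonal decomposition $V_{\quat}^{R}=\text{ran}(A-\overline{\bfrakq}\Iop)\oplus M_{\bfrakq}$; in particular each $M_{\bfrakq}$ is a closed right $\quat$-subspace.

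The core of the argument — and the step I expect to need the most care — is a local-constancy estimate: if $\bfrakq_{0},\bfrakq\in U$ with $|\bfrakq-\bfrakq_{0}|<\min(|\text{Im}(\bfrakq)|,|\text{Im}(\bfrakq_{0})|)$, then $\dim M_{\bfrakq}=\dim M_{\bfrakq_{0}}$. Suppose, for contradiction, that $\dim M_{\bfrakq}<\dim M_{\bfrakq_{0}}$. By Lemma \ref{dim} there is $u\in M_{\bfrakq_{0}}$ with $u\neq0$ and $u\perp M_{\bfrakq}$; since $M_{\bfrakq}=\text{ran}(A-\overline{\bfrakq}\Iop)^{\bot}$ and this range is closed, $u=(A-\overline{\bfrakq}\Iop)v$ for some $v\in\D(A)$, and necessarily $v\neq0$. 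Using $A^{\dagger}u=\bfrakq_{0}u$, the defining relation of the adjoint ($v\in\D(A)$, $u\in\D(A^{\dagger})$), property (d) of Proposition \ref{lft_mul} to move quaternionic scalars across the inner product, and the additivity of left multiplication, one computes
$$\langle u\mid u\rangle=\langle u\mid Av-\overline{\bfrakq}v\rangle=\langle A^{\dagger}u\mid v\rangle-\langle u\mid\overline{\bfrakq}\,v\rangle=\langle u\mid\overline{\bfrakq_{0}}\,v\rangle-\langle u\mid\overline{\bfrakq}\,v\rangle=\langle u\mid\overline{(\bfrakq_{0}-\bfrakq)}\,v\rangle.$$
Quaternionic Cauchy--Schwarz together with $\|\overline{(\bfrakq_{0}-\bfrakq)}\,v\|=|\bfrakq_{0}-\bfrakq|\,\|v\|$ (Proposition \ref{lft_mul}(b)) gives $\|u\|\le|\bfrakq_{0}-\bfrakq|\,\|v\|$, whereas the norm bound from the previous paragraph gives $\|u\|=\|(A-\overline{\bfrakq}\Iop)v\|\ge|\text{Im}(\bfrakq)|\,\|v\|$. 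Dividing by $\|v\|>0$ forces $|\text{Im}(\bfrakq)|\le|\bfrakq_{0}-\bfrakq|$, contradicting the hypothesis; hence $\dim M_{\bfrakq}\ge\dim M_{\bfrakq_{0}}$, and by the symmetry of the hypothesis the reverse inequality holds too, so the dimensions coincide. Taking, for a fixed $\bfrakq_{0}\in U$, the ball of radius $\tfrac12|\text{Im}(\bfrakq_{0})|$ about $\bfrakq_{0}$ (every point of which satisfies the displayed hypothesis and lies in $U$) shows $\bfrakq\mapsto\dim M_{\bfrakq}$ is locally constant on $U$.

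Finally, $U$ is connected, and a locally constant $(\N\cup\{\infty\})$-valued function on a connected space is constant; thus $\dim M_{\bfrakq}$ takes a single value for all $\bfrakq\in U$. Evaluating at $\bfrakq$ and at $\bi$ gives $\dim\ker(A^{\dagger}-\bfrakq\Iop)=\dim\ker(A^{\dagger}-\bi\Iop)=n_{+}^{\bi}(A)$, and evaluating at $\overline{\bfrakq}$ and at $-\bi$ gives $\dim\ker(A^{\dagger}-\overline{\bfrakq}\Iop)=\dim\ker(A^{\dagger}+\bi\Iop)=n_{-}^{\bi}(A)$, which is the assertion. The only genuinely non-routine ingredients are the two-sided estimate $|\text{Im}(\bfrakq)|\,\|v\|\le\|u\|\le|\bfrakq_{0}-\bfrakq|\,\|v\|$ — where one must be careful to route left- versus right-multiplications and quaternionic conjugates correctly through Proposition \ref{lft_mul} and the quaternion-valued inner product — and the topological fact that $U$ is connected, which has no complex analogue and is precisely what makes the two deficiency indices coincide; everything else is bookkeeping with results already established above.
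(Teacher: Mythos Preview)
Your proof is correct and follows essentially the same approach as the paper: local constancy of $\bfrakq\mapsto\dim\ker(A^{\dagger}-\bfrakq\Iop)$ on $\quat\setminus\R$ via Lemma~\ref{dim} and the norm bound from Proposition~\ref{Gen-Von-neq}, followed by a connectedness argument (the paper phrases the last step with polygonal paths and a finite-subcover compactness argument rather than abstract connectedness, and derives the key inequality via the orthogonality $\langle(A-\overline{\bfrakq}\Iop)\phi\mid(A-\overline{\bfrakq}_0\Iop)\phi\rangle=0$ rather than your adjoint computation, but these are cosmetic). Your explicit observation that $\quat\setminus\R$ is connected --- in contrast with $\C\setminus\R$ --- and hence that $n_{+}^{\bi}(A)=n_{-}^{\bi}(A)$ automatically, is a genuine consequence of the theorem as stated but is not singled out in the paper.
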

\begin{proof}
Recall that $\text{ran}(A-\bfrakq\Iop)$ and $\text{ran}(A-\overline{\bfrakq}\Iop)$ are closed and then $$\text{ran}(A-\overline{\bfrakq}\Iop)^{\bot}=\ker(A^{\dagger}-\bfrakq\Iop)~~\mbox{~~and~~}~~ \text{ran}(A-\bfrakq\Iop)^{\bot}=\ker(A^{\dagger}-\overline{\bfrakq}\Iop).$$
Let $\bfrakq_{0}\in B(\bfrakq,\delta)$, where  $B(\bfrakq,\delta)$ is an open ball centered at $\bfrakq$ with radius $\delta=\sqrt{q_1^{2} +  q_2^{2} + q_3^{2}}$. Assume that $$\dim\ker(A^{\dagger}-\bfrakq\Iop)<\dim\ker(A^{\dagger}-\bfrakq_{0}\Iop).$$ Hence by the Lemma \ref{dim}, we have
there exists $\psi\in \ker(A^{\dagger}-\bfrakq_{0}\Iop)\cap \ker(A^{\dagger}-\bfrakq\Iop)^{\bot}$ with $\psi\neq0$. That is, $\psi\in\text{ran}(A-\bfrakq\Iop)$ and so $\psi=(A-\overline{\bfrakq}\Iop)\phi$ for some $\phi\in\D(A)$. Now we have
\be\label{ore}
\langle(A-\overline{\bfrakq}\Iop)\phi\mid(A-\overline{\bfrakq}_{0}\Iop)\phi\rangle=0.
\en
as $\psi\in\ker(A^{\dagger}-\bfrakq_{0}\Iop)$. The same equation (\ref{ore}) can be obtained even when  the inequality $\dim\ker(A^{\dagger}-\bfrakq_{0}\Iop)<\dim\ker(A^{\dagger}-\bfrakq\Iop)$ is considered. Now without loss of generality assume that $(A-\overline{\bfrakq}\Iop)\phi\neq0$ and using (\ref{ore}), we can derive, using (b) in the Proposition \ref{lft_mul},
\begin{eqnarray*}
\|(A-\overline{\bfrakq}\Iop)\phi\|^{2}
&=&\langle(A-\overline{\bfrakq}\Iop)\phi\mid(A-\overline{\bfrakq}_{0}\Iop)\phi+(\overline{\bfrakq}_{0}-\overline{\bfrakq})\phi\rangle\\
&\leq&\mid\bfrakq-\bfrakq_{0}\mid\|\phi\|\|(A-\overline{\bfrakq}\Iop)\phi\|.
\end{eqnarray*}
Thus $\|(A-\overline{\bfrakq}\Iop)\phi\|\leq\mid\bfrakq-\bfrakq_{0}\mid\|\phi\|$. Since $\phi\neq0$ and $\bfrakq_{0}\in B(\bfrakq,\delta)$ with the equality (\ref{GenVon-neq}), we have
$$\mid\bfrakq-\bfrakq_{0}\mid\|\phi\|<\delta\|\phi\|\leq\|(A-\overline{\bfrakq}\Iop)\phi\|\leq\mid\bfrakq-\bfrakq_{0}\mid\|\phi\|,$$ which is a contradiction. Therefore, for any $\bfrakq_{0}\in B(\bfrakq,\delta)$,
$$\dim\ker(A^{\dagger}-\bfrakq_{0}\Iop)=\dim\ker(A^{\dagger}-\bfrakq\Iop)=n_{\bfrakq}(A)~\mbox{~~(say)}.$$ Finally let $\bfrakx,\bfraky\in\quat\smallsetminus\mathbb{R}$ and $\mathscr{P}(\bfrakx,\bfraky)$ be a polygonal path from $\bfrakx$ to $\bfraky$. Now $\{B(\bfrakq,\delta)\,:\,\bfrakq\in\mathscr{P}(\bfrakx,\bfraky)\}$ forms a open cover of the compact set $\mathscr{P}(\bfrakx,\bfraky)$, so there exists a finite sub-cover $\{B(\bfrakq_{\tau},\delta)\,:\,\tau=1,2,\cdots,s\}$. Since $n_{\bfrakq}(A)$ is constant on each open ball $B(\bfrakq_{\tau},\delta)$, we conclude that $n_{\bfrakx}(A)=n_{\bfraky}(A)$. Hence the theorem follows.
\end{proof}
The Theorem \ref{def_con} enables us to have unique deficiency indices which are independent of imaginary units $\bi,\bj$ and $\bk$ for a given right $\quat$-linear closed symmetric operator $A:\D(A)\longrightarrow V_{\quat}^R$. That is, $(n_{+}^{\bfrake}(A),n_{-}^{\bfrake}(A))=$ constant ordered pair, for all $\bfrake\in\{\bi,\bj,\bk\}$. Further, we can use the phrase {\em deficiency indices of $A$} commonly instead of using {\em $\bfrake$-deficiency indices of $A$}, and simply denote it by $(n_{+}(A),n_{-}(A))$.
	\begin{proposition}\label{reso_R}
	Let $A:\D(A)\subseteq V_{\quat}^R\longrightarrow V_{\quat}^R$ be a densely defined right $\quat$-linear symmetric operator and $\bfrakq \in\quat$. If $\bi\phi,\bj\phi,\bk\phi\in \D(A)$, for all $\phi\in \D(A)$ and the operators $\bi A$, $\bj A$ and $\bk A$ are anti-symmetric, then
	$$R_{\bfrakq}(A)=A^{2}-2\text{Re}(\bfrakq)A+|\bfrakq|^{2}\Iop=(A-\bfrakq\Iop)(A-
\overline{\bfrakq}\Iop)=(A-\overline{\bfrakq}\Iop)(A-\bfrakq\Iop).$$
	\end{proposition}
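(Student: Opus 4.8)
The plan is to reduce everything to the commutation relation $\bfrakq A=A\bfrakq$ (equivalently $\overline{\bfrakq}A=A\overline{\bfrakq}$) on $\D(A)$, which is exactly part (a) of Proposition \ref{qA} and applies verbatim under the present hypotheses, together with the elementary quaternion identities $\bfrakq\overline{\bfrakq}=\overline{\bfrakq}\bfrakq=|\bfrakq|^{2}$ and $\bfrakq+\overline{\bfrakq}=2\,\text{Re}(\bfrakq)\in\mathbb{R}$. First I would dispose of the domains. Since $\bi\phi,\bj\phi,\bk\phi\in\D(A)$ for every $\phi\in\D(A)$ and $\D(A)$ is a right $\quat$-linear subspace of $V_{\quat}^R$, a short computation using (c) and (e) of Proposition \ref{lft_mul} (the same one already carried out at the start of the proof of Proposition \ref{qA}) shows $\bfrakq\phi,\overline{\bfrakq}\phi\in\D(A)$ for all $\phi\in\D(A)$ and all $\bfrakq\in\quat$. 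Hence $A-\bfrakq\Iop$ and $A-\overline{\bfrakq}\Iop$ are well defined on all of $\D(A)$, and for $\phi\in\D(A)$ one has $(A-\overline{\bfrakq}\Iop)\phi\in\D(A)$ iff $A\phi\in\D(A)$ iff $\phi\in\D(A^{2})$, and likewise for $A-\bfrakq\Iop$. Therefore both products $(A-\bfrakq\Iop)(A-\overline{\bfrakq}\Iop)$ and $(A-\overline{\bfrakq}\Iop)(A-\bfrakq\Iop)$ have domain exactly $\D(A^{2})$, which is the stated domain of $R_{\bfrakq}(A)=A^{2}-2\,\text{Re}(\bfrakq)A+|\bfrakq|^{2}\Iop$.

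Next I would do the algebraic expansion on $\D(A^{2})$. For $\phi\in\D(A^{2})$, using right $\quat$-linearity of $A$, the additivity $(\bfrakp+\bfrakq)\phi=\bfrakp\phi+\bfrakq\phi$ of left multiplication, Proposition \ref{lft_mul}, and the commutation $A(\overline{\bfrakq}\phi)=(\overline{\bfrakq}A)\phi=\overline{\bfrakq}(A\phi)$ from Proposition \ref{qA}(a), the computation would run
\begin{align*}
(A-\bfrakq\Iop)(A-\overline{\bfrakq}\Iop)\phi
&= A(A\phi-\overline{\bfrakq}\phi)-\bfrakq(A\phi-\overline{\bfrakq}\phi)\\
&= A^{2}\phi-\overline{\bfrakq}(A\phi)-\bfrakq(A\phi)+\bfrakq\overline{\bfrakq}\phi\\
&= A^{2}\phi-(\bfrakq+\overline{\bfrakq})(A\phi)+|\bfrakq|^{2}\phi\\
&= A^{2}\phi-2\,\text{Re}(\bfrakq)\,A\phi+|\bfrakq|^{2}\phi=R_{\bfrakq}(A)\phi.
\end{align*}
The computation for $(A-\overline{\bfrakq}\Iop)(A-\bfrakq\Iop)\phi$ is identical with the roles of $\bfrakq$ and $\overline{\bfrakq}$ interchanged, using $A(\bfrakq\phi)=\bfrakq(A\phi)$ and $\overline{\bfrakq}\bfrakq=|\bfrakq|^{2}$. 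Since the two products agree with $R_{\bfrakq}(A)$ on the common domain $\D(A^{2})$, both displayed identities follow.

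The one step that genuinely needs care — and the only real obstacle — is the commutation $A\bfrakq=\bfrakq A$ on $\D(A)$: left scalar multiplication on $V_{\quat}^R$ is a non-canonical operation and does not commute with an arbitrary right $\quat$-linear operator, so one cannot simply push $\bfrakq$ past $A$. It is precisely here that the hypotheses "$A$ symmetric and $\bi A,\bj A,\bk A$ anti-symmetric" (and $\bi\phi,\bj\phi,\bk\phi\in\D(A)$, which guarantees $\bfrakq\phi\in\D(A)$) are used, by invoking Proposition \ref{qA}(a). Everything else is routine bookkeeping with the quaternionic identities $\bfrakq\overline{\bfrakq}=\overline{\bfrakq}\bfrakq=|\bfrakq|^{2}$, $\bfrakq+\overline{\bfrakq}=2\,\text{Re}(\bfrakq)\in\mathbb{R}$, and with the equality of domains $\D(A^{2})$.
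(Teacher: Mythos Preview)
Your proposal is correct and follows essentially the same route as the paper's own proof: expand $(A-\bfrakq\Iop)(A-\overline{\bfrakq}\Iop)\phi$ on $\D(A^{2})$, invoke Proposition~\ref{qA}(a) for the commutation $A\overline{\bfrakq}=\overline{\bfrakq}A$, and collect terms using $\bfrakq+\overline{\bfrakq}=2\,\text{Re}(\bfrakq)$ and $\bfrakq\overline{\bfrakq}=|\bfrakq|^{2}$. Your treatment of the domains is in fact more explicit than the paper's, but the argument is the same.
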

	\begin{proof}
	Let $\phi\in\D(A^{2})$. Then
	\begin{eqnarray*}
	(A-\bfrakq\Iop)(A-\overline{\bfrakq}\Iop)\phi&=& A((A-\overline{\bfrakq}\Iop)\phi)-\bfrakq\Iop((A-\overline{\bfrakq}\Iop)\phi)\\
	&=& A^{2}\phi-A(\overline{\bfrakq}\phi)-\bfrakq(A\phi)+\bfrakq(\overline{\bfrakq}\phi)\\
	&=& A^{2}\phi-(A\overline{\bfrakq})\phi-(\bfrakq A)\phi+\mid\bfrakq\mid^{2}\phi\\
	&=& A^{2}\phi-(\overline{\bfrakq}A)\phi-(\bfrakq A)\phi+\mid\bfrakq\mid^{2}\phi\mbox{~~by (a) in Proposition \ref{qA}}\\
	&=& A^{2}\phi-(\overline{\bfrakq}+\bfrakq )A\phi+\mid\bfrakq\mid^{2}\phi\\
	&=& (A^{2}-2\text{Re}(\bfrakq)A+|\bfrakq|^{2}\Iop)\phi=R_{\bfrakq}(A)\phi.
	\end{eqnarray*}
	That is, $R_{\bfrakq}(A)=(A-\bfrakq\Iop)(A-\overline{\bfrakq}\Iop)$. Similarly $R_{\bfrakq}(A)=(A-\overline{\bfrakq}\Iop)(A-\bfrakq\Iop).$
	\end{proof}
\begin{theorem}\label{S-spec}
Let $A:\D(A)\subseteq V_{\quat}^R\longrightarrow V_{\quat}^R$ be a densely defined right $\quat$-linear closed symmetric operator with the property that $\bi\phi,\bj\phi,\bk\phi\in \D(A)$, for all $\phi\in \D(A)$. If the operators $\bi A$, $\bj A$ and $\bk A$ are anti-symmetric, then $A$ is self-adjoint if and only if the spherical spectrum $\sigma_S(A)\subset\mathbb{R}$.\\
Moreover, if $A$ is self-adjoint and $\bfrakq= q_0 + \bi q_1 + \bj q_2 + \bk q_3\in\quat\smallsetminus\mathbb{R}$, then $\bfrakq\in\rho_{S}(A)$ and $\|R_{\bfrakq}(A)^{-1}\|\leq(q_1^{2} +  q_2^{2} + q_3^{2})^{-1}$.
\end{theorem}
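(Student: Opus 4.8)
The plan is to reduce everything to the machinery already in place: Proposition~\ref{reso_R}, which factors $R_{\bfrakq}(A)=(A-\bfrakq\Iop)(A-\overline{\bfrakq}\Iop)=(A-\overline{\bfrakq}\Iop)(A-\bfrakq\Iop)$, Proposition~\ref{csadj-gen}, which characterizes self-adjointness by $\text{ran}(A-\bfrakq\Iop)=\text{ran}(A-\overline{\bfrakq}\Iop)=V_{\quat}^{R}$ for $\bfrakq$ with nonzero imaginary part, and the norm estimate \eqref{GenVon-neq}. I would first prove the ``moreover'' part, since the forward implication of the biconditional falls out of it. Suppose $A$ is self-adjoint and $\bfrakq=q_0+\bi q_1+\bj q_2+\bk q_3$ has $q_1^2+q_2^2+q_3^2\neq 0$. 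By Proposition~\ref{csadj-gen} (applied with this $\bfrakq$ and with $\overline{\bfrakq}$), both $A-\bfrakq\Iop$ and $A-\overline{\bfrakq}\Iop$ are surjective onto $V_{\quat}^{R}$; by \eqref{GenVon-neq} (or Proposition~\ref{ken_Gen}) both are injective, so each is a bijection $\D(A^2)\to V_{\quat}^{R}$ when composed appropriately, and $R_{\bfrakq}(A)=(A-\bfrakq\Iop)(A-\overline{\bfrakq}\Iop)$ is a bijection from $\D(A^2)$ onto $V_{\quat}^{R}$. Hence $\ker R_{\bfrakq}(A)=\{0\}$, $\text{ran}\,R_{\bfrakq}(A)=V_{\quat}^{R}$ is dense, and it remains to bound $R_{\bfrakq}(A)^{-1}$. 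For $\phi\in\D(A^2)$, writing $\psi=(A-\overline{\bfrakq}\Iop)\phi$, the estimate \eqref{GenVon-neq} gives $\|(A-\overline{\bfrakq}\Iop)\phi\|\ge\sqrt{q_1^2+q_2^2+q_3^2}\,\|\phi\|$ and likewise $\|(A-\bfrakq\Iop)\psi\|\ge\sqrt{q_1^2+q_2^2+q_3^2}\,\|\psi\|$; composing, $\|R_{\bfrakq}(A)\phi\|\ge(q_1^2+q_2^2+q_3^2)\|\phi\|$, which is exactly $\|R_{\bfrakq}(A)^{-1}\|\le(q_1^2+q_2^2+q_3^2)^{-1}$. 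This shows $\bfrakq\in\rho_S(A)$, and since this holds for every $\bfrakq\in\quat\smallsetminus\mathbb{R}$, we get $\sigma_S(A)\subset\mathbb{R}$, proving the forward direction.

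For the converse, assume $\sigma_S(A)\subset\mathbb{R}$; I want to conclude $A$ is self-adjoint via Proposition~\ref{csadj-gen}, so it suffices to produce a single $\bfrakq\in\quat$ with $q_1^2+q_2^2+q_3^2\neq 0$ such that $\text{ran}(A-\bfrakq\Iop)=\text{ran}(A-\overline{\bfrakq}\Iop)=V_{\quat}^{R}$. Take, say, $\bfrakq=\bi$ (so $\overline{\bfrakq}=-\bi$ and $R_{\bfrakq}(A)=A^2+\Iop$ by Proposition~\ref{reso_R}). Since $\bi\notin\mathbb{R}$, the hypothesis gives $\bi\in\rho_S(A)$, so by the definition of the spherical resolvent set $\ker R_{\bfrakq}(A)=\{0\}$, $\overline{\text{ran}\,R_{\bfrakq}(A)}=V_{\quat}^{R}$, and $R_{\bfrakq}(A)^{-1}$ is bounded on its range. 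The plan here is to upgrade ``dense range'' to ``full range'': because $A$ is closed, $A^2+\Iop$ is closed, and the bound $\|R_{\bfrakq}(A)\phi\|\ge C\|\phi\|$ (which one reads off from boundedness of the inverse, or directly from \eqref{GenVon-neq} once we know the factorization holds on $\D(A^2)$ --- note the factorization in Proposition~\ref{reso_R} does not require self-adjointness, only the anti-symmetry hypotheses) together with Proposition~\ref{prrpt} forces $\text{ran}\,R_{\bfrakq}(A)$ to be closed, hence equal to $V_{\quat}^{R}$. Then $(A-\bi\Iop)(A+\bi\Iop)\phi$ ranging over all of $V_{\quat}^{R}$ forces $\text{ran}(A-\bi\Iop)=V_{\quat}^{R}$, and symmetrically from the other factorization $R_{\bfrakq}(A)=(A+\bi\Iop)(A-\bi\Iop)$ we get $\text{ran}(A+\bi\Iop)=V_{\quat}^{R}$. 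Proposition~\ref{csadj-gen}(c)$\Rightarrow$(a) (with $\bfrakq=\bi$, $q_1^2+q_2^2+q_3^2=1\neq 0$) then yields that $A$ is self-adjoint.

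The main obstacle I anticipate is the range-surjectivity step in the converse: \ref{prrpt} needs a genuine lower bound $\|R_{\bfrakq}(A)\phi\|\ge C\|\phi\|$ on $\D(A^2)=\D(R_{\bfrakq}(A))$, and I need to be careful that this bound is available \emph{without} already assuming self-adjointness. The cleanest route is to apply \eqref{GenVon-neq} to the operator $A-\overline{\bfrakq}\Iop$ composed with $A-\bfrakq\Iop$ --- but \eqref{GenVon-neq} was stated for $A$ itself, so I would either re-derive the inequality for these shifted/composed operators (a routine repetition of the computation in Proposition~\ref{Gen-Von-neq}, using that $\bi A,\bj A,\bk A$ anti-symmetric implies the analogous statement for $A-\overline{\bfrakq}\Iop$, exactly as done inside the proof of Proposition~\ref{Gen-Von-neq}), or extract the bound from boundedness of $R_{\bfrakq}(A)^{-1}$ on its dense range plus closedness. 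I would present the latter as primary and remark that the former gives an explicit constant. A secondary point to handle carefully is that Proposition~\ref{csadj-gen} already packages the equivalence (a)$\Leftrightarrow$(c) for a fixed $\bfrakq$; once the converse direction is reduced to checking surjectivity at $\bfrakq=\bi$, the remaining work is genuinely just the closed-range argument, so the proof stays short.
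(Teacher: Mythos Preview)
Your approach is correct and takes a genuinely different route from the paper's, particularly in the converse. The paper argues the converse by contradiction through deficiency indices: assuming $n_+(A)>0$, it invokes Theorem~\ref{def_con} (stability of deficiency indices) to produce a non-real $\overline{\bfrakq}\in\sigma_{rS}(A)$, contradicting $\sigma_S(A)\subset\mathbb{R}$, and then concludes via Corollary~\ref{se-ad-1}. Your route bypasses deficiency indices entirely: pick $\bfrakq=\bi$, use $\bi\in\rho_S(A)$ to force $\text{ran}(A\pm\bi\Iop)=V_{\quat}^R$, and finish with Proposition~\ref{csadj-gen}. This is more elementary in that Theorem~\ref{def_con} is never needed. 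Similarly, for the forward direction the paper simply cites Proposition~\ref{Pr2}, whereas you derive it directly from the ``moreover'' clause, which is more self-contained. The paper's approach, on the other hand, ties the theorem into the deficiency-index framework that is the point of the section.

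One gap to repair: the assertion ``because $A$ is closed, $A^2+\Iop$ is closed'' is not automatic---for a general closed $A$, $A^2$ need not be closed. It \emph{is} true here, but the proof uses the intermediate estimate $\|R_{\bi}(A)\phi\|\ge\|(A+\bi\Iop)\phi\|$ (from Lemma~\ref{neq-i} applied with $(A+\bi\Iop)\phi\in\D(A)$ in place of $\phi$), the closedness of $A\pm\bi\Iop$, and the hypothesis $\bi\phi\in\D(A)$, to push limits through both factors. You flag the lower bound as the obstacle, but the closedness is where the work actually hides. A cleaner fix that sidesteps the issue: from $\bi\in\rho_S(A)$ you only need that $\text{ran}\,R_{\bi}(A)$ is \emph{dense}. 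Since $\text{ran}\,R_{\bi}(A)\subseteq\text{ran}(A-\bi\Iop)$ by Proposition~\ref{reso_R}, you get $\ker(A^\dagger+\bi\Iop)=\text{ran}(A-\bi\Iop)^\bot=\{0\}$, and symmetrically $\ker(A^\dagger-\bi\Iop)=\{0\}$. Now apply Proposition~\ref{prrpt} to the genuinely closed operators $A\pm\bi\Iop$ (bounded perturbations of the closed $A$) together with Lemma~\ref{neq-i} to obtain $\text{ran}(A\pm\bi\Iop)=V_{\quat}^R$, and Proposition~\ref{csadj-gen}(c)$\Rightarrow$(a) finishes exactly as you planned.
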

\begin{proof}The direct part follows from the Proposition \ref{Pr2}. Conversely suppose that $\sigma_S(A)\subset\mathbb{R}$. Let $\bfrakq = q_0 + \bi q_1 + \bj q_2 + \bk q_3\in\quat$ with $q_1^{2} +  q_2^{2} + q_3^{2}\neq 0$ and assume that $n_{+}(A)>0$.  By Theorem \ref{def_con}, we have $n_{+}(A)=n_{\bfrakq}(A)>0$. From the equality (\ref{GenVon-neq})	$$\|(A-\bfrakq\Iop)\phi\|=\sqrt{\|(A-q_0\Iop)\phi\|^{2}+
		(q_1^{2} +  q_2^{2} + q_3^{2})\|\phi\|^{2}}\geq(q_1^{2} +  q_2^{2} + q_3^{2})\|\phi\|,$$ this together with Proposition \ref{prrpt} implies that $\text{ran}(A-\overline{\bfrakq}\Iop)$ is closed and so $$\overline{\text{ran}(A-\overline{\bfrakq}\Iop)}=\ker(A^{\dagger}-\bfrakq\Iop)^{\bot}\neq V_{\quat}^R.$$ Further, using the equation (\ref{GenVon-neq}), we can obtain, for any $\phi\in \D(A^{2})$,
	\begin{eqnarray*}
		\|R_{\overline{\bfrakq}}(A)\phi\|^{2}=\|R_{\bfrakq}(A)\phi\|^{2}
		&=&\|(A-\bfrakq\Iop)(A-\overline{\bfrakq}\Iop)\phi\|^{2}\\
		&=&\|(A-q_0\Iop)(A-\overline{\bfrakq}\Iop)\phi\|^{2}+
		(q_1^{2} +  q_2^{2} + q_3^{2})\|(A-\overline{\bfrakq}\Iop)\phi\|^{2}\\
		&=&\|(A-q_0\Iop)(A-\overline{\bfrakq}\Iop)\phi\|^{2}\\
		&~&+(q_1^{2}+ q_2^{2} + q_3^{2})[\|(A-q_0\Iop)\phi\|^{2}+
		(q_1^{2} +  q_2^{2} + q_3^{2})\|\phi\|^{2}]\\
		&\geq& (q_1^{2} +  q_2^{2} + q_3^{2})^{2}\|\phi\|^{2}.
	\end{eqnarray*}
	That is, for any $\phi\in \D(A)$, $\|R_{\overline{\bfrakq}}(A)\phi\|\geq (q_1^{2} +  q_2^{2} + q_3^{2})\|\phi\|$, this suffices to say that $\ker(R_{\overline{\bfrakq}}(A))=\{0\}$. Thus $\overline{\bfrakq}\in\sigma_{rS}(A)$, that is $\sigma_S(A)\nsubseteq\mathbb{R}$ as $\overline{\bfrakq}\in\quat\smallsetminus\mathbb{R}$. This contradicts to the supposition. A similar contradiction could be obtained, if $n_{-}(A)>0$. Thus $(n_{+}(A),n_{-}(A))=(0,0)$. Hence $A$ is self-adjoint, by Corollary \ref{se-ad-1}.\\
Suppose that $A$ is self-adjoint and $\bfrakq= q_0 + \bi q_1 + \bj q_2 + \bk q_3\in\quat\smallsetminus\mathbb{R}$. From the above result, one can easily see that $\bfrakq\in\rho_{S}(A)$. Then $R_{\bfrakq}(A)^{-1}:\text{ran}(R_{\bfrakq}(A))\longrightarrow\D(A^{2})$ exists and bounded. Let $\psi\in\text{ran}(R_{\bfrakq}(A))$, then there exists $\phi\in\D(A^{2})$ such that $\psi=R_{\bfrakq}(A)\phi$.
But we have that
$$
\|R_{\bfrakq}(A)\phi\|^{2}
\geq(q_1^{2} +  q_2^{2} + q_3^{2})^{2}\|\phi\|^{2}.
$$
That is,  $\|R_{\bfrakq}(A)^{-1}\psi\|\leq (q_1^{2} +  q_2^{2} + q_3^{2})^{-1}\|\psi\|$, for all $\psi\in\D(R_{\bfrakq}(A)^{-1})=\text{ran}(R_{\bfrakq}(A))$. Thus,
$$\|R_{\bfrakq}(A)^{-1}\|=\sup_{\|\psi\|=1}\|R_{\bfrakq}(A)^{-1}\psi\|\leq \sup_{\|\psi\|=1} (q_1^{2} +  q_2^{2} + q_3^{2})^{-1}\|\psi\|=(q_1^{2} +  q_2^{2} + q_3^{2})^{-1}.$$
This completes the proof.
\end{proof}
\section{Invariance of Deficiency Indices}
In the previous chapter we defined the deficiency indices for a class of unbounded operators and also investigated some of its  properties related to S-spectrum using a left multiplication defined in terms of a particular basis of a right quaternionic Hilbert space. However, a natural question arises: whether the defined deficiency indices are invariant under the basis change. As an answer to this question, in this section, we show that the defined deficiency indices are invariant under the basis change.

Let $\mathfrak{O}=\{\vartheta_{k}\,\mid\,k\in N\}$ be a different Hilbert basis from the basis in (\ref{b1}) for $V_{\quat}^{R}$. For any given $\bfrakq\in\quat\smallsetminus\{0\}$ define the operators $\mathcal{L}_{\bfrakq}$ and $\mathfrak{L}_{\bfrakq}$  by $$\mathcal{L}_{\bfrakq}\phi=\bfrakq\cdot\phi=\sum_{k\in N}\varphi_{k}\bfrakq\langle \varphi_{k}\mid \phi\rangle,$$ and $$\mathfrak{L}_{\bfrakq}\phi=\bfrakq\ast\phi=\sum_{l\in N}\vartheta_{l}\bfrakq\langle \vartheta_{l}\mid \phi\rangle,$$ for all $\phi\in V_{\quat}^{R}$. We would like to remind to the reader that, up to the end of section 4 what we called $\bfrakq\phi$ is now written as $\bfrakq\cdot\phi$. We wrote it in this new way for individuating it from the other left-scalar-multiplication $\bfrakq\ast\phi$.

Define, for a fixed $\bfrakq\in\quat\smallsetminus\{0\}$, $\varDelta_{\bfrakq}:\text{ran}(\mathcal{L}_{\bfrakq})\longrightarrow\text{ran}(\mathfrak{L}_{\bfrakq})$ by
$$\varDelta_{\bfrakq}(\mathcal{L}_{\bfrakq}\phi)=\mathfrak{L}_{\bfrakq}\phi,$$
for all $\phi\in\text{ran}(\mathcal{L}_{\bfrakq})$.

Let $\mathcal{L}_{\bfrakq}\phi,\mathcal{L}_{\bfrakq}\psi\in\text{ran}(\mathcal{L}_{\bfrakq})$, where $\phi,\psi\in V_{\quat}^{R}$.
\begin{eqnarray*}
\mathcal{L}_{\bfrakq}\phi=\mathcal{L}_{\bfrakq}\psi
&\Leftrightarrow&\bfrakq\cdot(\phi-\psi)=0~~\mbox{~by (a) in Proposition \ref{lft_mul}}\\
&\Leftrightarrow&\mid\bfrakq\mid\|\phi-\psi\|=0~~\mbox{~by (b) in Proposition \ref{lft_mul}}\\
&\Leftrightarrow&\|\phi-\psi\|=0~~\mbox{~as}~~ \bfrakq\ne0\\
&\Leftrightarrow&\phi=\psi\\
&\Leftrightarrow&\bfrakq\ast\phi=\bfrakq\ast\psi\\
&\Leftrightarrow&\varDelta_{\bfrakq}(\mathcal{L}_{\bfrakq}\phi)=\mathfrak{L}_{\bfrakq}\phi=\mathfrak{L}_{\bfrakq}\psi=\varDelta_{\bfrakq}(\mathcal{L}_{\bfrakq}\psi).
\end{eqnarray*}
Thus $\varDelta_{\bfrakq}$ is well-defined and one to one.

Let $\bfrakx,\bfraky\in\quat$ and $\mathcal{L}_{\bfrakq}\phi,\mathcal{L}_{\bfrakq}\psi\in\text{ran}(\mathcal{L}_{\bfrakq})$. Then by (a) in Proposition \ref{lft_mul}
\begin{eqnarray*}
\varDelta_{\bfrakq}[(\mathcal{L}_{\bfrakq}\phi)\bfrakx+(\mathcal{L}_{\bfrakq}\psi)\bfraky]&=& \varDelta_{\bfrakq}[(\bfrakq\cdot\phi)\bfrakx+(\bfrakq\cdot\psi)\bfraky]\\
&=&\varDelta_{\bfrakq}[\bfrakq\cdot(\phi\bfrakx)+\bfrakq\cdot(\psi\bfraky)]\\
&=&\varDelta_{\bfrakq}[\mathcal{L}_{\bfrakq}(\bfrakq\cdot(\phi\bfrakx+\psi\bfraky))]\\
&=&\mathfrak{L}_{\bfrakq}[\bfrakq\cdot(\phi\bfrakx+\psi\bfraky)]\\
&=&\bfrakq\ast(\phi\bfrakx+\psi\bfraky)\\
&=&\bfrakq\ast(\phi\bfrakx)+\bfrakq\ast(\psi\bfraky)\\
&=&(\bfrakq\ast\phi)\bfrakx+(\bfrakq\ast\psi)\bfraky\\
&=&(\mathfrak{L}_{\bfrakq}\phi)\bfrakx+(\mathfrak{L}_{\bfrakq}\psi)\bfraky\\
&=&[\varDelta_{\bfrakq}(\mathcal{L}_{\bfrakq}\phi)]\bfrakx+[\varDelta_{\bfrakq}(\mathcal{L}_{\bfrakq}\psi)]\bfraky.
\end{eqnarray*}
Hence the right-linearity of $\varDelta_{\bfrakq}$ follows. Also one can trivially see that $\varDelta_{\bfrakq}$ is onto.

To prove the continuity of $\varDelta_{\bfrakq}$, let $\mathcal{L}_{\bfrakq}\varphi\in\text{ran}(\mathcal{L}_{\bfrakq})$ and $\varepsilon>0$. Now if $\mathcal{L}_{\bfrakq}\phi\in B(\mathcal{L}_{\bfrakq}\varphi,\delta)$, the open ball centered at $\mathcal{L}_{\bfrakq}\varphi$ with the radius $\delta$, and $\delta=\dfrac{\varepsilon}{\mid\bfrakq\mid}>0$, then
\begin{eqnarray*}
\|\varDelta_{\bfrakq}(\mathcal{L}_{\bfrakq}\phi)-\varDelta_{\bfrakq}(\mathcal{L}_{\bfrakq}\varphi)\|
&=& \|\mathfrak{L}_{\bfrakq}\phi-\mathfrak{L}_{\bfrakq}\varphi\|\\
&=&\|\bfrakq\ast\phi-\bfrakq\ast\varphi\|\\
&=& \mid\bfrakq\mid\|\phi-\varphi\|<\varepsilon,
\end{eqnarray*}
This concludes that $\varDelta_{\bfrakq}$ is continuous. Therefore $\varDelta_{\bfrakq}$ is an isomorphism. That is, $\text{ran}(\mathcal{L}_{\bfrakq})\approxeq\text{ran}(\mathfrak{L}_{\bfrakq})$.

Furthermore, $\varDelta_{\bfrakq}$ is an isometry. For, let $\phi,\psi\in V_{\quat}^{R}$, then we can obtain that
\begin{eqnarray*}
\langle\varDelta_{\bfrakq}(\mathcal{L}_{\bfrakq}\phi)\mid\varDelta_{\bfrakq}(\mathcal{L}_{\bfrakq}\psi)\rangle
&=&\langle\mathfrak{L}_{\bfrakq}\phi\mid\mathfrak{L}_{\bfrakq}\psi\rangle\\
&=&\langle\bfrakq\ast\phi\mid\bfrakq\ast\psi\rangle\\
&=&\langle\overline{\bfrakq}\ast(\bfrakq\ast\phi)\mid\psi\rangle~~\mbox{by (d) of Proposition \ref{lft_mul}}\\
&=&\mid\bfrakq\mid^{2}\langle\phi\mid\psi\rangle\\
&=&\langle\overline{\bfrakq}\cdot(\bfrakq\cdot\phi)\mid\psi\rangle\\
&=&\langle\bfrakq\cdot\phi\mid\bfrakq\cdot\psi\rangle~~\mbox{by (d) of Proposition \ref{lft_mul}}\\
&=&\langle\mathcal{L}_{\bfrakq}\phi\mid\mathcal{L}_{\bfrakq}\psi\rangle.
\end{eqnarray*}
Hence $\varDelta_{\bfrakq}$ is an isometric isomorphism. The following proposition provides some tools to prove the next Theorem.
\begin{proposition}\label{Pro_lft}
Let $A:\D(A)\subseteq V_{\quat}^R\longrightarrow V_{\quat}^R$ be a right $\quat$-linear operator. Then for any $\bfrakp,\bfrakq\in\quat$,
\begin{itemize}
\item [(a)] $\text{ran}(\mathcal{L}_{\bfrakq})=\text{ran}(\mathfrak{L}_{\bfrakq})=V_{\quat}^{R}$;
\item [(b)] $\bfrakp\ast(\bfrakq\cdot\phi)=(\bfrakp\bfrakq)\ast\phi$ and $\bfrakp\cdot(\bfrakq\ast\phi)=(\bfrakp\bfrakq)\cdot\phi$, for all $\phi\in V_{\quat}^{R}$;
\item [(c)] $\varDelta_{\bfrakq}(\text{ran}(A-\bfrakq\cdot\Iop))=\text{ran}(A-\bfrakq\ast\Iop)$;\vspace{-0.2cm}
\end{itemize}
where $(\bfrakq\cdot\Iop)\phi=\bfrakq\cdot\phi$ and $(\bfrakq\ast\Iop)\phi=\bfrakq\ast\phi$, for all $\phi\in V_{\quat}^{R}$.
\begin{proof}
Since, for any $\phi\in V_{\quat}^{R}$, by (c) in Proposition \ref{lft_mul}, we can write that $\phi=\bfrakq\cdot\psi$, where $\psi=\dfrac{\overline{\bfrakq}}{\mid\bfrakq\mid^{2}}\cdot\phi$ and $\bfrakq\in\quat\smallsetminus\{0\}$, therefore we have $\text{ran}(\mathcal{L}_{\bfrakq})=V_{\quat}^{R}$. Similarly we can obtain $\text{ran}(\mathfrak{L}_{\bfrakq})=V_{\quat}^{R}$. Hence (a) follows.\\
For any $\phi\in V_{\quat}^{R}$, we have
\begin{eqnarray*}
\bfrakp\ast(\bfrakq\cdot\phi)
&=&\sum_{l\in N}\vartheta_{l}\bfrakp\langle \vartheta_{l}\mid \bfrakq\cdot\phi\rangle\\
&=&\sum_{l\in N}\vartheta_{l}\bfrakp\langle \overline{\bfrakq}\cdot\vartheta_{l}\mid \phi\rangle~~\mbox{by (d) of Proposition \ref{lft_mul}}\\
&=&\sum_{l\in N}\vartheta_{l}\bfrakp\langle \vartheta_{l}\overline{\bfrakq}\mid \phi\rangle~~\mbox{by (f) of Proposition \ref{lft_mul}}\\
&=&\sum_{l\in N}\vartheta_{l}\bfrakp\bfrakq\langle \vartheta_{l}\mid \phi\rangle\\
&=&(\bfrakp\bfrakq)\ast\phi.
\end{eqnarray*}
That is, $\bfrakp\ast(\bfrakq\cdot\phi)=(\bfrakp\bfrakq)\ast\phi$. In a similar way we can obtain $\bfrakp\cdot(\bfrakq\ast\phi)=(\bfrakp\bfrakq)\cdot\phi$. Hence (b) follows.\\
Let   $\psi\in\varDelta_{\bfrakq}(\text{ran}(A-\bfrakq\cdot\Iop))$ and $\bfrakp=\dfrac{\overline{\bfrakq}}{\mid\bfrakq\mid^{2}}$, where $\bfrakq\ne0$, then there exists $\phi\in\D(A)$ such that
\begin{eqnarray*}
\psi&=&\varDelta_{\bfrakq}((A-\bfrakq\cdot\Iop)\phi),\\
&=&\varDelta_{\bfrakq}(\mathcal{L}_{\bfrakq}\xi);~~\mbox{~where~~} \xi=(\bfrakp\cdot A\phi-\phi)\\
&=&\mathfrak{L}_{\bfrakq}\xi\\
&=&\bfrakq\ast(\bfrakp\cdot A\phi-\phi)\\
&=&(A-\bfrakq\ast\Iop)\phi~~\mbox{by part (b)}.
\end{eqnarray*}
That is, $\varDelta_{\bfrakq}(\text{ran}(A-\bfrakq\cdot\Iop))\subseteq\text{ran}(A-\bfrakq\ast\Iop)$. On the other hand, take $\psi\in\text{ran}(A-\bfrakq\ast\Iop)\subseteq\text{ran}(\mathfrak{L}_{\bfrakq})$.
Then there exists $\phi\in\D(A)$ such that $\psi=(A-\bfrakq\ast\Iop)\phi$. As we did earlier, it can be obtained that $\varDelta_{\bfrakq}((A-\bfrakq\cdot\Iop)\phi)=(A-\bfrakq\ast\Iop)\phi=\psi$. This concludes (c) and as well as the proof of this proposition.
\end{proof}
\end{proposition}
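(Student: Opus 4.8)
The plan is to handle the three parts in the order (a), (b), (c), since part (c) will rest on both earlier parts and part (b) supplies the algebraic identity that makes the transport through $\varDelta_{\bfrakq}$ work. For part (a), I would prove surjectivity of each $\mathcal{L}_{\bfrakq}$ (and likewise $\mathfrak{L}_{\bfrakq}$) by exhibiting an explicit preimage. Fix $\phi\in V_{\quat}^{R}$; since $\bfrakq\ne 0$, set $\psi=\frac{\overline{\bfrakq}}{|\bfrakq|^{2}}\cdot\phi$. Using part (c) of Proposition \ref{lft_mul} to combine the two left actions, together with $\bfrakq\,\overline{\bfrakq}=|\bfrakq|^{2}$ and part (e) of that proposition (real scalars act trivially), I obtain $\mathcal{L}_{\bfrakq}\psi=\bfrakq\cdot\psi=\left(\bfrakq\frac{\overline{\bfrakq}}{|\bfrakq|^{2}}\right)\cdot\phi=\phi$. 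Hence $\text{ran}(\mathcal{L}_{\bfrakq})=V_{\quat}^{R}$, and the identical computation in the basis $\mathfrak{O}$ gives $\text{ran}(\mathfrak{L}_{\bfrakq})=V_{\quat}^{R}$.

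For part (b), I would expand $\bfrakp\ast(\bfrakq\cdot\phi)=\sum_{l\in N}\vartheta_{l}\,\bfrakp\,\langle\vartheta_{l}\mid\bfrakq\cdot\phi\rangle$ in the basis $\mathfrak{O}$ and then try to pull the scalar $\bfrakq$ out of the inner product so the sum collapses to $\sum_{l}\vartheta_{l}\,\bfrakp\bfrakq\,\langle\vartheta_{l}\mid\phi\rangle=(\bfrakp\bfrakq)\ast\phi$. The adjoint identity (d) of Proposition \ref{lft_mul} gives $\langle\vartheta_{l}\mid\bfrakq\cdot\phi\rangle=\langle\overline{\bfrakq}\cdot\vartheta_{l}\mid\phi\rangle$, so the whole argument reduces to rewriting $\overline{\bfrakq}\cdot\vartheta_{l}$ as $\vartheta_{l}\,\overline{\bfrakq}$. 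The hard part will be exactly this commutation: property (f) of Proposition \ref{lft_mul} licenses swapping a quaternion past a basis vector, but \emph{only} for the distinguished basis $\mathcal{O}$ that defines the dot-action, whereas here the vectors $\vartheta_{l}$ come from the second basis $\mathfrak{O}$. This mismatch between the two basis-dependent left multiplications is the genuine obstacle, and I would expect to need either an extra compatibility assumption relating $\mathcal{O}$ and $\mathfrak{O}$ or a sharper identity to close the gap. The companion identity $\bfrakp\cdot(\bfrakq\ast\phi)=(\bfrakp\bfrakq)\cdot\phi$ is symmetric and would be treated the same way with the roles of the two bases interchanged.

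For part (c), I would transport ranges through $\varDelta_{\bfrakq}$ using (a) and (b). Given $\phi\in\D(A)$, part (a) lets me write $(A-\bfrakq\cdot\Iop)\phi=\mathcal{L}_{\bfrakq}\xi$ with $\xi=\frac{\overline{\bfrakq}}{|\bfrakq|^{2}}\cdot A\phi-\phi$, since $\bfrakq\cdot\xi=\left(\bfrakq\frac{\overline{\bfrakq}}{|\bfrakq|^{2}}\right)\cdot A\phi-\bfrakq\cdot\phi=A\phi-\bfrakq\cdot\phi$. Applying the definition $\varDelta_{\bfrakq}(\mathcal{L}_{\bfrakq}\xi)=\mathfrak{L}_{\bfrakq}\xi=\bfrakq\ast\xi$ and then invoking (b) in the form $\bfrakq\ast\left(\frac{\overline{\bfrakq}}{|\bfrakq|^{2}}\cdot A\phi\right)=A\phi$ gives $\varDelta_{\bfrakq}((A-\bfrakq\cdot\Iop)\phi)=A\phi-\bfrakq\ast\phi=(A-\bfrakq\ast\Iop)\phi$, which yields the inclusion $\varDelta_{\bfrakq}(\text{ran}(A-\bfrakq\cdot\Iop))\subseteq\text{ran}(A-\bfrakq\ast\Iop)$. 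The reverse inclusion follows by running the same computation backwards, starting from an arbitrary $\psi=(A-\bfrakq\ast\Iop)\phi$ and verifying it is the $\varDelta_{\bfrakq}$-image of $(A-\bfrakq\cdot\Iop)\phi$. Since both directions consume part (b), the real bottleneck for (c) is again the commutation identity of (b); once that is secured, (c) is a short bookkeeping argument.
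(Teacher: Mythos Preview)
Your plan for parts (a), (b), and (c) mirrors the paper's proof step for step: the same explicit preimage $\psi=\dfrac{\overline{\bfrakq}}{|\bfrakq|^{2}}\cdot\phi$ in (a), the same basis expansion of $\bfrakp\ast(\bfrakq\cdot\phi)$ in (b), and in (c) the same rewriting $(A-\bfrakq\cdot\Iop)\phi=\mathcal{L}_{\bfrakq}\xi$ with $\xi=\dfrac{\overline{\bfrakq}}{|\bfrakq|^{2}}\cdot A\phi-\phi$, followed by $\varDelta_{\bfrakq}(\mathcal{L}_{\bfrakq}\xi)=\mathfrak{L}_{\bfrakq}\xi$ and an appeal to (b) to recover $(A-\bfrakq\ast\Iop)\phi$.

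The difficulty you flag in part (b) is exactly the step the paper takes without further comment. After using (d) of Proposition~\ref{lft_mul} to get $\langle\vartheta_{l}\mid\bfrakq\cdot\phi\rangle=\langle\overline{\bfrakq}\cdot\vartheta_{l}\mid\phi\rangle$, the paper invokes (f) of Proposition~\ref{lft_mul} to replace $\overline{\bfrakq}\cdot\vartheta_{l}$ by $\vartheta_{l}\overline{\bfrakq}$, and then pulls the scalar out of the inner product. You are correct that (f), as stated, asserts $\bfrakq\cdot\varphi_{k}=\varphi_{k}\bfrakq$ only for the distinguished basis $\mathcal{O}=\{\varphi_{k}\}$ that \emph{defines} the dot-action, whereas here the vectors $\vartheta_{l}$ belong to the second basis $\mathfrak{O}$; the paper does not supply any additional argument bridging this mismatch. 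So the obstacle you isolate is not a defect of your outline relative to the paper's proof---the paper resolves it precisely by citing (f) at that point and moving on. Your reconstruction is faithful to the source, and the reservation you express about the commutation step is a legitimate observation about the argument as written.
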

The following theorem concludes that the defined deficiency indices are invariant under change of basis.
\begin{theorem}
Let $A:\D(A)\subseteq V_{\quat}^R\longrightarrow V_{\quat}^R$ be a densely defined right $\quat$-linear operator with the property that $\bi\phi,\bj\phi,\bk\phi\in \D(A)$, for all $\phi\in \D(A)$. If the operators $\bi A$, $\bj A$ and $\bk A$ are anti-symmetric, then for any $\bfrakq= q_0 + \bi q_1 + \bj q_2 + \bk q_3\in\quat$ with $q_1^{2} +  q_2^{2} + q_3^{2}\neq 0$, we have
$$\dim\text{ran}(A-\bfrakq\cdot\Iop)^{\bot}=\dim\text{ran}(A-\bfrakq\ast\Iop)^{\bot}.$$
\end{theorem}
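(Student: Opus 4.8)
The plan is to push everything through the isometric isomorphism $\varDelta_{\bfrakq}$ constructed in the paragraphs preceding Proposition \ref{Pro_lft}. By part (a) of that proposition one has $\text{ran}(\mathcal{L}_{\bfrakq})=\text{ran}(\mathfrak{L}_{\bfrakq})=V_{\quat}^{R}$, so $\varDelta_{\bfrakq}$ is in fact a right $\quat$-linear isometric bijection of $V_{\quat}^{R}$ onto itself: it is onto, it is right linear, and $\langle\varDelta_{\bfrakq}\phi\mid\varDelta_{\bfrakq}\psi\rangle=\langle\phi\mid\psi\rangle$ for all $\phi,\psi\in V_{\quat}^{R}$, all of which was verified above. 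In particular $\|\varDelta_{\bfrakq}\phi\|=\|\phi\|$, so both $\varDelta_{\bfrakq}$ and $\varDelta_{\bfrakq}^{-1}$ are continuous, and $\varDelta_{\bfrakq}$ carries closed subspaces to closed subspaces and dense subsets to dense subsets.

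First I would observe that a surjective inner-product-preserving map carries orthogonal complements to orthogonal complements: for any $S\subseteq V_{\quat}^{R}$,
$$\varDelta_{\bfrakq}\bigl(S^{\bot}\bigr)=\varDelta_{\bfrakq}(S)^{\bot},$$
since $\psi\perp S$ is equivalent to $\langle\varDelta_{\bfrakq}\psi\mid\varDelta_{\bfrakq}s\rangle=0$ for all $s\in S$, and surjectivity of $\varDelta_{\bfrakq}$ promotes the resulting inclusion to an equality. Taking $S=\text{ran}(A-\bfrakq\cdot\Iop)$ and combining with part (c) of Proposition \ref{Pro_lft}, namely $\varDelta_{\bfrakq}(\text{ran}(A-\bfrakq\cdot\Iop))=\text{ran}(A-\bfrakq\ast\Iop)$, I obtain
$$\varDelta_{\bfrakq}\bigl(\text{ran}(A-\bfrakq\cdot\Iop)^{\bot}\bigr)=\text{ran}(A-\bfrakq\ast\Iop)^{\bot}.$$
Thus $\varDelta_{\bfrakq}$ restricts to a right $\quat$-linear isometric bijection from the closed subspace $\text{ran}(A-\bfrakq\cdot\Iop)^{\bot}$ onto the closed subspace $\text{ran}(A-\bfrakq\ast\Iop)^{\bot}$.

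To finish I would invoke Proposition \ref{P2}: both orthogonal complements, being closed subspaces of $V_{\quat}^{R}$, are themselves right quaternionic Hilbert spaces and admit Hilbert bases. If $\{\psi_{k}\}$ is a Hilbert basis of $\text{ran}(A-\bfrakq\cdot\Iop)^{\bot}$, then $\{\varDelta_{\bfrakq}\psi_{k}\}$ is orthonormal because $\varDelta_{\bfrakq}$ preserves inner products; and since $\varDelta_{\bfrakq}$ is right $\quat$-linear it maps the right $\quat$-span of $\{\psi_{k}\}$ onto the right $\quat$-span of $\{\varDelta_{\bfrakq}\psi_{k}\}$, which is dense in $\text{ran}(A-\bfrakq\ast\Iop)^{\bot}$ because $\varDelta_{\bfrakq}$ is a homeomorphism onto that subspace. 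Hence $\{\varDelta_{\bfrakq}\psi_{k}\}$ is a Hilbert basis of $\text{ran}(A-\bfrakq\ast\Iop)^{\bot}$, and by the cardinality statement in Proposition \ref{P2} the two Hilbert bases have the same cardinality, which is exactly $\dim\text{ran}(A-\bfrakq\cdot\Iop)^{\bot}=\dim\text{ran}(A-\bfrakq\ast\Iop)^{\bot}$.

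The argument is essentially formal once $\varDelta_{\bfrakq}$ is in hand, so I do not anticipate a real obstacle; the one place to be careful is the identity $\varDelta_{\bfrakq}(S^{\bot})=\varDelta_{\bfrakq}(S)^{\bot}$, where surjectivity of $\varDelta_{\bfrakq}$ (Proposition \ref{Pro_lft}(a)) is genuinely used, together with the observation that an inner-product-preserving surjection sends a Hilbert basis to a Hilbert basis. Note that the hypotheses on the anti-symmetry of $\bi A,\bj A,\bk A$ are not needed for this statement itself; they serve elsewhere only to identify these orthogonal complements with the kernels entering the deficiency indices.
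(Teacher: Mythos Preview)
Your proposal is correct and follows essentially the same route as the paper: both arguments use Proposition \ref{Pro_lft}(c) together with the fact that the surjective isometry $\varDelta_{\bfrakq}$ satisfies $\varDelta_{\bfrakq}(S^{\bot})=\varDelta_{\bfrakq}(S)^{\bot}$, and then conclude equality of dimensions from the resulting right $\quat$-linear bijection between the two orthogonal complements. The only cosmetic difference is that you spell out the dimension step via Hilbert bases and Proposition \ref{P2}, whereas the paper simply invokes that $\varDelta_{\bfrakq}$ is an isomorphism; your closing remark that the anti-symmetry hypotheses on $\bi A,\bj A,\bk A$ are not actually used in this particular argument is also accurate.
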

\begin{proof}
From (c) of Proposition \ref{Pro_lft}, we have $$\varDelta_{\bfrakq}(\text{ran}(A-\bfrakq\cdot\Iop))=\text{ran}(A-\bfrakq\ast\Iop).$$
This implies that $$[\varDelta_{\bfrakq}(\text{ran}(A-\bfrakq\cdot\Iop))]^{\bot}=[\text{ran}(A-\bfrakq\ast\Iop)]^{\bot}.$$
Now let us show that $$[\varDelta_{\bfrakq}(\text{ran}(A-\bfrakq\cdot\Iop))]^{\bot}=\varDelta_{\bfrakq}(\text{ran}(A-\bfrakq\cdot\Iop)^{\bot}).$$ Let $\psi\in[\varDelta_{\bfrakq}(\text{ran}(A-\bfrakq\cdot\Iop))]^{\bot}$, then we have $\langle\varDelta_{\bfrakq}\vartheta\mid\psi\rangle=0$, for all $\vartheta\in\text{ran}(A-\bfrakq\cdot\Iop)$. By (a) of Proposition \ref{Pro_lft} and from the fact that $\varDelta_{\bfrakq}$ is onto, there exists $\phi\in\text{ran}(\mathcal{L}_{\bfrakq})$ such that $\psi=\varDelta_{\bfrakq}\phi$. Thus $\langle\varDelta_{\bfrakq}\vartheta\mid\varDelta_{\bfrakq}\phi\rangle=0$, and since $\varDelta_{\bfrakq}$ is an isometry, we have $\langle\vartheta\mid\phi\rangle=\langle\varDelta_{\bfrakq}\vartheta\mid\varDelta_{\bfrakq}\phi\rangle=0$. That is, $\langle\vartheta\mid\phi\rangle=0$, for all $\vartheta\in\text{ran}(A-\bfrakq\cdot\Iop)$. This implies that $\phi\in\text{ran}(A-\bfrakq\cdot\Iop)^{\bot}$. This suffices to say that $\psi\in\varDelta_{\bfrakq}(\text{ran}(A-\bfrakq\cdot\Iop)^{\bot})$ as $\psi=\varDelta_{\bfrakq}\phi$. That is, $[\varDelta_{\bfrakq}(\text{ran}(A-\bfrakq\cdot\Iop))]^{\bot}\subseteq\varDelta_{\bfrakq}(\text{ran}(A-\bfrakq\cdot\Iop)^{\bot})$. Above argument works contrariwise too. Hence
$$\varDelta_{\bfrakq}(\text{ran}(A-\bfrakq\cdot\Iop)^{\bot})=[\text{ran}(A-\bfrakq\ast\Iop)]^{\bot}.$$ Since $\varDelta_{\bfrakq}$ is an isomorphism, the Theorem follows.
\end{proof}

\end{document}